\newcommand{\be}{\begin{eqnarray}}
\newcommand{\ee}{\end{eqnarray}}
\newcommand{\eeq}{\end{equation}}
\newcommand{\beq}{\begin{equation}}
\allowdisplaybreaks \numberwithin{equation}{section}
\DeclareSymbolFont{AMSa}{U}{msa}{m}{n}
\DeclareSymbolFont{AMSb}{U}{msb}{m}{n}
\DeclareMathSymbol{\fieldR}{\mathalpha}{AMSb}{"52}
\renewcommand{\Im}{\imag}
\DeclareMathOperator{\imag}{Im}
\newcommand{\CA}{{\cal A}}
\newcommand{\CB}{{\cal B}}
\newcommand{\CH}{\mathcal{H}}
\newcommand{\Hh}{\mathcal{H}}
\newcommand{\CI}{{\cal I}}
\newcommand{\CL}{{\cal L}}
\newcommand{\CN}{\mathcal{N}}
\newcommand{\CM}{\mathcal{M}}
\newcommand{\calC}{\mathcal{C}}
\newcommand{\calR}{\mathcal{R}}
\newcommand{\CT}{\mathcal{T}}
\newcommand{\dual}{\star}
\newcommand{\nL}{\mathsf{L}}
\DeclareMathOperator{\Tr}{Tr}
\newcommand{\NN}{\mathbb{N}}
\newcommand{\ZZ}{\mathbb{Z}}
\newcommand{\RR}{\mathbb{R}}
\newcommand{\CC}{\mathbb{C}}
\newcommand{\QQ}{\mathbb{Q}}
\newcommand{\cTop}{\mathsf{Top}}
\newcommand{\one}{\mathsf{1}}
\newcommand{\ii}{x}
\newcommand{\jj}{y}
\newcommand{\kk}{z}
\newcommand{\mi}{\underline{x}}
\newcommand{\mj}{\underline{y}}
\newcommand{\mk}{\underline{z}}
\newcommand{\mone}{\underline{\mathsf{1}}}
\newcommand{\rrangle}{\rangle\!\rangle}
\newcommand{\llangle}{\langle\!\langle}
\newtheorem{theorem}{Theorem}
\newtheorem{claim}[theorem]{Claim}
\newtheorem{conjecture}[theorem]{Conjecture}
\newtheorem{proposition}[theorem]{Proposition}
\DeclareMathOperator{\ch}{ch}
\title{Topological defects in K3 sigma models}
\author[1]{Roberta Angius\thanks{roberta.angius@csic.es}}
\author[2]{Stefano Giaccari\thanks{stefano.giaccari@pd.infn.it}}
\author[3]{Roberto Volpato\thanks{volpato@pd.infn.it}}
{\small \affil[1]{\small  Instituto de F\'{\i}sica Te\'orica
IFT-UAM/CSIC,  C/ Nicol\'as Cabrera 13-15, Campus de Cantoblanco, 28049 Madrid, Spain}
\affil[2,3]{\small Dipartimento di Fisica e Astronomia `Galileo Galilei', Universit\`a di Padova \& INFN, sez. di Padova, Via Marzolo 8, 35131, Padova, Italy}}
\begin{document}

\maketitle
\abstract{
    We consider the topological defect lines commuting with the spectral flow and the $\CN=(4,4)$ superconformal symmetry in two dimensional non-linear sigma models on K3. By studying their fusion with boundary states, we derive a number of general results for the category of such defects. We argue that while for certain K3 models infinitely many simple defects, and even a continuum, can occur, at generic points in the moduli space the category is actually trivial, i.e. it is generated by the identity defect. Furthermore, we show that if a K3 model is at the attractor point for some BPS configuration of D-branes, then all topological defects have integral quantum dimension. We also conjecture that a continuum of topological defects arises if and only if the K3 model is a (possibly generalized) orbifold of a torus model. Finally, we test our general results in a couple of examples, where we provide a partial classification of the topological defects.
}
\pagebreak
\tableofcontents

\section{Introduction}

Defects in a quantum field theory (QFT) or in statistical lattice models and quantum spin systems are broadly defined as  inhomogeneities  localized on submanifolds of positive codimension, and  appear in a variety of physical settings, with the codimension $1$ case playing the distinctive role of interfaces between different theories. A more specific notion is the one of topological defects, which are assumed to be invariant under continuous deformations as long as the deformations do not move the inhomogeneities past other defects or field operator insertions.  Historically, in particular in the context of $2d$ QFT,  topological defect lines (TDLs) have been studied in Conformal Field Theories (CFTs), where conformal invariance provides a very restrictive framework, even more so for Rational CFTs (RCFTs),  because of their connection to Boundary Conformal Field Theories (BCFTs), twisted boundary conditions, and orbifolds \cite{Cardy:1986gw,Fuchs:2002cm, Fuchs:2003id, Fuchs:2004dz, Fuchs:2004xi,Oshikawa:1996dj, Oshikawa:1996ww, Petkova:2000ip, Verlinde:1988sn, Zuber:1986ng}. However most recent developments stem from the realization  that, in QFT with generic dimension $d$, topological defects provide a natural generalization of the notion of group symmetry, with the group elements corresponding to topological defects of codimension $1$ across which the value of field operators jumps by the respective symmetry actions. This picture has naturally led to considering both ``higher-form symmetries'', realized by topological defects of higher codimension ( see e.g. \cite{Gaiotto:2014kfa}), and  ``non-invertible symmetries'', corresponding to topological defects  which cannot be fused with another topological defect to produce the trivial defect.  In two dimensions, the group structure is in the latter case generalized by the one of fusion categories \cite{Bhardwaj:2017xup, Chang_2019, Thorngren:2019iar, Thorngren:2021yso}. The classical example of fusion category symmetry is the Ising one, defined by three simple TDLs: the trivial defect, the $\ZZ_2$ Ising symmetry and the  ``Kramers-Wannier'' duality defect in the Ising model \cite{Frohlich:2009gb, Frohlich:2006ch, Frohlich:2004ef, Kramers:1941kn}, which relates spin correlators in the Ising model at some inverse temperature to disorder/twist correlators at the Kramers-Wannier dual inverse temperature.  
TDLs have been widely studied in RCFTs \cite{Cardy:1986gw,Fuchs:2002cm, Fuchs:2003id, Fuchs:2004dz, Fuchs:2004xi,Oshikawa:1996dj, Oshikawa:1996ww, Petkova:2000ip, Verlinde:1988sn, Zuber:1986ng}, gapped boundaries of $(2+1)D$ topological field theories, and anyon chains, and found to encode nontrivial topological information about a theory, such as constraints on the operator spectrum of CFTs and renormalization group flows in gauge theories \cite{Komargodski:2020mxz}.  Non-invertible topological defects can also be considered in higher-dimensional theories where they also provide relevant information about the dynamics, so that they are by now considered a tool of paramount importance in unraveling the non-perturbative aspects of Quantum Field Theory.

One of the main hurdles in advancing our knowledge is the fairly limited number of theories where the fusion category of defects or at least a subset thereof are known. This is true even in the simple framework of 2d CFTs.
While in rational CFTs some general techniques have been developed, the much broader realm of non rational theories is largely unexplored. In fact, topological interfaces, which in general separate possibly different theories,  have been studied in a very limited number of examples, in particular for the free boson compactified on a circle and on an orbifold thereof  \cite{Bachas:2007td,Chang:2020imq, Fuchs:2007tx} and for $d$-dimensional torus models, where they are assumed to preserve a $\hat{u}(1)^{2d}$ current algebra \cite{Bachas:2012bj}.

In this article, we consider topological defects in two-dimensional superconformal field theories (SCFT) arising as supersymmetric non-linear sigma models with target space a K3 surface. More precisely, we will focus on the defects that preserve the full $\CN=(4,4)$ superconformal symmetry at central charge $c=\tilde c=6$, and that are invariant under the spectral flow transformations that relate the different (NS-NS, NS-R, R-NS, R-R) sectors of the theory. Non-linear sigma models on K3 (or K3 models, for short), provide the simplest examples of Calabi-Yau compactifications in type II string theory. A generic K3 model is not a rational CFT, and it cannot be solved exactly. Nevertheless, due to the large amount of space-time and worldsheet supersymmetries, many general results about these models are known, such as the geometry of the moduli space, the elliptic genus (which is the same for every K3 model), the spectrum of short $\CN=(4,4)$ representations, and even the finite groups of symmetries at each point in the moduli space \cite{Aspinwall:1996mn,Gaberdiel:2011fg,Nahm:1999ps}. 
For these reasons, K3 models represent the ideal framework to understand topological defects in a non-rational CFT, besides the torus models examples. An interesting analysis of the topological defects in some non-rational K3 models, using a different approach, recently appeared in \cite{Cordova:2023qei}; we will comment about the relationship with this article in the conclusions (see point 4 in  section \ref{s:conclusions}).

Because K3 models are not rational with respect to the $\CN=(4,4)$ superconformal algebra, one can expect infinitely many distinct simple defects. In particular, we will see some examples of K3 models where a continuum of simple non-invertible topological defects arise, a phenomenon that
has already been observed in orbifolds of torus models \cite{Chang:2020imq,Fuchs:2007tx,Thorngren:2021yso}. This implies that, strictly speaking, we are putting ourselves outside of the mathematical framework of fusion categories, at least in its most restrictive definitions. Nevertheless, we will assume that the basic properties of fusion categories still hold for the defects we consider. In particular, we will require that the fusion of any two defects is a superposition of finitely many simple defects. We denote by $\cTop_\calC$ the fusion category (in a broad sense) of topological defects in a K3 model $\calC$  preserving the $\CN=(4,4)$ superconformal algebra and the spectral flow. We will only focus on certain properties of this category, in particular on its fusion ring, on the behaviour of defects when moved past local operators, and on the fusion with boundary states. We will mostly ignore the detailed properties of the  fusion matrices.
The main constraints on the topological defects in $\cTop_\calC$ come from considering the fusion with boundary states representing 1/2 BPS D-branes, i.e. preserving half of the space-time supersymmetry of type II superstring compactified on $\calC$. The idea of studying defects in the presence of boundaries is a very natural one, and has been explored in a number of articles \cite{Fuchs:2001qc,Brunner:2007ur,Fredenhagen:2009tn,Kojita:2016jwe,Konechny:2019wff,Konechny:2020jym,Fukusumi:2021zme,Collier:2021ngi,Choi:2023xjw}. The 1/2 BPS D-branes we are interested in are always charged with respect to the $U(1)^{24}$ gauge group of R-R ground fields of the theory. One can argue that fusion with defects in $\cTop_\calC$ preserves the set of such boundary states. This means that each defect $\CL\in \cTop_\calC$ can be associated with a $\ZZ$-linear map (endomorphism) $\nL\in {\rm End}(\Gamma^{4,20}_{R-R})$ on the even unimodular lattice $\Gamma^{4,20}_{R-R}$ of D-brane R-R charges. The endomorphism $\nL$ is further constrained by the requirement that the defect cannot mix R-R ground fields belonging to different representations of the $\CN=(4,4)$ algebra. The map $\cTop_\calC \to {\rm End}(\Gamma^{4,20}_{R-R})$ gives rise to a ring homomorphism from the fusion ring of $\cTop_\calC$ the ring ${\rm End}(\Gamma^{4,20}_{R-R})$ (or rather the subring of endomorphisms satisfying suitable properties), so that properties of the former ring can be deduced by studying the latter. One can argue that the property of a defect $\CL$ to be preserved by some marginal deformation of the model depends only on $\nL$. 

Using this simple idea, we are able to derive several properties of topological defects in generic K3 models (see in particular section \ref{s:topdefsK3}).  We argue that $\cTop_{\calC}$ is trivial (i.e. the only simple defect is the identity) in most K3 models $\calC$, except a subset with null measure in the moduli space (Claim \ref{th:generic}). While the map $\cTop_{\calC} \to {\rm End}(\Gamma^{4,20}_{R-R})$ is not injective, we can show that an endomorphism $\nL$ proportional to the identity can only be associated to a superposition of $n$ copies of the identity defect. Given a point $\calC$ in the moduli space of K3 models, we will spell out some necessary conditions for $\cTop_\calC$ to be an integral category, i.e. such that the quantum dimensions of all topological defects are integral (Claim \ref{th:qdim}). In particular, this condition is satisfied for the points in the moduli space  of K3 models that are attractor points for some 1/2 BPS configuration of D-branes \cite{Andrianopoli:1998qg,Dijkgraaf:1998gf,Ferrara:1995ih,Moore:1998pn}. We also derive some weaker constraints on the quantum dimensions that are valid everywhere in the moduli space. 

While we know some examples of K3 models $\calC$ where $\cTop_\calC$ contains a continuum of topological defects, Claim \ref{th:generic} implies that this cannot be the generic situation. It is natural to ask for a characterization of K3 models where such a continuum exists. In section \ref{s:conjecture}, we conjecture that this only happens for (generalised) orbifolds of torus models.

The main limitations in our approach comes from the fact that the map $\cTop_{\calC} \to {\rm End}(\Gamma^{4,20}_{R-R})$ is, in general, neither injective, nor surjective. In particular, whenever $\cTop_{\calC}$ admits a continuum of defect $\CL_\theta$, parametrised by some real parameter(s) $\theta$, all such defects $\CL_\theta$ are necessarily mapped to the same endomorphism $\nL$. As for surjectivity, while we are able to put some constraints on the endomorphisms $\nL \in {\rm End}(\Gamma^{4,20}_{R-R})$ that arise from a defect $\CL$, we cannot determine precisely what the image of the map $\cTop_{\calC} \to {\rm End}(\Gamma^{4,20}_{R-R})$ is.

\medskip

The article is structured as follows. In section \ref{s:generaldefects}, we review some basic facts about topological defects in two dimensional CFTs, and fix the notation that we will use in the rest of the paper. Section \ref{s:genK3models} is the core of the article: after reviewing the main properties of K3 models, we describe and prove the main results of our work in sections \ref{s:topdefsK3} and \ref{s:Dbranes}. We stress that the proofs are on a physics level of rigour, as they are based on various assumptions about K3 models and boundary states that are not mathematically rigorous. This is why we prefer to call such statements `Claim' rather than `Theorem'.  In section \ref{s:torusOrbs} we focus on K3 models that can be described as torus orbifolds. We show that, in general, they admit a continuum of topological defects in $\cTop$. In section \ref{s:conjecture}, we conjecture that the converse might be true: generalised torus orbifolds are actually the \emph{only} K3 models for which such a continuum exists. In sections \ref{s:Z28M20} and \ref{s:onetosix}, we describe some topological defects in $\cTop_\calC$ in a couple of interesting K3 models. While in none of these two models we were able to determine precisely the category $\cTop_\calC$, the examples are useful both to confirm some of the general arguments of section \ref{s:topdefsK3}, and were used in the proof of some of the claims. Finally, in section \ref{s:conclusions} we describe some avenues for future investigation. Various technical details of our calculations are relegated in the appendices.
 
\section{Generalities on topological defects in 2D CFT}\label{s:generaldefects}
In this section we give a simple and concise review about defects in two dimensional CFTs. We refer to \cite{Carqueville:2023jhb, Chang_2019,Frohlich:2009gb} for more detailed information.\\
Usually, when we talk about a generic CFT, we characterize it by specifying the whole set of \textit{local operators} and their corresponding OPEs.  However, it is well known that in many cases further extended objects associated with non-local operators can also exist in the theory. Such objects encode additional properties of the quantum field theory that are not visible at the level of the spectrum, and they are easily understood in the language of the \textit{defects}.   \\
In a generic QFT defined over a $d$-dimensional spacetime $\mathcal{M}_d$, such extended objects can be described through operators $\hat{D}_a \left( \mathcal{M}^{d-q} \right)$  supported on $(d-q)-$dimensional submanifolds of $\mathcal{M}_d$, with $q<d$. These operators are called \textit{topological} in the sense that small deformations of their support manifold, which do not cross other operators of the theory, do not affect the physical observables.  \\
In the limit where the support manifolds of two distinct defects $D_a$ and $D_b$ overlap, the generalized OPE between the corresponding operators defines a \textit{fusion algebra} among defects of the form:
\begin{equation}
D_a \left( \mathcal{M}^{d-q} \right) \times D_b \left( \mathcal{M}^{d-q} \right) = \sum_c N^c_{ab} D_c \left( \mathcal{M}^{d-q} \right).
\label{fusion_alg}
\end{equation}
\noindent
The simplest example of defects we can encounter in a QFT are the \textit{invertible defects}, which encode information about the standard and higher-form \textit{global symmetries} owned by the theory. More specifically, let $G$ be the group of $p$-form global symmetries in our QFT, then we can associate to each element $g \in G$ a $(d-p-1)-$dimensional topological defect $D_g$ such that the induced fusion algebra \eqref{fusion_alg} satisfies the same group multiplication law as $G$:
\begin{equation}
D_g \left( \mathcal{M}^{d-p-1} \right) \times D_{g'} \left( \mathcal{M}^{d-p-1} \right) =  D_{g''} \left( \mathcal{M}^{d-p-1} \right), \quad \quad \quad g''=gg'.
\end{equation}
The name \textit{invertible} for this class of defects comes from the fact that for each of them, i.e. $D_g$, there exists a second defect $D_{g^{-1}}$ such that their fusion produces the trivial defect $D_e$ associated with the identity operator:
\begin{equation}
D_g \times D_{g^{-1}} = D_{g^{-1}} \times D_g = D_{e}, \quad \quad \quad D_e \mapsto \mathbb{I}.
\end{equation}

\noindent
The above definitions apply to any QFT of generic dimension $d$, where topological defects supported on submanifolds of different codimensions may be present. In the rest of the discussion we will focus our attention on 2D QFT, where topological defects have support on oriented $1-$dimensional manifolds (\textit{lines}) of the 2d spacetime. For this reason, we will refer to them as \textit{Topological Defect Lines} (TDLs) and denote them with the notation $\mathcal{L}$.  \\
In particular, if $G$ is the symmetry group of our CFT, we can associate to each element $g \in G$ an invertible TDL $\mathcal{L}_g$.
\begin{figure}[h!]
\centering
\includegraphics[scale=0.25]{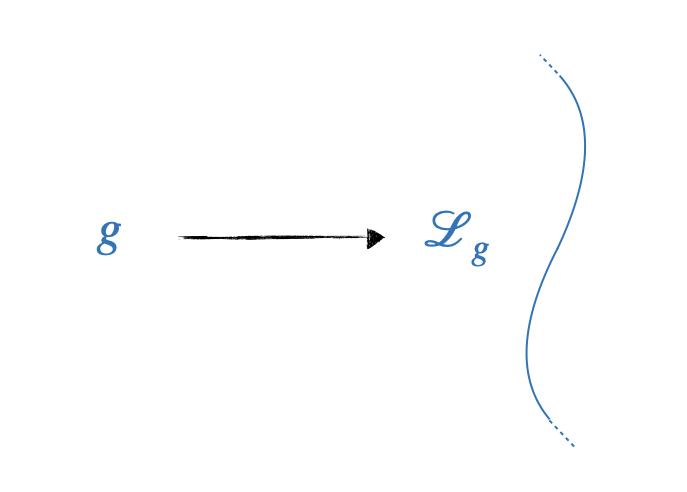}
\caption{\small Invertible TDL in 2d associated to the element $g \in G$.}
\end{figure}

\noindent
 By definition of global symmetry, the elements of the group G define a non-trivial action on the bulk operators:
 \begin{equation}
 g: \quad \mathcal{O}_i (x_i) \quad \mapsto \quad \rho (g) \cdot \mathcal{O}_i (x_i)
 \label{action_g_bulk}
 \end{equation}
 while they leave the correlators invariant:
\begin{equation}
\forall g \in G \quad \longrightarrow \quad \langle \prod_i \mathcal{O}_i (x_i) \rangle = \langle \prod_i \left( \rho(g) \cdot \mathcal{O}_i (x_i) \right) \rangle.
\end{equation}
We can represent the action \eqref{action_g_bulk} in the language of topological defects through the loop contraction of the TDL $\mathcal{L}_g$ encircling the bulk operator $\mathcal{O}_{\psi} (x_i)$ as depicted on the left side of figure \ref{fig2}.  
\begin{figure}[h!]
\centering
\includegraphics[scale=0.2]{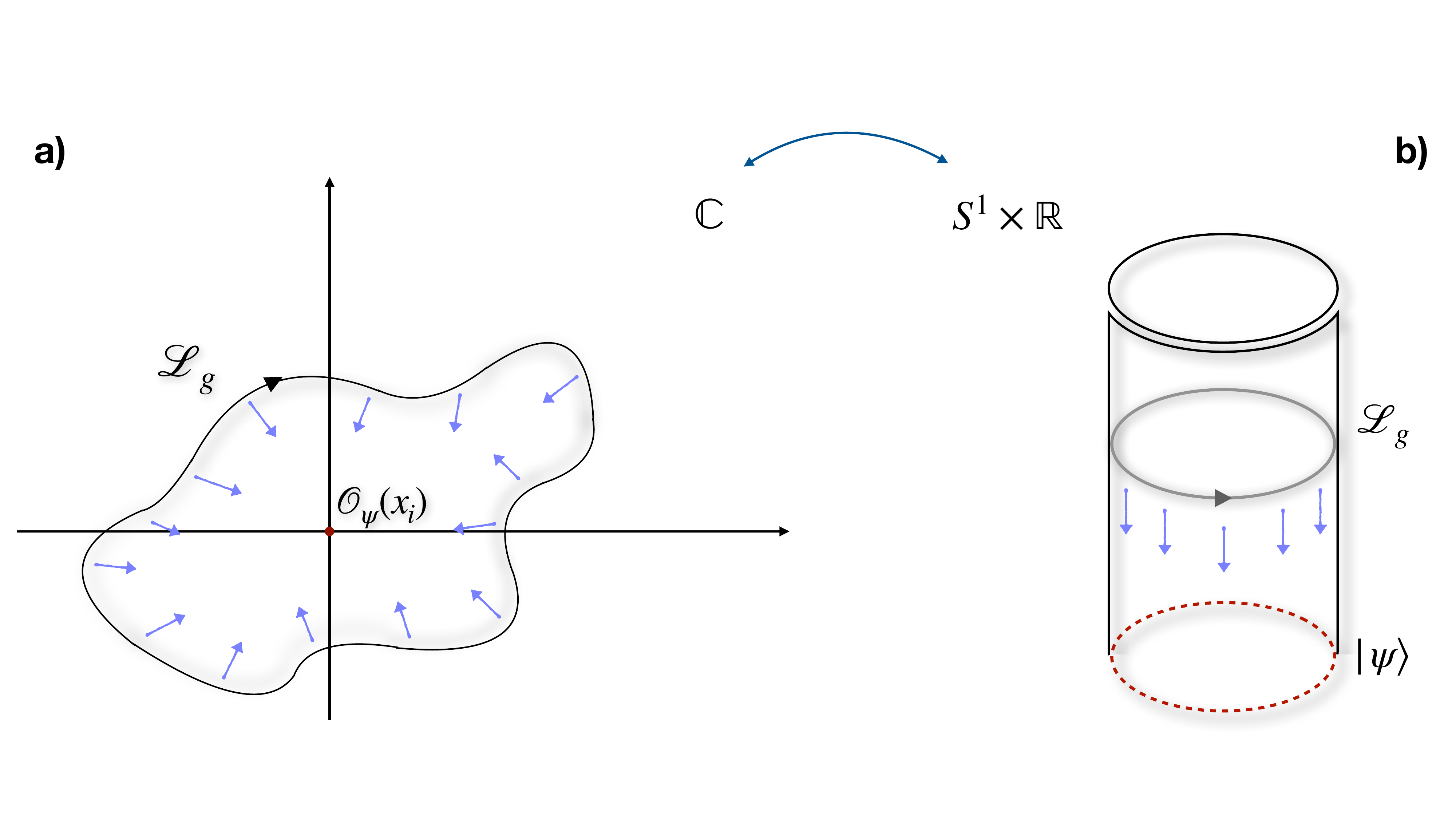}
\caption{ \small \textbf{(a)} Action of $\mathcal{L}_g$ on the bulk operator $\mathcal{O}_{\psi} (x_i)$.\\
\textbf{(b)} Action of $\CL_g$ on the asymptotic state $\vert \psi \rangle$.}
\label{fig2}
\end{figure}
This mechanism is also valid for general TDLs that are not associated with a global symmetry group. From now on we will use the hat notation $\hat{\mathcal{L}}$ to denote the extended operator supported on the line $\mathcal{L}$.\\

\noindent
Invertible defects associated with the elements of a continuous global symmetry group $G$ offer the simplest explicit construction of an extended operator supported on an oriented line. In this case the Noether's theorem provides us a set of conserved currents $J_{(r)}$, such that:
\begin{equation}
    \langle d \ast  J_{(r)} (x) ... \rangle =0,
    \label{conservation_current}
\end{equation}
where the dots denote any operator insertion away from the point $x$. Exponentiating the integral of the Noether's currents on the support line $\CL$: 
\begin{equation}
    \hat{\CL}_g = e^{i \alpha^{(r)} \int_{\CL} \ast J_{(r)}}.
\end{equation}
we get the extended operator $\hat{\CL}_g$ associated with the element $g \in G$ specified by the group parameters $\left\lbrace \alpha^{(r)} \right\rbrace$.
The operator $\hat{\mathcal{L}}_g$ is topological due to the conservation law in \eqref{conservation_current}.\\ 
Similarly, we can define extended operators associated with elements of discrete symmetry groups satisfying the same above proprieties.

 Beyond these objects, we can equip our theory with other point like operators where TDLs can terminate or join. In the first case we can associate to each TDL $\mathcal{L}$ the space $\mathcal{H}_{\mathcal{L}}$ of possible point like operators on which the line $\mathcal{L}$ can end. If $\mathcal{H}_{\mathcal{L}} \neq \emptyset$, then the line $\mathcal{L}$ is said \textit{endable}, and the point like operators of $\mathcal{H}_{\mathcal{L}}$ are called \textit{defect operators}.

\subsection{Defining properties of Topological Defect Lines }

Let us now focus on the case of topological defects $\CL(\gamma)$ supported on lines $\gamma$ in unitary Euclidean two dimensional CFTs.  In the same line of the Introduction, we can think about \textit{topological defects} as a generalization of \textit{global symmetries}. As we spell out below, the set of such defects is equipped with a composition law \eqref{fusion_alg} that is generally non-invertible. This means that one cannot define on this set the standard group structure, as for ordinary symmetries, but rather a fusion category.

In this subsection we will summarize the basic properties that we need in order to specify this structure. \\
As stated above, topological defect lines (TDLs) $\CL$ represent the fundamental objects of the structure. They are supported on oriented lines $\gamma$ on the worldsheet. On the set of distinct defects $\{\CL\}$ is defined an involution $\CL \mapsto \CL^\dual$, shown in figure \ref{fig_3}(a), that corresponds to inverting the orientation of the support. A defect $\CL$ is unoriented if $\CL^\dual =\CL$.

Correlation functions with the insertions of topological defects are invariant under deformation of the support line, as long as the line is not moved past another operator insertion. We will say that a defect $\CL$ and a local operator $\phi$ are `transparent' to each other, or that $\CL$ preserves $\phi$, if the support $\CL$ can be moved past the support of $\phi$ without changing any correlation function. In general, the holomorphic and anti-holomorphic stress tensors $T(z)$ and $\tilde T(\bar z)$ are always preserved by any topological defects. In the following sections, we will consider topological defects that preserve all supercurrents generating the $\CN=(4,4)$ superconformal algebra of a K3 sigma model.

The set of TDLs in a 2d QFT always includes the \textit{Identity} defect $\mathcal{I}$, typically represented by the dotted lines as depicted in figure \ref{fig_3} (b). The insertion or removal of the identity defect does not change any correlation function; equivalently, $\CI$ is transparent to all local operators of the theory.  \\

\begin{figure}[h!]
 \centering
 \includegraphics[scale=0.15]{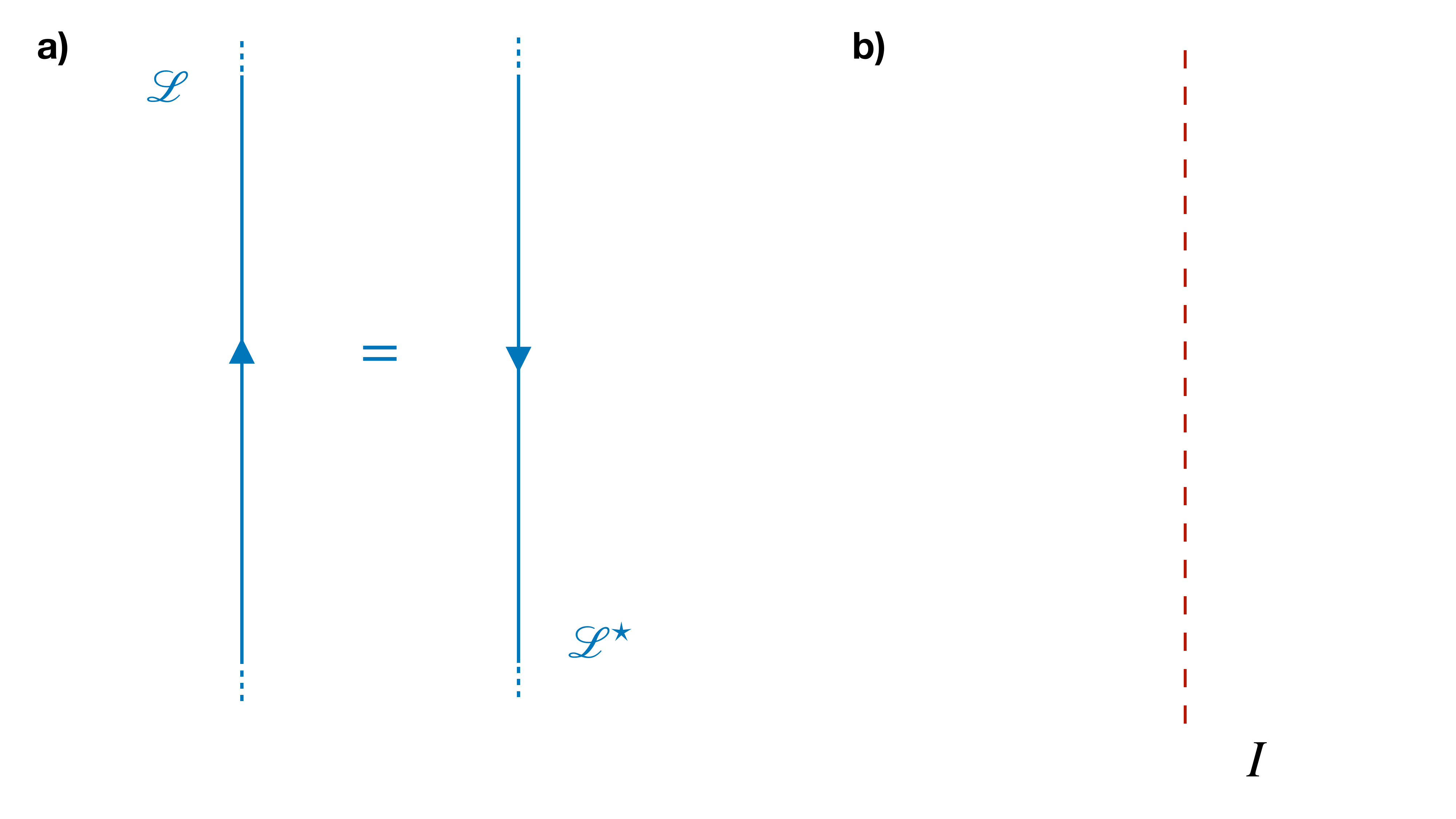}

\caption{ \small \textbf{(a)} Involution map $\mathcal{L} \mapsto \CL^\dual$.
\textbf{(b)} Identity operator in the trivial line.}
\label{fig_3}
\end{figure}

\noindent

Let us consider the two-dimensional CFT on the cylinder $S^1\times \RR$. A TDL $\CL$ inserted along the circle $S^1$ defines a linear operator $\hat \CL:\Hh\to \Hh$ on the Hilbert space $\Hh$ of states on $S^1$ (see figure \ref{fig2}.b). Equivalently, we can consider a closed line $\CL$ encircling the insertion point of a local operator $\phi(z,\bar z)$, corresponding to a state $\phi \in \Hh$. By shrinking the circle around the point $z$ we obtain a new local operator $(\hat \CL\phi)(z,\bar z)$.\footnote{To be precise, the definition of $\hat\CL$ on the sphere might differ from the definition of the cylinder by a phase, see section 2.4 in \cite{Chang_2019}. In this case, we reserve the notation $\hat \CL$ for the operator defined on the cylinder.}

On the other hand, inserting the defect line $\mathcal{L}$ along the Euclidean time direction $\RR$ of $S^1\times \RR$, correspond to modifying the space of states $\Hh$. We denote by $\mathcal{H}_{\mathcal{L}}$ the new Hilbert space of states on the circle $S^1$ that are `twisted' by the defect $\CL$. For example, if $\mathcal{L}$ is an invertible defect associated with a symmetry $g \in G$, then $\mathcal{H}_{\mathcal{L}}$ is simply the \textit{g-twisted sector} of the theory.  \\
The state/operator correspondence defined by the conformal mapping from the cylinder $S^1\times \RR$ to the plane $\CC\setminus \{0\}$, allows us to identify every state $\vert \psi \rangle \in \mathcal{H}_{\mathcal{L}}$ on the cylinder with a non-local defect operator $\mathcal{O}_{\psi}$ on the plane, i.e. an operator  attached to an outgoing defect $\CL$, as shown in figure \ref{fig4}. 
\begin{figure}[h!]
\centering
\includegraphics[scale=0.2]{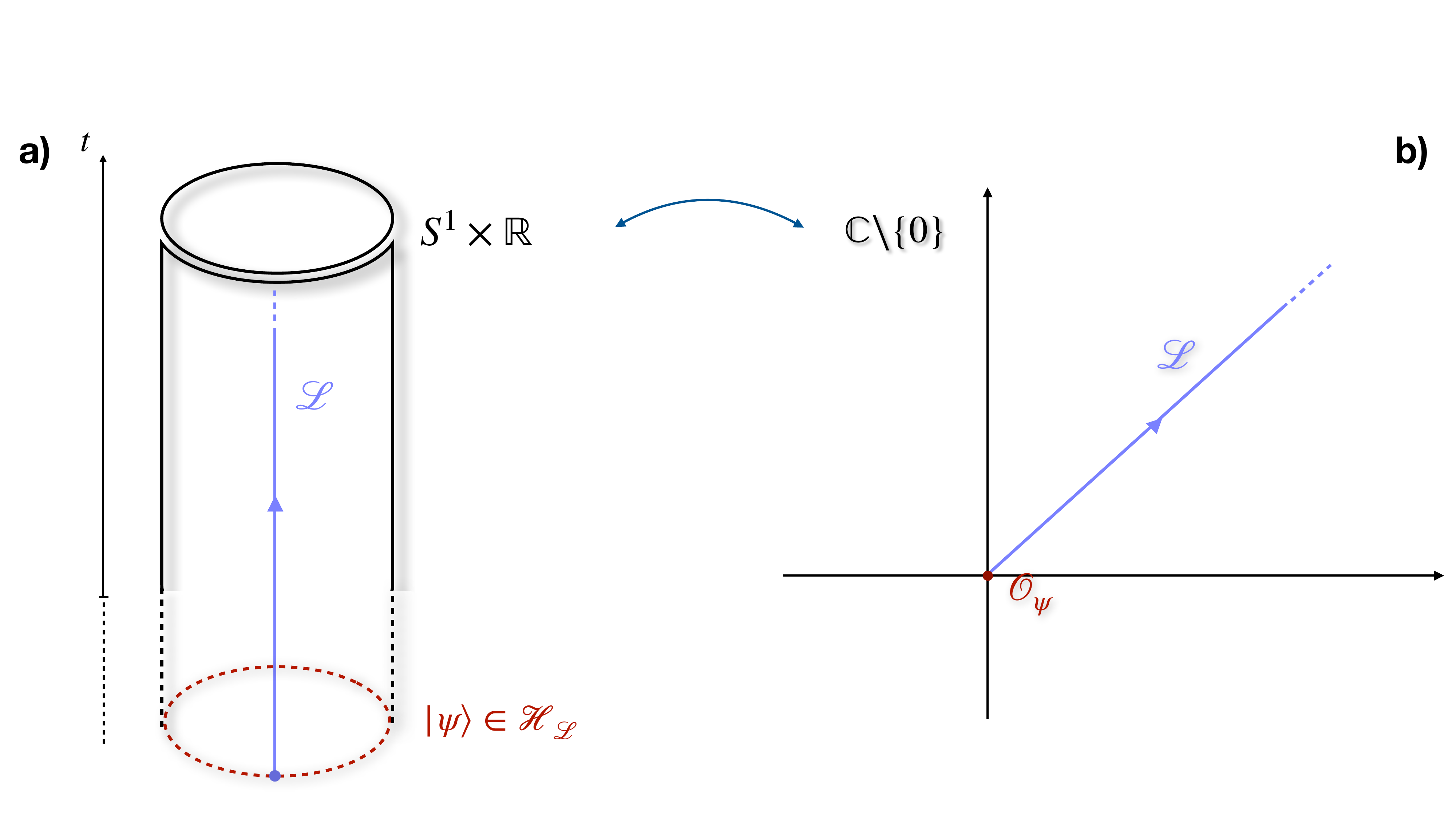}
\caption{ \small An \textit{endable} TDL in the cylinder (figure a) produces a non-local operator in the complex plane (figure b).}
\label{fig4}
\end{figure}
Using this identification, we will refer to $\mathcal{H}_{\mathcal{L}}$ indifferently as the space of $\CL$-twisted states or as the space of $\CL$ defect operators. When $\CL=\CI$ we recover the ordinary space $\Hh_{\CI}=\Hh$ of local point-like operators of the CFT. 

If the defect $\CL$ is transparent for a certain set of local (anti-)holomorphic operators, then the space $\Hh_{\CL}$ is a representation of the corresponding (anti-)chiral algebra. In particular, $\Hh_{\CL}$ is always a representations of the holomorphic and antiholomorphic Virasoro algebra.  More generally, the space $   \Hh_{\CL}$ is a $\CL$-twisted representation of the algebra of local operators on the cylinder. This means that for every $\phi\in \Hh$, $\phi(z,\bar z)$ defines a linear operator on the Hilbert space $\Hh_\CL$ that obeys the OPE relations with other local operators. By $\CL$-twisted, we mean that correlation functions on $S^1\times \RR$ with the insertion of a point operator $\phi(z,\bar z)$ and a defect line $\CL$ along $\RR$ are not quite periodic as $z$ moves around $S^1$, but have some non-trivial discontinuities at the support of the defect that depend on $\CL$. A defect $\CL$ is \emph{simple} if the space $\Hh_{\CL}$ is irreducible as a twisted representation of the algebra of local operators.

When the CFT is defined on a torus $S^1\times S^1$, the modular S-transformation exchanges the insertion of a line $\CL$ along the `space' circle with the insertion along the Euclidean `time' circle. This establishes a relation between the linear operator $\hat\CL$ on $\Hh$ and the twisted space $\Hh_\CL$. 

The $2$-point correlation functions on the sphere with two defect operators $\phi(z,\bar z),\psi(w,\bar w)$ connected by a defect line $\CL$ define a natural non-degenerate bilinear pairing $(\phi,\psi)$ between $\Hh_\CL$ and $\Hh_{\CL^\dual}$. The bilinear pairing is related to the hermitian product on the Hilbert space $\Hh_{\CL}$ by a anti-linear involution $\iota:\Hh_{\CL}\to \Hh_{\CL^\dual}$, such that $\langle \phi_1|\phi_2\rangle=(\iota(\phi_1),\phi_2)$.

\noindent

The set of TDLs is endowed with the algebraic structure of a generally non-commutative (semi-)ring defined by the two operations of \textit{direct sum} (or superposition) ($+$) and \textit{fusion} ($\triangleright$). \\
The \textit{direct sum} allows to associate to each pair of topological defects $\mathcal{L}_a$ and $\mathcal{L}_b$ a third defect $\mathcal{L}_a + \mathcal{L}_b$ such that:
\begin{equation}
    \mathcal{H}_{\mathcal{L}_a + \mathcal{L}_b} = \mathcal{H}_{\mathcal{L}_a} \oplus \mathcal{H}_{\mathcal{L}_b}.
\end{equation}
This first operation is associative and commutative. Unitary CFTs are expected to be \emph{semi-simple}, i.e. every defect $\CL$ can be written as a superposition of \emph{simple} defects
\be \CL=\sum_{\text{simple }\CL_i} n_i\CL_i\ ,
\ee for some non-negative multiplicities $n_i\in \ZZ_{\ge 0}$. Correspondingly, each reducible $\Hh_\CL$ can be decomposed into a direct sum of irreducible components $\Hh_\CL=\oplus_i n_i \Hh_{\CL_i}$.

If there are no additional intermediate insertions between two TDLs $\mathcal{L}_1$ and $\mathcal{L}_2$, we can define the \textit{fusion} deforming one defect into the other, as shown in fig. \ref{fig5}. The resulting object is again a topological defect line denoted as:
\begin{equation}
\mathcal{L}_1 \triangleright  \mathcal{L}_2 = \mathcal{L}_1 \mathcal{L}_2.  
\end{equation}
Fusion defines a notion of tensor product between defect operator spaces
\be \Hh_{\CL_1}\otimes \Hh_{\CL_2}=\Hh_{\mathcal{L}_1 \mathcal{L}_2}
\ee
This second operation is associative but it is in general not commutative. The fusion of two simple defects $\CL_i$ and $\CL_j$  is not necessarily simple, so that one has a decomposition
\be \CL_i\CL_j=\sum_{\text{simple }\CL_k} N_{ij}^k \CL_k\ ,
\ee for some fusion coefficients $N_{ij}^k\in \ZZ_{\ge 0}$. Formally, this means that the set of defects has the structure of a fusion ring, with the simple defects playing the role of a distinguished basis.

Sets of defects that also satisfy commutativity for the fusion form commutative rings.\\
\begin{figure}[h!]
\centering
\includegraphics[scale=0.25]{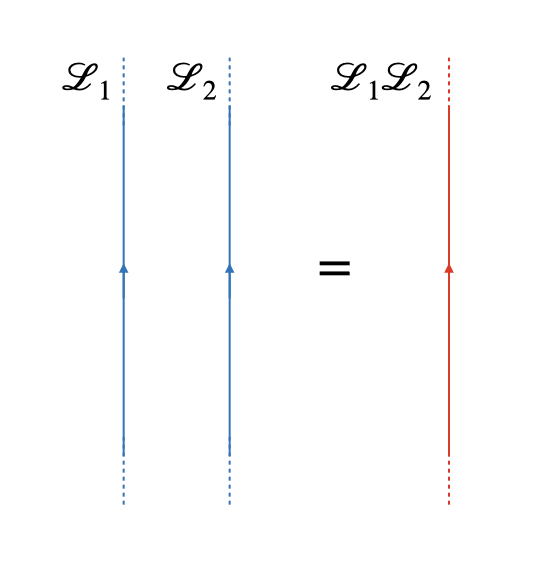}
\caption{Fusion of two TDLs.}
\label{fig5}
\end{figure}

\noindent
The \textit{trivial line} $\mathcal{I}$ is the \textit{neutral element} under fusion
\begin{equation}
    \mathcal{L} \mathcal{I} = \mathcal{I} \mathcal{L} = \mathcal{L}.
\end{equation}

 In analogy with the construction of the spaces $\Hh_\CL$, we can consider $k$ parallel defect lines $\CL_1,\ldots,\CL_k$ inserted in the time-like direction on the cylinder. The corresponding Hilbert space $\CH_{\CL_1,\ldots,\CL_k}$ of states on $S^1$ is identified, via the state/operator correspondence, with the space of  \textit{k-junction} operators. As a convention, we choose that all the involved lines are outgoing from the junction. Once again, $\CH_{\CL_1,\ldots,\CL_k}$ are genuine representations of the chiral and anti-chiral algebras preserved by the defects, and a suitable twisted representation of the algebra of local operators.

By moving the parallel lines $\CL_1,\ldots,\CL_k$ on the cylinder $S^1\times \RR$ very close to each other, we get the identification
\be \Hh_{\CL_1,\ldots,\CL_k}\cong \Hh_{\CL_1\cdots\CL_k}\cong \Hh_{\CL_1}\otimes \cdots\otimes \Hh_{\CL_k}\ ,
\ee between the $k$-junction space $\Hh_{\CL_1,\ldots,\CL_k}$ and the space of defect operators of the fusion $\CL_1\cdots\CL_k$.
   \\
 The subspace $V_{\CL_1,\ldots,\CL_k}\subset \Hh_{\CL_1,\ldots,\CL_k}$ of states with conformal weights $(0,0)$ correspond to junction operators that are themselves topological, i.e. such that the junction point can be moved without changing a correlation function (as long as the insertion point is not moved past the support of some other operator). We can restrict ourselves to consider the correlation functions where all the $k$-junctions with $k>1$ are topological, because all the other junction operators can be obtained by suitable OPE with local operators. 

For a simple defect $\CL$, the only topological $\CL$-twisted operator is the vacuum operator for $\CL=\CI$, so that
\be \dim V_{\CL}=\begin{cases}
    1 & \text{for $\CL=\CI$}\\
    0 & \text{for simple $\CL\neq\CI$}\ .
\end{cases}
\ee
For $k>1$, a topological operator $u\in V_{\CL_1,\ldots,\CL_{k}}$ can also be interpreted as a linear map $u:\Hh_{\CL^\dual_k }\to \Hh_{\CL_1,\ldots,\CL_{k-1}}$ that is a homomorphism of twisted representations of the algebra of local operators. In particular, $V_{\CL,\CL^\dual}={\rm Hom}(\Hh_{\CL},\Hh_{\CL})$ has always dimension at least $1$, because it contains (multiples) of the identity map $\mathbf{1}_\CL:\Hh_\CL\to \Hh_\CL$. 
 A defect $\CL$ is simple if and only if $\dim V_{\CL,\CL^\dual}=1$. Furthermore, for two simple defects $\CL$ and $\CL'$, $$ \dim V_{\CL',\CL^\dual}=\begin{cases}
     1 & \text{if }\CL=\CL'\\
     0 & \text{otherwise}
 \end{cases} .$$

 As for topological $3$-junctions, one can prove that if $\CL_i$, $\CL_j$, and $\CL_k$ are simple defects, then the dimension of topological junction operators is exactly the fusion coefficient
 \be \dim V_{\CL_i,\CL_i,\CL_k^\dual}=N_{ij}^k\ .
 \ee  This fits with the idea that $V_{\CL_i,\CL_i,\CL_k^\dual}$ is the space ${\rm Hom}(\Hh_{\CL_k},\Hh_{\CL_i\CL_j})$ of morphisms from $\Hh_{\CL_k}$  to $\Hh_{\CL_i\CL_j}=\oplus_{l} N_{ij}^l\Hh_{\CL_l}$. Notice that for $\CL$ simple, one can think of the identity $2$-junction $1\in V_{\CL,\CL^\dual}$ as a $3$-junction with the identity defect. As a consequence, $\CL$ is simple if and only if $\dim V_{\CL,\CL^\dual,\CI}=1$, i.e. if and only if $\CI$ appears with multiplicity $1$ in the fusion of $\CL$ with its dual
 \be \CL\CL^\dual=\CI+\ldots\ .
 \ee Here, $\ldots$ denotes a sum with non-negative multiplicities over simple defects distinct from the identity.

At this point it is important to emphasize that the set of topological defect lines equipped by fusion multiplication in general does not form a group. The reason is the absence of an inverse element associated to each TDL. Only a subclass of all possible TDLs admit an inverse under fusion. They are the \textit{invertible TDLs}, and they form a group with respect to this operation. \\
The remaining TDLs are called \textit{non-invertible} and, equipped with the fusion multiplication and the direct sum, they form a more complicated algebraic structure named \textit{Fusion Ring}. \\
A fundamental quantity that we can associate with each TDL $\CL$ is the \textit{quantum dimension} $\langle \CL \rangle\equiv \langle \CL\rangle_{S^1\times \RR} $, defined as the vacuum expectation value of a defect $\CL$ wrapping the circle $S^1$ on the cylinder:
\begin{equation}
    \langle \CL \rangle\equiv \langle \CL\rangle_{S^1\times \RR} := \langle 0 \vert \hat{\CL} \vert 0 \rangle.
\end{equation}
Using the modular invariance properties of the partition function with the defect $\CL$ inserted, it is easy to prove that for unitary theories with a unique vacuum the quantum dimension is bounded from below:
\begin{equation}
    \langle \CL \rangle \geq 0.
\end{equation}
Such constraint is more restrictive when we consider a unitary, compact CFT, where the condition becomes:
\begin{equation}
    \langle \CL \rangle \geq 1.
\end{equation}
Notice that the quantum dimension is also the absolute value of the vacuum expectation value $\langle \CL\rangle_{S^2}$ on the sphere, defined by considering an loop encircling only the vacuum on $S^2$ $$|\langle \CL\rangle_{S^2}|=\langle \CL \rangle\ ,$$ but in general the phase might be different.
The quantum dimensions provide a $1$-dimensional representation of the fusion ring, so that, in particular,
\be \langle \CL_i\rangle\langle \CL_j\rangle=\sum_k N_{ij}^k \langle \CL_k\rangle\ ,
\ee for any simple $\CL_i,\CL_j,\CL_k$. Together with the condition $\langle \CL \rangle \geq 1$, this means that for every simple $\CL_i$, $\CL_j$ of finite quantum dimension, there are only finitely many non-zero fusion coefficients $N_{ij}^k$. In this article, we only consider defects with finite quantum dimension; see \cite{Chang:2020imq} for a discussion about more general possibilities.

\begin{figure}[h!]
    \centering
    \includegraphics[scale=0.17]{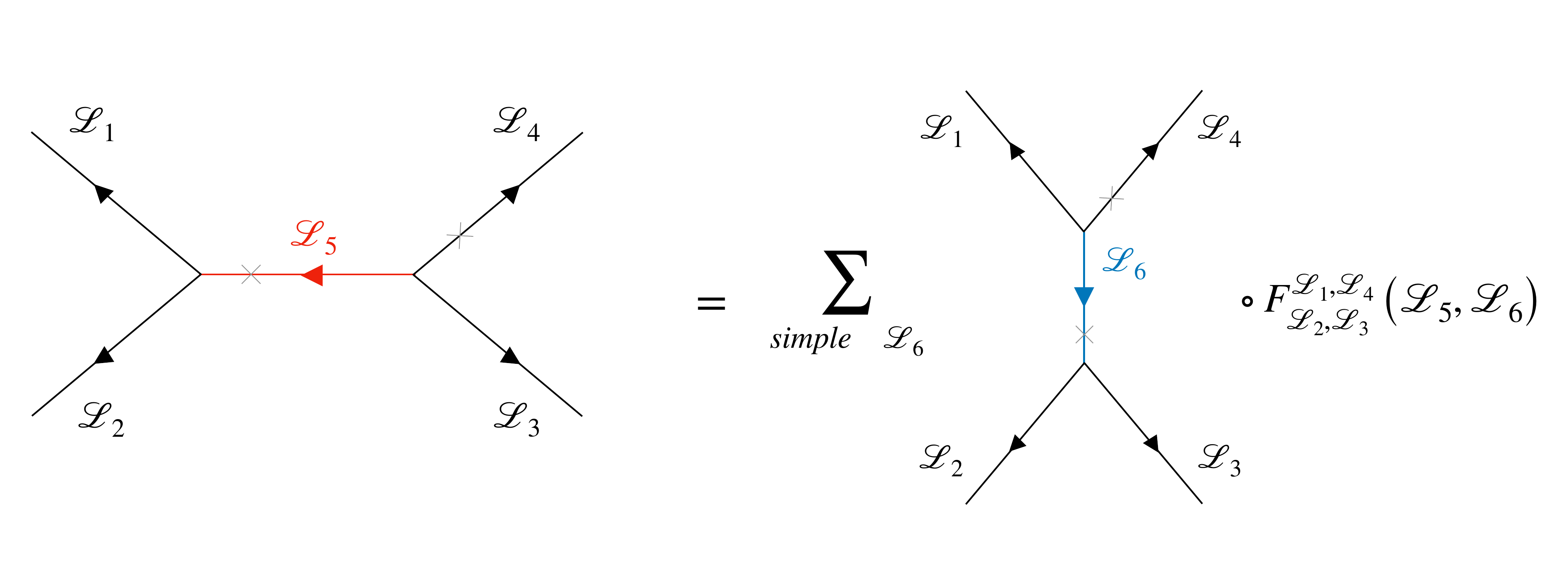}
    \caption{ \small Equivalent configurations under the action of the fusion matrices $F^{\CL_1 , \CL_3}_{\CL_2 , \CL_4}$.}
    \label{fig_6}
\end{figure}

The topology of a network of defects can be modified using the fusion rules shown in fig. \ref{fig_6}, where the fusion matrices $F_{\CL_2,\CL_3}^{\CL_1,\CL_4}(\CL_5,\CL_6)$ map the topological junctions $V_{\CL_1 ,\CL_2, \CL_5^\dual} \otimes V_{\CL_5 ,\CL_3 , \CL_4}$ in $V_{\CL_1 , \CL_6 , \CL_4 } \otimes V_{\CL_2 , \CL_3 , \CL_6^\dual}$. In particular, 
$F^{\CL,\CL^\dual}_{{\CL^\dual}, \CL}(\CI,\CI)$ maps ${\bf 1}\otimes {\bf 1}$ to $\frac{1}{\langle \hat \CL\rangle_{S^2}}{\bf 1}\otimes {\bf 1}$, see fig.\ref{fig_7}.

\begin{figure}[h!]
    \centering
    \includegraphics[scale=0.15]{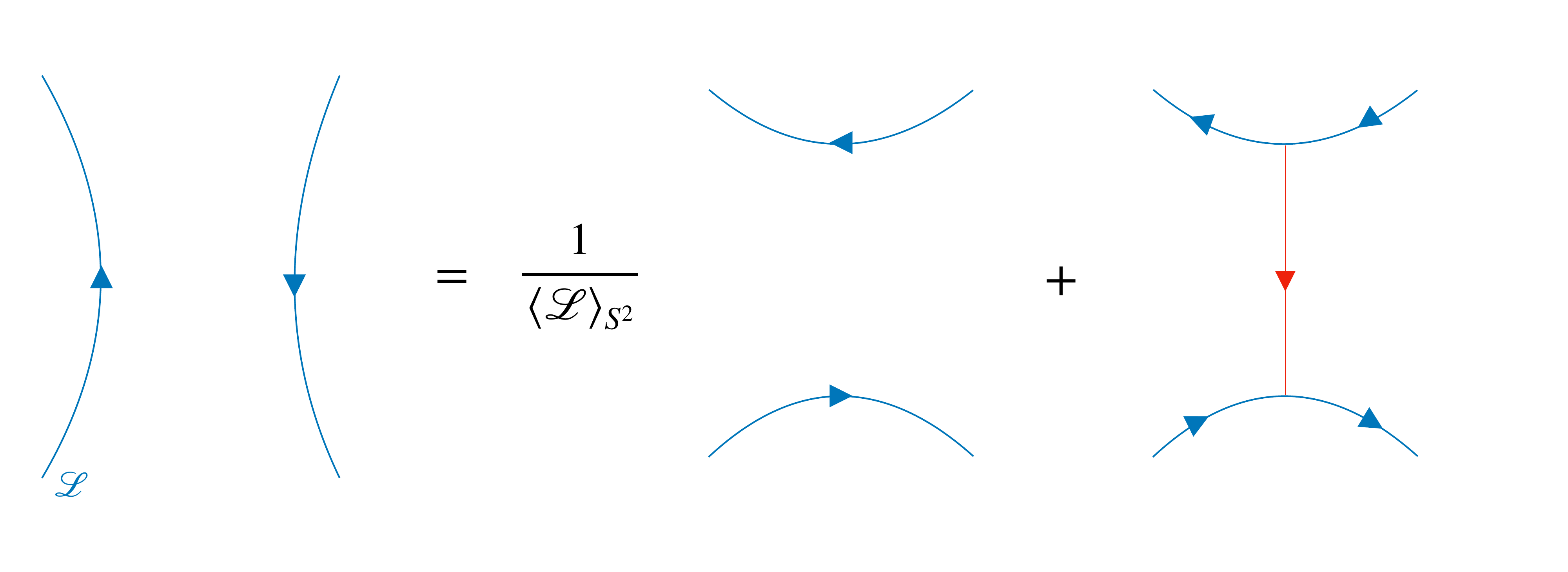}
    \caption{ \small Equivalent configurations under the action of the fusion move. In the second term on the right: the red defect is the superposition (with multiplicity) of the simple defects $\CL_k\neq \CI$ appearing in the fusion $\CL\CL^\dual$; the topological $3$-junction operators are given by the images of ${\bf 1}\otimes {\bf 1}$ by matrices $F^{\CL , \CL^\dual}_{\CL^\dual , \CL}(\CI,\CL_k)$.}
    \label{fig_7}
\end{figure}

This can be used to determine how the correlation function is modified when a simple defect $\CL$ is moved past a local operator $\phi(z,\bar z)$, see figure \ref{fig_8}. In particular, $\CL$ is transparent to $\phi$ if and only if $\hat\CL(\phi)=\langle \CL\rangle \phi$, because in this case one can prove that the term $\hat\CL^v(\phi)$ in figure \ref{fig_8} vanishes. We will use this property repeatedly in the following.

\begin{figure}[h!]
    \centering
    \includegraphics[scale=0.19]{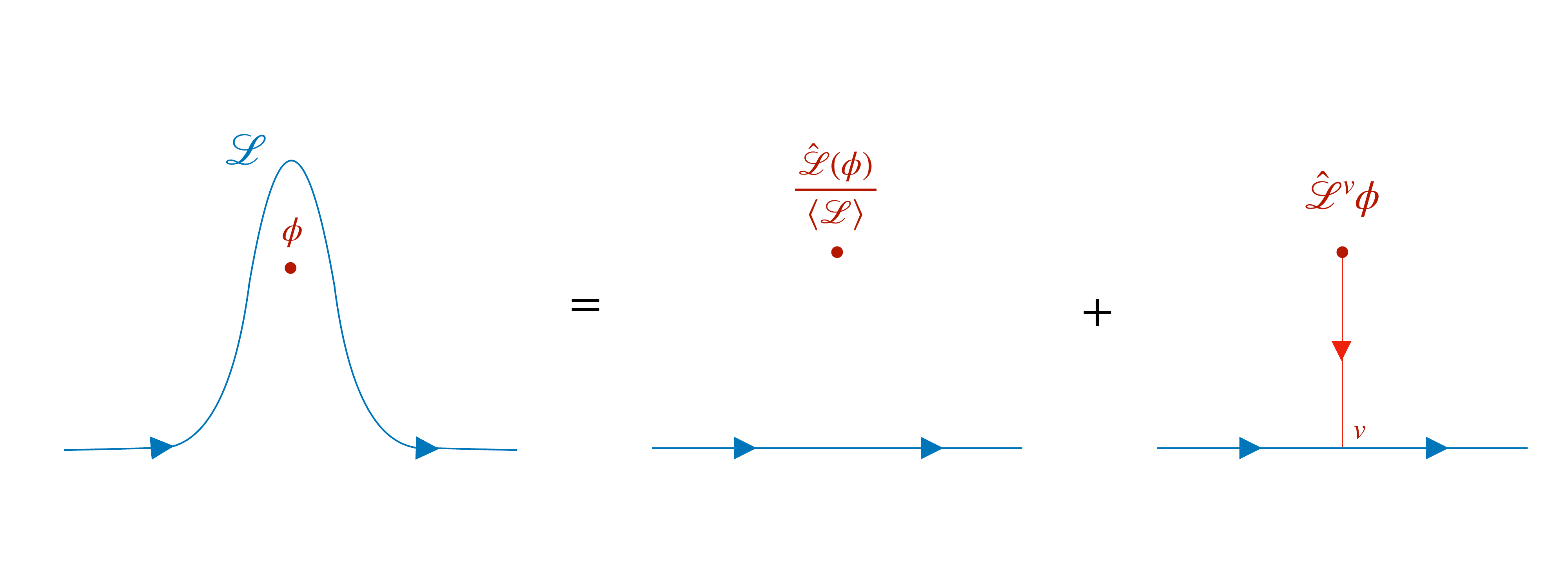}
    \caption{ \small Moving of a simple defect line through the local operator $\phi (z, \overline{z})$. The defect in the second term on the right is a superposition of all simple defect lines $\CL_i\neq \CI$ appearing in the fusion $\CL\CL^\dual=\CI+\ldots$. The factor $1/\langle \CL\rangle$ is determined by observing that the defect must be transparent to the vacuum operator and to its descendants. }
    \label{fig_8}
\end{figure}

\bigskip

Formally, the properties of TDLs described in this section can be formulated in terms of a \emph{fusion category}, where the defects $\CL$ are the objects and the topological junctions $u\in V_{\CL_1,\CL_2^\dual}\equiv {\rm Hom}(\CL_2,\CL_1)$ are the morphisms. Sometimes in the definition of fusion categories one requires that the number of simple objects is finite. This condition might be in general violated when a CFT is not rational with respect to the chiral and anti-chiral algebra preserved by the topological defects, as will be the case in this article.
\section{Topological defects in K3 models}\label{s:genK3models}

In this section, after a short review of non-linear sigma models on K3, we discuss the general properties of topological defects in such models.

\subsection{General properties of K3 models}\label{s:K3models}

Let us review some of the main properties of supersymmetric non-linear sigma models on K3 (or K3 models, for short), and fix our notation for the rest of the article. See \cite{Aspinwall:1996mn,Nahm:1999ps} for more information. 

\bigskip

All K3 models contain a holomorphic and an anti-holomorphic copy of the small $\CN=4$ superconformal algebra at central charge $c=6$ \cite{Eguchi:1987wf,Eguchi:1988af, Eguchi:1987sm}. The bosonic subalgebra of $\CN=4$ is generated by the stress-tensor and the three currents in a $\widehat{su}(2)_1$ current algebra. The $SU(2)$ group generated by the zero modes of such holomorphic currents is the R-symmetry group, and the four supercurrents belong to two R-symmetry doublets. 
The unitary representations of the $\CN=4$ algebra at $c=6$ can be labeled by a pair $(h,q)$, where $h$ is the conformal weight and $q\in\{0,1/2\}$ the $SU(2)$ highest weight of the ground states in the representation; similarly, we will denote by $(h,q;\tilde h,\tilde q)$ the representations of the full (holomorphic and antiholomorphic) $\CN=(4,4)$ algebra at $(c,\tilde c)=(6,6)$. The Neveu-Schwarz (NS) version of the $\CN=4$ algebra contains two unitary short (BPS) representations $(h,q)=(0,0)$ and $(h,q)=(\frac{1}{2},\frac{1}{2})$ and infinitely many long (massive) representations $(h,0)$ with $h>0$. The spectral flow automorphism relating the NS and the Ramond (R) versions of $\CN=4$ maps the NS shorts representations $(0,0)$ and $(\frac{1}{2},\frac{1}{2})$ to the short R representations $(\frac{1}{4},\frac{1}{2})$ and $(\frac{1}{4},0)$, respectively, and the long $(h,0)$ with $h>0$ to the long R representation $(h+\frac{1}{4},\frac{1}{2})$. 

\bigskip

 Every K3 model contains a single $(0,0;0,0)$ representation of $\CN=(4,4)$ (the vacuum and its descendants), and $20$ $(\frac{1}{2},\frac{1}{2};\frac{1}{2},\frac{1}{2})$ representations. There are no fields in the $(\frac{1}{2},\frac{1}{2};0,0)$ or $(0,0;\frac{1}{2},\frac{1}{2})$; such holomorphic and anti-holomorphic free fermions are a characteristic feature of  supersymmetric sigma models on $T^4$, where they are part of a `large' $\CN=(4,4)$ superconformal algebra at $(c,\tilde c)=(6,6)$. Besides these short-short (BPS) representations, a K3 model contains infinitely many different short-long, long-short and long-long representations, with multiplicities depending on the particular model. It is believed that the $\CN=4$ superconformal algebra is the full chiral algebra of a generic K3 model, that is therefore not a rational CFT. 
 At special points in the moduli space of K3 models, the chiral algebra might be extended so that the CFT is rational.

K3 models are invariant under spectral flow exchanging the NS-NS sector and R-R sector. This implies that the R-R sector contains a single $(\frac{1}{4},\frac{1}{2};\frac{1}{4},\frac{1}{2})$  and $20$ $(\frac{1}{4},0;\frac{1}{4},0)$ representations, as well as infinitely many $(h,\frac{1}{2};\tilde{h},\frac{1}{2})$, with $h+\tilde h>1/2$. The four R-R ground fields of $(\frac{1}{4},\frac{1}{2};\frac{1}{4},\frac{1}{2})$ transform in a $(\bf{2},\bf{2})$ representation of the holomorphic and anti-holomorphic $SU(2)\times SU(2)$ R-symmetry group; the OPE with such fields generate the spectral flow between the NS-NS and the R-R sectors. For this reason, we will refer to such fields as the \emph{spectral flow generators}.

It is often useful to think of the K3 model as the internal CFT in a six dimensional compactification of type IIA superstring on $\RR^{1,5}\times K3$. In this case, one should also consider the NS-R and R-NS sectors of the K3 sigma model, tensored with the corresponding sectors of the space-time and superghost CFTs, and with the correct GSO projections. The NS-R and R-NS sectors are also related to the NS-NS and R-R sectors by spectral flow, so that they contain a single $(\frac{1}{4},\frac{1}{2};0,0)$ and $(0,0;\frac{1}{4},\frac{1}{2})$ representations. The corresponding physical string states are the space-time gravitinos, whose zero modes is associated with the space-time $\CN=(1,1)$ supersymmetry in six dimensions.

In general, sigma models on K3 and on $T^4$ are the only known (and, conjecturally, the only consistent) unitary SCFTs  with $\CN=(4,4)$ superconformal algebra at $(c,\tilde c)=(6,6)$ and whose spectrum is invariant under spectral flow.

\bigskip

Every K3 model admits deformations that preserve the full $\CN=(4,4)$ superconformal symmetry, corresponding to an $80$-dimensional space of exactly marginal operators contained in the $20$ NS-NS $(\frac{1}{2},\frac{1}{2};\frac{1}{2},\frac{1}{2})$  representations of $\CN = (4,4)$. There are compelling arguments suggesting that conformal perturbation by any such operator converges in a neighborhood of each model; we will assume that this is true. 
The corresponding $80$-dimensional moduli space $\CM_{K3}$ of K3 models is given by a quotient
\be \CM_{K3}=O(4,20,\ZZ)\backslash \CT_{K3}
\ee where $O(4,20,\ZZ)$ is the integral orthogonal group. Here, the Teichm\"uller space $\CT_{K3}$ is an open subset in the  Grassmannian
\be\label{grassmannian} \CT_{K3}\subset O(4,20,\RR)/(O(4)\times O(20))
\ee parametrising positive definite four-dimensional subspaces $\Pi$ within the real space $\RR^{4,20}$ with signature $(4,20)$. The moduli space $\CM_{K3}$ admits the following physical interpretation: $O(4,20,\ZZ)$ is the T-duality group, and can be identified with the group of automorphisms of the lattice $\Gamma^{4,20}$ of D-brane charges, which is an even unimodular lattice of signature $(4,20)$. One can think of this lattice (or rather its dual) as being embedded in the $24$-dimensional real space $V$ of (CPT self-conjugate) Ramond-Ramond ground states with conformal weights $h=\tilde h=\frac{1}{4}$:
\be \Gamma^{4,20}\subset V:=\{\text{R-R ground states}\}\cong \Gamma^{4,20}\otimes \RR\cong \RR^{4,20}\ .
\ee The space $V$ contains a positive definite subspace \be V\supset \Pi:=\{\text{spectral flow generators}\}\cong \RR^{4,0}\ , \ee spanned by the four spectral flow generators, i.e. the R-R ground states in a $(\frac{1}{4},\frac{1}{2};\frac{1}{4},\frac{1}{2})$ representation of $\CN=(4,4)$.  The orthogonal complement $\Pi^\perp\subset V$ is the space of R-R ground states in the $20$ $\CN=(4,4)$ representations $(\frac{1}{4},0;\frac{1}{4},0)$ \be V\supset \Pi^\perp:=\{\text{states in $(\frac{1}{4},0;\frac{1}{4},0)$ representations}\}\cong \RR^{0,20}\ .\ee Then, $\CM_{K3}$ is essentially the Grassmannian of the four-dimensional subspaces $\Pi\cong\RR^{4,0}$ within $V\cong \Gamma^{4,20}\otimes \RR$, modulo lattice automorphisms (T-dualities) $O(4,20,\ZZ)\cong O(\Gamma^{4,20})$. To be precise, one needs to exclude some points in this Grassmannian, where the CFT is believed to be inconsistent \cite{Aspinwall:1996mn}. From a string theoretical point of view, these are the points in the moduli space of type IIA superstrings where some D-brane becomes exactly massless, so that the perturbative description breaks down even at small string coupling.

Henceforth, we will denote by
\be \calC_\Pi:=\text{K3 model corresponding to }\Pi\subset V\cong \Gamma^{4,20}\otimes \RR
\ee
the K3 model corresponding to a choice of $\Pi\subset V$, i.e. to a point in the Grassmannian $\CT_{K3}$.

\subsection{Symmetries and topological defects}\label{s:topdefsK3}

The main focus of this article are the topological defects in non-linear sigma models of K3. A generic topological defect commutes only with the chiral and anti-chiral Virasoro algebra, i.e. it is `transparent' only to the holomorphic and antiholomorphic stress-energy tensor $T(z)$ and $\tilde T(\bar z)$. In this article, we will focus on a subcategory of topological defects $\CL$ that satisfy some further constraints, namely:
\begin{enumerate}
    \item They commute with the full $\CN=(4,4)$ superconformal algebra, i.e. they are transparent to all supercurrents and to the $\widehat{su}(2)_1$ R-symmetry currents, besides the stress-energy tensor;
    \item They commute with spectral flow generators. This implies that they are transparent to the four R-R ground fields in the $(\frac{1}{4},\frac{1}{2};\frac{1}{4},\frac{1}{2})$ representation of $\CN=(4,4)$. When the K3 model is the internal SCFT in a full type IIA compactification, we also require the defect to be transparent with respect to the NS-R and R-NS ground fields corresponding to the space-time gravitini. These fields generate the purely holomorphic or purely anti-holomorphic spectral flows.
\end{enumerate}
For a K3 model $\calC\equiv \calC_\Pi$, corresponding to a choice of a four dimensional positive definite subspace $\Pi\subset \Gamma^{4,20}\otimes \RR$, we denote by
\be \cTop_{\calC}\equiv \cTop_\Pi
\ee the category of topological defects of $\calC_\Pi$ satisfying the properties 1 and 2.

\bigskip

Properties 1 and 2 lead to a number of important consequences:

\begin{itemize}
\item As explained in section \ref{s:generaldefects}, each topological defect $\CL$ is associated with a linear operator
\be \hat\CL:\Hh\to\Hh
\ee on the Hilbert space of states on the circle $S^1$ (or, equivalently, the space of local point-like operators) of the K3 model $\calC$. The action of $\hat\CL$ is defined  by inserting a defect $\CL$ along the circle $S^1$ on a cylinder  $S^1\times \RR$. Because of the properties 1 and 2, this operator commutes with the $\CN=(4,4)$ algebra and the spectral flow.  Therefore, it maps $\CN=4$ primaries into $\CN=4$ primaries in the same representation. Furthermore, because the defect $\CL$ is transparent with respect to the spectral flow generators, once the action of $\hat\CL$  is known on one of the sectors ($\Hh_{NS-NS}$, $\Hh_{R-R}$, $\Hh_{R-NS}$ or $\Hh_{NS-R}$), then it is uniquely determined in all the other sectors as well. 
\item In general, the properties of topological defects in fermionic CFTs are more complicated than the ones in purely bosonic ones. In particular, fermionic CFTs can contain topological defects $\CL$ of `q-type', that admit topological $2$-junctions $\psi\in V_{\CL,\CL^\dual}$ with odd fermion number $(-1)^{F_L+F_R}=-1$, see \cite{Chang:2022hud,Runkel:2022fzi}. However, none of the topological defects in the category $\cTop$ is of q-type. Indeed, in the K3 models we consider, the holomorphic and anti-holomorphic fermion numbers $(-1)^{F_L}$ and $(-1)^{F_R}$ can be identified with the central $\ZZ_2$ elements in the holomorphic and anti-holomorphic $SU(2)$ R-symmetry group. The topological defects $\CL\in \cTop$ are transparent to the $\widehat{su}(2)_1$ currents in the $\CN=(4,4)$ algebra, and, as a consequence, they commute with the $SU(2)$ R-symmmetry groups generated by their zero modes, and therefore with the fermion numbers $(-1)^{F_L}$ and $(-1)^{F_R}$. Furthermore, all spaces of defect and junction operators are representations of the $\CN=(4,4)$ superconformal algebra, and the fermion number of each state is determined by their $SU(2)$ R-symmetry charges. In particular, topological operators with $h=0=\tilde h$ have zero R-symmetry charges, and therefore they are always bosons.\\
In fact, each $\CL\in \cTop_\calC$ induces a topological defect in the purely bosonic CFT $\calC^{bos}$ obtained by a type $0$ GSO projection, i.e. by including only the NS-NS and R-R fields of the K3 model $\calC$ with positive fermion number. Furthermore, all properties of the topological defect $\CL$ in the original supersymmetric model $\calC$ are completely determined in terms of the action of the operator $\hat \CL$ in the bosonic model $\calC^{bos}$.  
Note, however, that the definition of the category $\cTop_\calC$ is much more natural in the supersymmetric setup. In particular, we do not know whether the condition of preserving the $\CN=(4,4)$ superconformal algebra admits an equivalent formulation in the bosonic model $\calC^{bos}$.
\item With each defect $\CL$ is associated a $\CL$-twisted space of states $\Hh_{\CL}$, with different sectors $\Hh_{\CL}^{NS-NS}$, $\Hh_{\CL}^{R-R}$, $\Hh_{\CL}^{R-NS}$, $\Hh_{\CL}^{NS-R}$. When $\CL\in \cTop_\calC$, each of these sectors decomposes into representations of the $\CN=(4,4)$ superconformal algebra. Furthermore, the sectors are related to each other by spectral flow.
\end{itemize}

Topological defects that are invertible form  the group of symmetries of the K3 model. In \cite{Gaberdiel:2011fg}, all group of symmetries commuting with the $\CN=(4,4)$ algebra and the spectral flow generators have been classified. In particular, consider a K3 model $\calC_\Pi$ in the moduli space $\CM_{K3}$, corresponding to the choice of the positive definite four-dimensional subspace $\Pi$ of spectral flow generators in the space of RR ground fields $V\cong \Gamma^{4,20}\otimes  \RR$. Then,  the group $G_\Pi$ of symmetries of $\calC_\Pi$ satisfying 1 and 2 is isomorphic to the subgroup ${\rm Stab}(\Pi)$ of $O(\Gamma^{4,20})\cong O(4,20,\ZZ)$ fixing $\Pi$ pointwise \cite{Gaberdiel:2011fg}. 

Let us revisit the argument that led to this result, and then discuss to what extent such argument can be generalized to the case of topological defects.
Every symmetry $g\in G_\Pi$ maps 1/2 BPS boundary states to 1/2 BPS boundary states, and therefore maps the lattice $\Gamma^{4,20}$ of RR charge vectors into itself. The map must be linear and preserve the bilinear form of the lattice -- indeed, the bilinear form is a Witten index counting the Ramond ground states for open strings suspended between two D-branes, and is invariant under the action of $G_\Pi$. Therefore, every $g\in G_\Pi$ induces an automorphism of the lattice $\Gamma^{4,20}$. Furthermore, by property 2, the induced action on RR ground states must act trivially on the spectral flow generators in $\Pi$. We conclude that there is a homomorphism $\rho:G_\Pi \to {\rm Stab}(\Pi)\subset O(\Gamma^{4,20})$. Then one proves that such homomorphism is both injective and surjective. To show surjectivity, one notices that $O(\Gamma^{4,20})\cong O(4,20,\ZZ)$ is the T-duality group, and ${\rm Stab}(\Pi)$ is a subgroup of dualities mapping the model $\calC_\Pi$ into itself, i.e. self-dualities. But all self-dualities are symmetries of the model, so they must correspond to some $g\in G_\Pi$. As for injectivity, let $K_\Pi\subseteq G_\Pi$ be the kernel of $\rho$. Then $K_\Pi$  acts trivially on $\Gamma^{4,20}$, and, by linearity, on all RR ground states. But the RR ground states in the twenty $(\frac{1}{4},0;\frac{1}{4},0)$ representations of $\CN=4$ are related by spectral flow to the $80$-dimensional space of exactly marginal operators in the NS-NS sector. Thus, the symmetries in $K_\Pi$ act trivially on all such exactly marginal operators, and as a consequence they are not broken by any deformation of the model. Given that the moduli space $\CM_{K3}$ is connected, we conclude that the kernel of $\rho$ is the same group $K_\Pi\equiv K$ for all K3 models. At this point, one just needs to consider a simple example of non-linear sigma model on K3, where the group $K$ can be explicitly described -- for example the model considered in section \ref{s:Z28M20}. It turns out that $K$ is trivial in that model, and therefore is trivial everywhere in the moduli space $\CM_{K3}$. We conclude that $\rho:G_\Pi \to {\rm Stab}(\Pi)$ is an isomorphism. In \cite{Gaberdiel:2011fg}, it was then proved that every group of the form ${\rm Stab}(\Pi)$ is isomorphic to a subgroup of the Conway group $Co_0$, the group of automorphisms of the Leech lattice $\Lambda$, fixing a sublattice of $\Lambda$ of rank at least $4$. All subgroups of $Co_0$ that are lattice stabilizers were classified in \cite{HohnMason2016}.

\bigskip

Let us now discuss how a similar argument could be generalized to a classification of topological defects $\CL\in \cTop$. As described in section \ref{s:Dbranes}, the fusion of a boundary state $||\alpha\rrangle$ and a defect $\CL$ yields a new boundary state $||\CL\alpha\rrangle$. In particular, the defects $\CL\in\cTop$  preserve the space-time supersymmetry, so they map 1/2 BPS D-branes into 1/2 BPS D-branes, and RR charge vectors to RR charge vectors. Therefore, we have a map
\begin{align}\label{homtoend}
    \cTop_\Pi &\to {\rm End}(\Gamma^{4,20})\\
    \CL&\mapsto \nL\notag
\end{align}
that assigns a $\ZZ$-linear function $\nL:\Gamma^{4,20}\to\Gamma^{4,20}$ to each defect $\CL\in\cTop_\Pi$. This map is compatible with fusion product, i.e. it gives rise to a ring homomorphism from the fusion ring of $\cTop_\Pi$ to ${\rm End}(\Gamma^{4,20})$. The extension of $\nL$  by linearity to the real space of R-R ground fields $V\cong \Gamma^{4,20}\otimes \RR$ coincides with the restriction $\hat \CL_{|V}\in {\rm End}_{\RR}(V)$ of the linear operator $\hat \CL$ to $V$, \be \hat \CL_{|V}:V\to V\ .\ee To summarize:
\begin{itemize}
    \item[(a)] The restriction $\hat\CL_{|V}:V\to V$ maps $\Gamma^{4,20}\subset V$ into $\Gamma^{4,20}$, i.e. it is the extension by $\RR$-linearity of some lattice endomorphism $\nL:\Gamma^{4,20}\to \Gamma^{4,20}$. 
    \end{itemize}
Henceforth, we use the symbol $\nL$ to denote both maps $\nL:\Gamma^{4,20}\to\Gamma^{4,20}$ and $\hat \CL_{|V}:V\to V$. Because $\hat\CL$ commutes with the $\CN=(4,4)$ algebra, it cannot mix RR ground fields in different representations. Furthermore, the condition that the spectral flow generators are transparent with respect to $\CL$ implies that the map $\hat\CL$ acts on them in the same way as on the vacuum, i.e. by multiplication by the quantum dimension $\langle \CL\rangle\ge 1$. The following property then follows:
\begin{itemize}
    \item[(b)] $\nL:V\to V$ is block-diagonal with respect to the orthogonal decomposition $V=\Pi\oplus \Pi^\perp$, i.e. $\nL(\Pi)\subseteq \Pi$ and $\nL(\Pi^\perp)\subseteq\Pi^\perp$. Furthermore, the restriction $\nL_{\rvert \Pi}$ is proportional to the identity $\nL_{\rvert\Pi}=\langle \CL\rangle{\rm id}_\Pi$, where $\langle \CL\rangle\ge 1$ is the quantum dimension of the defect.
\end{itemize}
It is useful to introduce the real vector space \be \label{blkspace}
B^{4,20}(\RR):=\left\{\left(\begin{smallmatrix}
    d\cdot \mathbf{1}_{4\times 4} & 0\\
    0 & b_{20\times 20}
\end{smallmatrix}\right)\mid d\in\RR, b_{20,20}\in Mat_{20\times 20}(\RR)\right\}\ee of block diagonal real $24\times 24$ matrices, with a $4\times 4$ upper left-corner proportional to the identity and an unconstrained $20\times 20$ lower-right block, and its subset 
\be\label{blkspaceplus}
B^{4,20}_+(\RR):=\left\{\left(\begin{smallmatrix}
    d\cdot \mathbf{1}_{4\times 4} & 0\\
    0 & b_{20\times 20}
\end{smallmatrix}\right)\mid d\ge 1, b_{20,20}\in Mat_{20\times 20}(\RR)\right\}\subset B^{4,20}(\RR)\ .\ee
Upon choosing a suitable orthonormal basis of $V$, compatible with the splitting $V\cong \Pi\oplus\Pi^\perp$, the space of linear maps $V\to V$ that satisfy property (b) can be identified with $B^{4,20}_+(\RR)$. Furthermore, if we define
\be B^{4,20}_\Pi(\ZZ):={\rm End}(\Gamma^{4,20})\cap B^{4,20}(\RR)\ ,\qquad \qquad B^{4,20}_{\Pi,+}(\ZZ):={\rm End}(\Gamma^{4,20})\cap B^{4,20}_+(\RR)\ ,
\ee then the maps $V\to V$ satisfying both properties (a) and (b) can be identified with $B^{4,20}_{\Pi,+}(\ZZ)$. Then the image of the map \eqref{homtoend} is actually contained in $B^{4,20}_{\Pi,+}(\ZZ)$, so that we can restrict the target and consider the map
\begin{align}\label{homtoB}
    \cTop_\Pi &\to B^{4,20}_{\Pi,+}(\ZZ)\subset {\rm End}(\Gamma^{4,20})\\
    \CL&\mapsto \nL\notag
\end{align} which gives rise to a homomorphism of semirings. As the notation suggests, the intersections $B^{4,20}_{\Pi}(\ZZ)$ and $B^{4,20}_{\Pi,+}(\ZZ)$ depend on the way the four-dimensional space $\Pi$ is embedded in $\Gamma^{4,20}\otimes \RR$, i.e. on the point on the moduli space $\CM_{K3}$.

\medskip

It is plausible that the maps $\nL$ satisfy some further constraints associated with unitarity. 
Let $\CL$ be a \emph{simple} defect, so that the Hilbert space $\Hh_{\CL\CL^\dual}$ admits an orthogonal decomposition as $$\Hh_{\CL\CL^\dual} \cong \Hh\oplus \Hh_{\CL\CL^\dual-\CI}\ ,$$ where $\Hh_{\CL\CL^\dual-\CI}$ is a sum (with suitable multiplicities) of simple defect spaces $\Hh_{\CL_i}$ with $\CL_i\neq \CI$. Suppose that, in a correlation function, we move the support of the defect line $\CL$ past the insertion point $z$ of a point-like operator $\phi(z)$, with $\phi\in \CH$. Then, $\phi(z)$ gets replaced with the sum of an operator $\frac{1}{\langle \CL\rangle}\hat\CL(\phi)\in \Hh$ plus (possibly vanishing) contributions from each of the components of $\Hh_{\CL\CL^\dual-\CI}$ (see figure \ref{fig_8} in section \ref{s:generaldefects}). This move defines a linear map $\Hh\to \Hh_{\CL\CL^\dual}$. In a unitary CFT, it is natural to expect such a map to be an isometry; we call this assumption a \emph{strong unitarity hypothesis}. Because for a simple defect, the contribution  $\frac{1}{\langle \CL\rangle}\hat\CL(\phi)\in \Hh$ is orthogonal to the contributions from $\Hh_{\CL\CL^\dual-\CI}$, we get as a consequence
\be\label{weakunitarity}
\frac{\|\hat\CL(\phi)\|^2}{|\langle \CL\rangle|^2}\le \|\phi\|^2\ ,\qquad \forall \phi\in \Hh\ ,
\ee where the equality holds if and only if all the other contributions vanish. We call the condition \eqref{weakunitarity} the \emph{weak unitarity hypothesis}. Notice that if \eqref{weakunitarity} holds for all \emph{simple} defects $\CL$, then it must hold for all superpositions as well. While we do not know any counterexample to these hypotheses\footnote{There are well-known counterexamples in non-unitary theories though. For example, the Lee-Yang model with central charge $-\frac{22}{5}$ admits a simple defect $\CL$ of quantum dimension $\langle \CL\rangle=\frac{\sqrt{5}-1}{2}$ where one of the eigenvalues of $\hat\CL$ is $\frac{\sqrt{5}+1}{2}>\langle \CL\rangle$ \cite{Chang_2019}.}, we are not aware of any general proof either. A proof of \eqref{weakunitarity} was given in \cite{Chang:2020imq} (see proposition 8), under certain conditions on $\CL$ and $\phi$. In particular, eq.\eqref{weakunitarity} holds for all Verlinde lines in unitary rational CFT. Unfortunately, K3 models are not rational with respect to the $\CN=(4,4)$ algebra and we were not able to prove that such conditions are satisfied for all $\CL\in \cTop$. If \eqref{weakunitarity} holds, an immediate consequence is the following:
\begin{itemize}
    \item [(c)] (Assuming \eqref{weakunitarity} holds.) The operatorial norm $\|\nL\|:=\sup_{0\neq v\in V}\frac{\|\nL v\|}{\|v\|}$ equals $\langle \CL\rangle$. Here, $\|v\|$ denotes the \emph{Euclidean} norm on $V$.
\end{itemize}

Let us define the `bounded' sets
\be\label{blkspaceplusbound}
B^{4,20}_{+,b}(\RR)=\left\{\left(\begin{smallmatrix}
    d\cdot \mathbf{1}_{4\times 4} & 0\\
    0 & b_{20\times 20}
\end{smallmatrix}\right)\mid d\ge 1, b_{20,20}\in Mat_{20\times 20}(\RR),\ \|b_{20,20}\|\le d\right\}\subset B^{4,20}_+(\RR)\ ,\ee and
\be B^{4,20}_{\Pi,+,b}(\ZZ)={\rm End}(\Gamma^{4,20})\cap B^{4,20}_{+,b}(\RR)\ .\ee If property (c) holds we can further restrict the target of the map \eqref{homtoB} to 
\begin{align}\label{homtoBbound}
    \cTop_\Pi &\to B^{4,20}_{\Pi,+,b}(\ZZ)\subset {\rm End}(\Gamma^{4,20})\\
    \CL&\mapsto \nL\ .\notag
\end{align} The map \eqref{homtoBbound} still gives rise to a homomorphism of semirings. Notice that, for any given real number $d>0$, there are finitely many maps $\nL\in B^{4,20}_{\Pi,+,b}(\ZZ)$ with quantum dimension $\langle \CL\rangle\le d$. 
We will not use property (c) to prove any of the claims in the rest of the paper.

\medskip

The set $B^{4,20}_{\Pi,+}(\ZZ)$ contains the group ${\rm Stab}(\Pi)\subset O(\Gamma^{4,20})$ of lattice automorphisms fixing $\Pi$. In fact,  ${\rm Stab}(\Pi)$ can be characterized as the subset of invertible elements in $B^{4,20}_{\Pi,+}(\ZZ)$, this follows immediately by noticing that if $\nL\in {\rm End}(\Gamma^{4,20})$ admits a multiplicative inverse $\nL^{-1}\in {\rm End}(\Gamma^{4,20})$, then both $\nL$ and $\nL^{-1}$ are in $O(\Gamma^{4,20})$.

Therefore, the semiring homomorphism \eqref{homtoB} (or \eqref{homtoBbound}, if \eqref{weakunitarity} holds) can be understood as an extension of the group  isomorphism $\rho:G_\Pi\to {\rm Stab}(\Pi)$.

Unfortunately, in general we expect the homomorphism \eqref{homtoB} to be neither injective nor surjective. As for surjectivity, recall that a topological defect $\CL$ is invertible if and only if its quantum dimension is $\langle \CL\rangle=1$. On the other hand, it is quite easy to construct elements of $B^{4,20}_{\Pi,+}(\ZZ)$ that have dimension $d=1$ but are not invertible, and therefore are not in the image of $\rho$. Of course, one could put further restrictions on $B^{4,20}_{\Pi,+}(\ZZ)$ by simply excluding such elements. However, we have no guarantee that \emph{all} the elements of $B^{4,20}_{\Pi,+}(\ZZ)$ with dimension larger than $1$ are associated with topological defects.

As for injectivity, we can try to run an argument analogous to the one used in \cite{Gaberdiel:2011fg} to classify the symmetry groups $G_\Pi$. Let $\mathsf{K}_\Pi$ denote the subcategory of topological defects $\CL$ of the model $\calC_\Pi$ preserving $\CN=(4,4)$ and spectral flow, and such that the corresponding maps $\nL$ are proportional to the identity on $V$, i.e. such that
\be \rho(\mathsf{K}_\Pi)\subseteq\{\nL=d\cdot {\rm id}_V,\ d\ge 1\}\subset B^{4,20}_{\Pi,+}(\ZZ)\ .
\ee
We can prove the following:

\begin{claim}\label{th:onlyid}
    For all K3 models $\calC_\Pi$ the category $\mathsf{K}_\Pi$  of defects preserving $\CN=(4,4)$ and spectral flow, and acting by multiplication by some $d\in \RR$ on the space of RR ground fields $V$ is generated by the trivial defect
    \be \mathsf{K}_\Pi=\{d\CI,\ d\in \NN\}\ .
    \ee
\end{claim}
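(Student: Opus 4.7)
The plan is to adapt the argument from \cite{Gaberdiel:2011fg} that classified the kernel $K_\Pi$ of $\rho:G_\Pi\to {\rm Stab}(\Pi)$ in the invertible case, and extend it to the non-invertible setting.

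First I would note that any $\CL\in \mathsf{K}_\Pi$ satisfies $\nL = d\cdot {\rm id}_V$ with $d\in\NN$. Indeed, property (b) forces the scalar on $\Pi$ to equal the quantum dimension $\langle \CL\rangle\ge 1$, while integrality of $\nL$ acting on the lattice $\Gamma^{4,20}$ forces $d\in\ZZ$, so $d\in\NN$.

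Next I would argue that $\CL$ is preserved under arbitrary marginal deformations of $\calC_\Pi$. By spectral flow, the $80$-dimensional space of exactly marginal NS-NS operators is in bijection with the subspace $\Pi^\perp\subset V$ of R-R ground states in the twenty $(\frac{1}{4},0;\frac{1}{4},0)$ representations, on which $\hat\CL$ acts as multiplication by $d$. Invoking the principle, recalled earlier in the section, that the preservation of a defect under a given marginal deformation depends only on $\nL$, and comparing with the trivially preserved defect $d\CI$ which carries the same $\nL$, one deduces that $\CL$ itself is preserved.

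I would then transport $\CL$ along a continuous path in the connected moduli space $\CM_{K3}$. The associated endomorphism varies continuously within ${\rm End}(\Gamma^{4,20})$, so, being integer-valued, it remains constantly equal to $d\cdot {\rm id}_V$ along the path. Choosing a path ending at the specific K3 model studied in section \ref{s:Z28M20}, the transported defect still belongs to $\mathsf{K}$ with the same integer $d$. The crucial final step is an explicit analysis at that reference model, showing that the only defects whose $\nL$ is proportional to the identity are the superpositions $d\CI$ of copies of the trivial defect. Transporting back along the reverse path then identifies $\CL$ with $d\CI$ at the original point $\Pi$, establishing the claim.

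The main obstacle, in my view, lies in this last step: the explicit analysis at the model of section \ref{s:Z28M20}. This is the non-invertible counterpart of the verification that $K$ is trivial in that model, carried out in \cite{Gaberdiel:2011fg}, but one must now enumerate and rule out \emph{all} potentially non-invertible simple defects whose action on $V$ is scalar, which is a strictly larger problem. A secondary technical subtlety concerns the application of the deformation criterion to non-simple $\CL\in\mathsf{K}_\Pi$: a cleaner route is to first handle simple objects, decompose a general $\CL\in \mathsf{K}_\Pi$ into simples $\CL=\sum_i n_i\CL_i$, and then show that a simple summand $\CL_i\neq \CI$ with non-scalar $\nL_i$ would survive transport to the reference model and be excluded there by the explicit analysis.
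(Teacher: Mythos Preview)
Your proposal is correct and follows essentially the same route as the paper: show that any $\CL\in\mathsf{K}_\Pi$ is preserved by all marginal deformations, use connectedness of $\CM_{K3}$ to reduce to a single reference model, and then invoke the explicit analysis there (Claim~\ref{th:nootherdef}). You also correctly identify the explicit check at the reference model as the real work.

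A few minor differences in presentation are worth noting. First, the paper does not argue via the indirect principle ``preservation depends only on $\nL$, and $d\CI$ has the same $\nL$''; it is more direct. Since $\hat\CL$ acts on every R-R ground field $\phi\in V$ by $\hat\CL(\phi)=d\phi=\langle\CL\rangle\phi$, the transparency criterion from section~\ref{s:generaldefects} (cf.\ figure~\ref{fig_8}) immediately gives that $\CL$ is transparent to all of $V$, hence to all exactly marginal operators via spectral flow. This sidesteps your ``secondary technical subtlety'' about non-simple $\CL$: transparency is the operative notion, and once it holds the defect simply commutes with the deformation. Second, the paper does not derive $d\in\NN$ up front from integrality of $\nL$ (though your argument is valid and is essentially Proposition~\ref{th:intdim}); it obtains integrality as a by-product of the reference-model analysis. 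Finally, your language of ``transporting $\CL$ along a path'' with $\nL$ varying continuously is more elaborate than what the paper actually uses: once transparency to all marginal operators is established, the paper simply concludes that $\mathsf{K}_\Pi$ is independent of $\Pi$, without needing to track anything along a path.
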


\begin{proof}
    Let $\CL$ be a defect in $\mathsf{K}_\Pi$. Because $\CL$ is transparent to the spectral flow generators, the real number $d$ must be the quantum dimension $d=\langle \CL\rangle$. As a consequence, all R-R ground fields are transparent to $\CL$. This implies that $\CL$ is transparent to all exactly marginal operators of $\calC_\Pi$, and it cannot be lifted by any deformation of the model. Because the moduli space of K3 is connected, this means that the category $\mathsf{K}_\Pi$  is the same for all K3 models. Therefore, it is sufficient to determine $\mathsf{K}_\Pi\equiv \mathsf{K}$ in a specific model. In section \ref{s:Z28M20}, we will show that in a certain torus orbifold the defects of $\mathsf{K}_\Pi$ are necessarily superpositions of $d$ copies of the identity defect (see Claim \ref{th:nootherdef}); in particular, $d$ is a natural number.
\end{proof}

This result generalizes the analogous statement for symmetry groups that the kernel of $\rho$ is trivial. However, in the case of defects, this is not sufficient to conclude that $\rho$ is injective. If $g,h$ are elements of the group $G_\Pi$, then $\rho(g)=\rho(h)$ implies that $\rho(gh^{-1})= \rho(g)\rho(h)^{-1}=1$, and therefore $gh^{-1}\in \ker\rho$ must be the identity and $g=h$. But if $\CL$ and $\CL'$ are non-invertible defects, the fact that $\rho(\CL)=\rho(\CL')$ does not imply that $\CL$ and $\CL'$ can be obtained from each other by fusion with a defect in $\mathsf{K}$. Indeed, in sections \ref{s:torusOrbs}, \ref{s:Z28M20} and \ref{s:onetosix} we will see examples of continuous families of distinct defects $\CL_\theta$, all with the same image $\nL$.

\bigskip

The possible quantum dimensions of defect $\CL\in \cTop_\Pi$ are strongly constrained by properties (a) and (b). In section \ref{s:Dbranes} we will prove the following:
\begin{claim}\label{th:qdim}
    The quantum dimension $\langle \CL\rangle$ of a defect $\CL\in \cTop_\Pi$ is an algebraic integer of degree at most $6$. Furthermore, if $\Pi\cap \Gamma^{4,20}\neq 0$, then $\langle \CL\rangle$ is integral for all $\CL\in \cTop_{\Pi}$.
\end{claim}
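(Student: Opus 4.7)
The plan is to combine property (a), which asserts that $\nL$ is a $\ZZ$-linear endomorphism of the even unimodular lattice $\Gamma^{4,20}$, with property (b), which asserts that $\nL$ restricted to the four-dimensional subspace $\Pi$ acts as multiplication by $\langle\CL\rangle$. First I would fix a $\ZZ$-basis of $\Gamma^{4,20}$, represent $\nL$ as an integer $24\times 24$ matrix, and consider its characteristic polynomial $p(x)\in\ZZ[x]$, which is monic of degree $24$. Since the real extension of $\nL$ acts on the four-dimensional subspace $\Pi$ as $\langle\CL\rangle\,{\rm id}_\Pi$, the quantum dimension $\langle\CL\rangle$ is an eigenvalue whose geometric multiplicity is at least $4$, hence its algebraic multiplicity as a root of $p(x)$ is also at least $4$. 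Because $p(x)$ is monic with integer coefficients, every complex root is an algebraic integer, and in particular so is $\langle\CL\rangle$.

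For the degree bound, let $m(x)\in\QQ[x]$ be the minimal polynomial of $\langle\CL\rangle$ over $\QQ$, of degree $d$, and factor $p(x)=m(x)^k q(x)$ in $\QQ[x]$ with $\gcd(m,q)=1$. Since $m(x)$ is irreducible and hence separable over $\QQ$, the multiplicity of $\langle\CL\rangle$ as a root of $p(x)$ equals exactly $k$, and the same holds for each of its Galois conjugates. From the previous paragraph $k\ge 4$, so comparing degrees gives $dk\le 24$, that is $d\le 6$.

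For the integrality claim, the key observation is that a nonzero vector $v\in\Pi\cap\Gamma^{4,20}$ satisfies $\nL v=\langle\CL\rangle v$, with $\nL v$ again lying in $\Gamma^{4,20}$. Expressing $v$ in a $\ZZ$-basis of $\Gamma^{4,20}$, any nonzero coordinate immediately forces $\langle\CL\rangle\in\QQ$, and combined with the conclusion that $\langle\CL\rangle$ is an algebraic integer this gives $\langle\CL\rangle\in\ZZ$.

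The argument is quite direct and I do not anticipate a major obstacle; the only point worth spelling out carefully is the assertion that $\langle\CL\rangle$ and all its Galois conjugates occur with the same multiplicity $k$ in the characteristic polynomial $p(x)\in\ZZ[x]$. This however follows immediately from the fact that the cofactor $p(x)/m(x)^k$ lies in $\QQ[x]$ and is coprime to $m(x)$, so no root in the Galois orbit of $\langle\CL\rangle$ can appear among the roots of that cofactor.
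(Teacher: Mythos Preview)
Your proof is correct and follows essentially the same approach as the paper: both arguments represent $\nL$ as an integer $24\times 24$ matrix, use that $\langle\CL\rangle$ occurs with multiplicity at least $4$ in its characteristic polynomial to bound the degree of the minimal polynomial by $6$, and derive integrality from the existence of a nonzero lattice eigenvector in $\Pi$. The only cosmetic differences are that the paper phrases the integrality step via primitivity of the lattice vector rather than via rationality plus algebraic integrality, and that your factorization $p(x)=m(x)^k q(x)$ spells out more carefully why all Galois conjugates of $\langle\CL\rangle$ share the same multiplicity---a point the paper takes for granted.
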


We recall that an algebraic integer is the root of a monic polynomial $p(x)$ with integral coefficients, and its degree is $d$ if any such $p(x)$ has degree at least $d$. We do not know whether the upper bound on the degree is sharp. 
A slightly weaker necessary condition for the quantum dimension to be integral is given in proposition \ref{th:intdim} in section \ref{s:Dbranes}.

The condition $\Pi\cap \Gamma^{4,20}\neq 0$ has a nice physical interpretation. Let $v\neq 0$  be a primitive vector in $\Pi\cap \Gamma^{4,20}$. Because the subspace $\Pi$ is positive definite, the vector $v$ has positive norm $v^2>0$. From the viewpoint of type IIA superstring, a primitive vector $v\in \Gamma^{4,20}$ with $v^2>0$ represents the charge of a BPS D0-D2-D4-brane configuration.\footnote{While the charges of BPS D-branes span the lattice $\Gamma^{4,20}$, a \emph{generic} vector $v\in \Gamma^{4,20}$ is the charge of a system of branes and anti-branes that is not by itself BPS.} The mass of such a BPS configuration depends on the moduli, and is proportional to $v_\Pi^2$, where $v_\Pi$ and $v_\perp$ are the orthogonal projections of $v$ along $\Pi$ and $\Pi^\perp$, so that $v^2=v_\Pi^2-v_\perp^2$. An attractor point in the moduli space for the BPS state with charge $v$ is a point where the BPS mass $v_\Pi^2=v^2+v_\perp^2$ is minimized, and  this happens if and only if $v\in \Pi$ \cite{Andrianopoli:1998qg,Dijkgraaf:1998gf,Ferrara:1995ih,Moore:1998pn}. Thus, the points in the moduli space where $\Pi \cap  \Gamma^{4,20}\neq 0$ are exactly the attractor points for some BPS brane configurations. Claim \ref{th:qdim} then implies that whenever the K3 model $\calC$ is `attractive', all topological defects $\CL\in \cTop_\calC$ have integral quantum dimension.

\bigskip
By combining Claims \ref{th:onlyid} and \ref{th:qdim}, we show that generically, i.e. outside of a subset of null measure in the moduli space $\CM_{K3}$, the category $\cTop_\Pi$ is essentially trivial:

\begin{claim}\label{th:generic}
For a generic K3 sigma model $\calC_\Pi$, 
the only topological defects in $\cTop_\Pi$ are integral multiples of the identity. 
\end{claim}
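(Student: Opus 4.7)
I would combine the semiring homomorphism \eqref{homtoB}, $\CL\mapsto \nL$, with Claim \ref{th:onlyid}. By the contrapositive of Claim \ref{th:onlyid}, a defect $\CL\in\cTop_\Pi$ fails to be an integer multiple of the trivial defect only if $\nL$ is not proportional to ${\rm id}_V$. It therefore suffices to prove the purely lattice-theoretic statement that, for $\Pi$ in the complement of a null subset of $\CT_{K3}$, every element of $B^{4,20}_{\Pi,+}(\ZZ)$ equals $d\cdot{\rm id}_V$ for some $d\in\NN$. Once this is shown, every $\CL\in\cTop_\Pi$ lies in $\mathsf{K}_\Pi$, and Claim \ref{th:onlyid} completes the argument.

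To prove the lattice statement, for each non-scalar $\nL\in{\rm End}(\Gamma^{4,20})$ I introduce
\[ S_\nL:=\{\,\Pi\in \CT_{K3}\mid \nL \in B^{4,20}_{\Pi}(\ZZ)\,\}. \]
Property (b) forces every $\Pi\in S_\nL$ to lie inside a genuine real eigenspace $W_d\subset V$ on which $\nL$ acts as multiplication by a real scalar $d$; since $\nL$ is not a scalar, every such $W_d$ has dimension $k\le 23$. The Teichm\"uller space $\CT_{K3}$ is open in the Grassmannian \eqref{grassmannian}, of real dimension $80$, whereas the positive-definite $4$-planes sitting inside a fixed $W_d$ of dimension $k$ form a sub-Grassmannian of real dimension $4(k-4)\le 76$. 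Hence $S_\nL$ is contained in a finite union of proper real-analytic subvarieties of $\CT_{K3}$, one for each real eigenvalue of $\nL$ whose eigenspace has dimension $\ge 4$ and enough positive directions to host a positive $4$-plane; in particular $S_\nL$ has Lebesgue measure zero.

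Since ${\rm End}(\Gamma^{4,20})$ is countable, the ``bad'' locus $\bigcup_{\nL\,\text{non-scalar}} S_\nL$ is a countable union of null sets and is therefore itself null. On its complement $B^{4,20}_{\Pi,+}(\ZZ)\subseteq \NN\cdot{\rm id}_V$, and the claim follows. The main obstacle is the dimension count for $S_\nL$: I need to verify that the supplementary block-diagonality requirement $\nL(\Pi^\perp)\subseteq\Pi^\perp$ cannot conspire with the eigenspace condition to fill an open subset of $\CT_{K3}$. This is automatic here, because imposing an additional closed condition can only shrink $S_\nL$ further. Beyond this, the genuinely non-trivial input is Claim \ref{th:onlyid} itself, whose proof rests on the model-specific calculation carried out in section \ref{s:Z28M20}.
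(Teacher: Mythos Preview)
Your argument is correct and reaches the same conclusion as the paper, but by a genuinely different route. The paper fixes a lattice basis $\Psi_1,\ldots,\Psi_{24}$ of $\Gamma^{4,20}$, picks a spectral flow generator $\psi\in\Pi$, and argues that for generic $\Pi$ the $24$ real numbers $\langle\psi|\Psi_j\rangle$ are linearly independent over $\overline{\QQ}$; the eigenvector relation $\sum_j(d\delta_{ij}-\nL_{ij})\langle\psi|\Psi_j\rangle=0$ then forces $\nL=d\cdot{\rm id}$ because the coefficients $d\delta_{ij}-\nL_{ij}$ lie in $\QQ[d]\subset\overline{\QQ}$. This last step uses that the quantum dimension $d$ is algebraic (Claim~\ref{th:qdim}), and the genericity statement is phrased in terms of the transcendence of the projections of a single vector $\psi\in\Pi$.

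Your approach bypasses both of these points. You enumerate the non-scalar $\nL\in{\rm End}(\Gamma^{4,20})$ directly and observe that the condition $\nL_{|\Pi}=d\cdot{\rm id}_\Pi$ confines $\Pi$ to a fixed real eigenspace $W_d\subsetneq V$, giving a sub-Grassmannian of dimension $\le 4(23-4)=76<80$; the countable union over ${\rm End}(\Gamma^{4,20})\cong\ZZ^{576}$ is then null. This does not require knowing in advance that $d$ is algebraic (for an integer matrix the eigenspaces are finite in number regardless), nor does it involve choosing a distinguished vector inside $\Pi$. Your remark that the extra block-diagonality condition $\nL(\Pi^\perp)\subseteq\Pi^\perp$ can only shrink $S_\nL$ is exactly right and disposes of the only potential subtlety. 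Both arguments ultimately rest on Claim~\ref{th:onlyid} and hence on the explicit check in section~\ref{s:Z28M20}; yours is somewhat more self-contained on the lattice side, while the paper's version is more explicit about what ``generic'' means in coordinates.
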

See section \ref{s:Dbranes} for the proof. 
As a consequence, for any non-trivial topological defect $\CL\in \cTop_\Pi$ there is a deformation of the K3 model $\calC_\Pi$ that lifts it. 
This is in particular true for the K3 sigma models allowing for a non-trivial  ${\rm Stab}(\Pi)$, whose elements can in general be characterized as the invertible topological defect lines in $\cTop_\Pi$. We stress that our result does not exclude that the subset of the moduli space where $\cTop$ is non-trivial is dense in $\CM_{K3}$.

\subsection{Action on D-branes}\label{s:Dbranes}

Let us consider the set of BPS boundary states in a K3 model $\calC_\Pi$, corresponding to D-branes in the full string theory that are $1/2$-BPS, i.e. that preserve $8$ out of $16$ space-time supersymmetries.

As in the previous section, we denote by $V$ the $24$-dimensional real space of RR ground states that are CPT self-conjugate. The hermitian form on $\CH$ induces a positive definite bilinear form on $V$.

With each 1/2-BPS boundary state $||\alpha\rrangle$ is associated a $24$-dimensional charge vector $q_\alpha\in V^*$. The pairing of $q_\alpha$ with a R-R ground state $\psi\in V$ is given by the
 amplitude \be q_\alpha(\psi):=\llangle\alpha||\psi\rangle\in \RR\ ,\qquad \ee on a half-cylinder \be S^1\times \RR_{\le 0}=\{(x,t)\mid x\in\RR/2\pi\ZZ,\ t\le 0\}\ee where $|\psi\rangle$ is  the asymptotic state at $t\to-\infty$ and $\alpha$ the boundary condition at $t=0$, see figure \ref{fig9}.

\begin{figure}[hbt]
    \centering
    \includegraphics[scale=0.16]{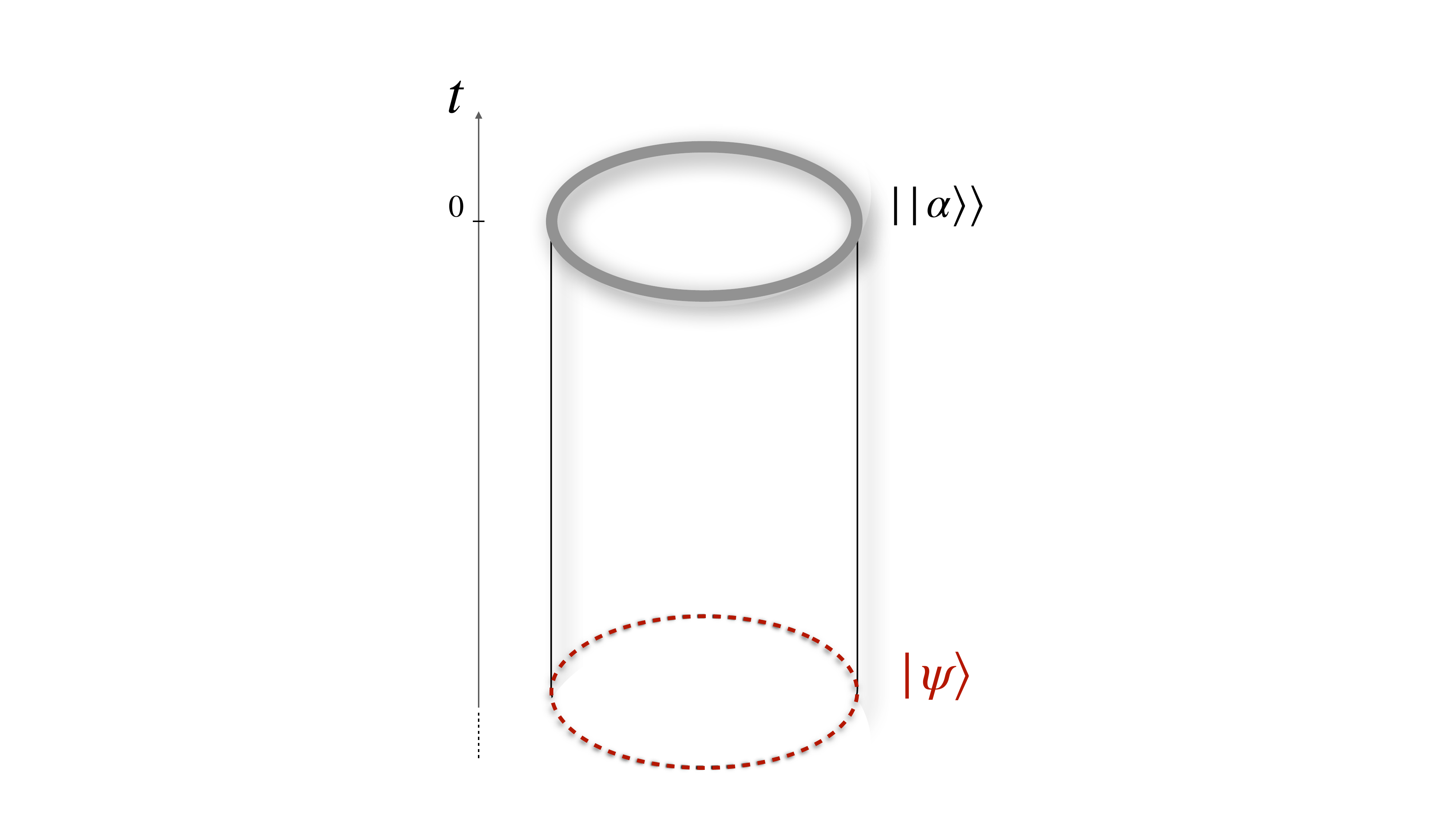}
    \caption{\small Amplitude $\llangle  \alpha \vert\vert \psi \rangle$ in the half-cylinder $S^1 \times \mathbb{R}_{\leq 0}$ with boundary $\vert\vert \alpha \rrangle$ at $t=0$.}
    \label{fig9}
\end{figure}

Geometrically, if $S$ is the target K3 surface, we can identify $V$ with the even real cohomology $H^{even}(S,\RR)$ (in fact, RR ground fields correspond to harmonic forms on $S$), and the lattice of D-brane charges with $H_{even}(S,\ZZ)$, the integral even homology. We can define a bilinear form $(q_\alpha,q_\beta)$ on the lattice of D-brane charges, corresponding to the Mukai pairing on $H_{even}(S,\ZZ)$:
\be (q_\alpha,q_\beta):=\Tr_{\CH^{open}_{R,\alpha,\beta}}(q^{H_{open}}(-1)^F)=\llangle \alpha||(-1)^{F_L+1}{\tilde{q}}^{H_{closed}}||\beta\rrangle_{R-R}
\ee
Physically, the bilinear form is given by a string amplitude on a \emph{bounded} cylinder $S^1\times [0,1]$ with boundaries $\alpha$ and $\beta$, describing a loop of Ramond open strings with periodic conditions for the fermions. 
This amplitude reduces to an index counting Ramond ground states in the space $\CH^{open}_{R,\alpha,\beta}$ of  open strings with Chan-Paton factors $\alpha$--$\beta$. In particular, $(q_\alpha,q_\beta)$  is just an integral (in fact, even) number independent of the size of the cylinder. With respect to this pairing, the RR charge lattice is isomorphic to the (unique) even unimodular lattice with signature $(4,20)$.

In the closed string channel, the amplitude gets contributions only from the Ramond-Ramond ground fields propagating between the corresponding boundary states, with the insertion of a left-moving fermion number $(-1)^{F_L+1}$. The latter acts by $+1$ on $\Pi$ (i.e. on the spectral flow generators in the $(\frac{1}{4},\frac{1}{2};\frac{1}{4},\frac{1}{2})$ representation of $\CN=(4,4)$) and by $-1$ on $\Pi^\perp$ (i.e. on states in $(\frac{1}{4},0;\frac{1}{4},0)$), and can be used to define a non-degenerate bilinear form with signature $(4,20)$ on $V$
\be\label{bilin420} (\psi,\psi'):=-\langle \psi|(-1)^{F}|\psi'\rangle\ .
\ee 
 Let us focus on the dual of the charge lattice,  i.e. the lattice $\Gamma^{4,20}\subset V$ of states with respect to which the charge of any boundary state is integral
\be \Gamma^{4,20}:=\{\Psi\in V\mid \llangle\alpha|\Psi \rangle\in \ZZ\ ,\ \forall || \alpha\rrangle\}\ .
\ee Since the charge lattice is self-dual, $\Gamma^{4,20}$ is again an even unimodular lattice of signature $(4,20)$ with respect to the bilinear form \eqref{bilin420}.

\bigskip 
Let us now consider, as above, a half-cylinder amplitude $\llangle\alpha|\hat \CL|\psi\rangle$ for a boundary $\alpha$ and a state $\psi\in V$, with the insertion of a topological defect line $\CL\in\cTop$ wrapping once along the circle $S^1$.
\begin{figure}[hbt]
    \centering
    \includegraphics[scale=0.2]{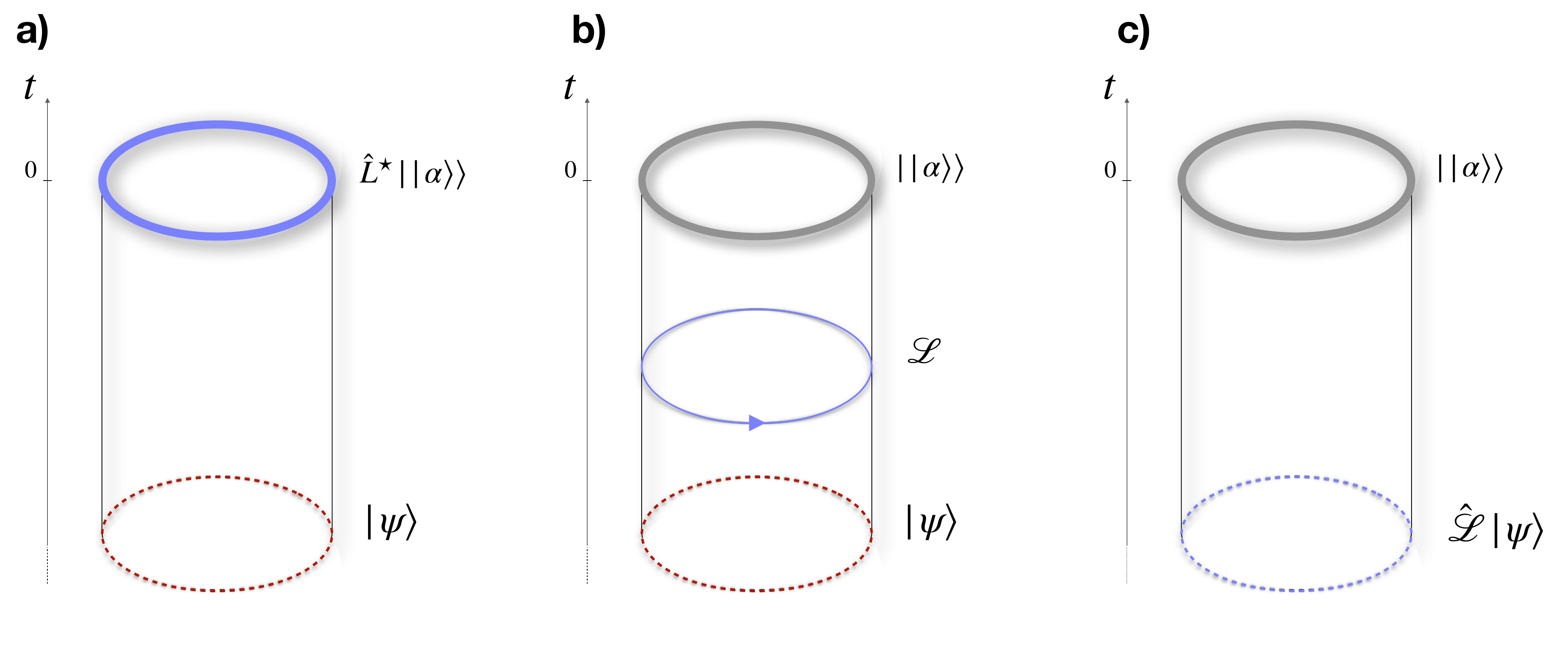}
    \caption{\small Amplitude $\llangle \alpha \vert \hat{\mathcal{L}} \vert \psi \rangle$ in the half-cylinder $S^1 \times \mathbb{R}_{\leq 0}$ with insertion of a TDL $\mathcal{L} \in Top$.}
    \label{fig10}
\end{figure}

 By moving the line $\CL$ along the cylinder to $t\to -\infty$  or $t\to 0$,  $\llangle\alpha|\hat \CL|\psi \rangle$ can be interpreted as either the RR charge of the D-brane $\alpha$ with respect to $\hat\CL|\psi\rangle$, or as the charge of $\CL^\dual ||\alpha\rrangle$ with respect to $|\psi\rangle$. Therefore, 
 \be q_\alpha(\hat\CL\psi)=\llangle\alpha|\hat\CL|\psi\rangle=q_{\hat\CL^\dual\alpha}(\psi)\ .
 \ee As in the previous section, we denote by
 $ \nL:=\hat\CL_{\rvert V}\ ,
 $ the restriction of $\hat\CL$ to $V$.
  In particular, if $\Psi\in \Gamma^{4,20}\subset V$, then
 \be q_\alpha(\nL\Psi)=q_{\hat\CL^\dual\alpha}(\Psi)\in \ZZ\ ,
 \ee for all $||\alpha\rrangle$. It follows that $\nL(\Psi)\in \Gamma^{4,20}$, i.e. $\hat\CL$ gives rise to an endomorphism $\nL\in {\rm End}_\ZZ(\Gamma^{4,20})$ of the unimodular lattice $\Gamma^{4,20}$
 \begin{align*}
     \nL:\Gamma^{4,20}&\to\Gamma^{4,20}\\
     \Psi&\mapsto \nL(\Psi)\ .
 \end{align*}
 The operator $\hat\CL$ commutes with the left-moving fermion number $(-1)^{F_L}$
\be [(-1)^{F_L},\hat\CL]=0\ ,
\ee so that the linear map $\nL$ preserves the orthogonal decomposition $V=\Pi\oplus \Pi^\perp$. In particular, it is block diagonal with respect to an orthonormal basis $\{\psi_1,\ldots,\psi_{24}\}\subset V$ satisfying \be\label{orthspecial} \psi_1,\ldots,\psi_4\in \Pi\ ,\qquad \psi_5,\ldots,\psi_{24}\in \Pi^\perp\ .
\ee Finally, we notice that on a spectral flow generators $\psi\in \Pi$, $\nL$ acts by
\be \nL(\psi)=\langle \CL\rangle \psi\ ,\qquad \forall\psi\in \Pi\subset V\ ,
\ee i.e. the $4\times 4$ upper-left block of $\nL$ is just $\langle \CL\rangle$ times the identity. This argument leads to the conditions (a) and (b) described in section \ref{s:topdefsK3}.

\medskip

An easy but powerful consequence of this construction is the following:

\begin{proposition}\label{th:intdim}
    Let $\CL\in\cTop_{\calC_\Pi}$, and suppose there is some $\Psi\in \Gamma^{4,20}$, $\Psi\neq 0$, such that $\nL(\Psi)=\langle \CL\rangle\Psi$, where $\langle\CL\rangle$ is the quantum dimension of $\CL$. Then, $\langle \CL\rangle$ is integral. In particular, if for a certain model $\calC_\Pi$, one has $\Pi\cap \Gamma^{4,20}\neq 0$, then all $\CL\in \cTop_{\calC_\Pi}$ have integral quantum dimension.
\end{proposition}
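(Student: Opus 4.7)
The plan is to exploit the fact, established in property (a) of Section \ref{s:topdefsK3}, that $\nL$ is a $\ZZ$-linear endomorphism of the lattice $\Gamma^{4,20}$, so in any $\ZZ$-basis of the lattice it is represented by a $24\times 24$ matrix with integer entries. First I would fix such a basis, so that both $\nL$ is an integer matrix and $\Psi\in\Gamma^{4,20}$ becomes a nonzero integer column vector, and then read the hypothesis $\nL(\Psi)=\langle\CL\rangle\Psi$ coordinate by coordinate.

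The first step would be to show that $\langle\CL\rangle\in\QQ$. Since $\Psi\neq 0$, choose any index $i$ with $\Psi_i\neq 0$; then
\[
\langle\CL\rangle\;=\;\frac{(\nL\Psi)_i}{\Psi_i}\;\in\;\QQ,
\]
because both the numerator and the denominator are integers. The second step would be to observe that $\langle\CL\rangle$ is simultaneously an algebraic integer: it is an eigenvalue of the integer matrix representing $\nL$, hence a root of the characteristic polynomial $\chi_\nL(x)=\det(x\,\mathbf{1}-\nL)$, which is monic with integer coefficients. A rational algebraic integer is automatically an ordinary integer, so combining the two steps yields $\langle\CL\rangle\in\ZZ$, as claimed.

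For the second assertion, suppose $\Pi\cap\Gamma^{4,20}\neq 0$ and pick any nonzero $\Psi\in\Pi\cap\Gamma^{4,20}$. Property (b) of Section \ref{s:topdefsK3} asserts that $\nL$ restricted to $\Pi$ acts as $\langle\CL\rangle\cdot\mathrm{id}_\Pi$, so automatically $\nL(\Psi)=\langle\CL\rangle\Psi$. Thus the hypothesis of the first part is satisfied by every $\CL\in\cTop_{\calC_\Pi}$ with the same $\Psi$, and integrality of all quantum dimensions follows at once. There is no substantial obstacle here: the whole argument hinges on the interplay between the lattice structure of $\Gamma^{4,20}$ and the transparency of the spectral flow generators, both of which have already been set up; the only mild point to be careful about is distinguishing the real eigenvalue equation from its lattice refinement, which is precisely what forces rationality.
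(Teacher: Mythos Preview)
Your proof is correct. The paper takes a slightly different, more elementary route: it first replaces $\Psi$ by a primitive vector (dividing out the gcd of its coordinates), and then observes directly that since $\langle\CL\rangle\Psi\in\Gamma^{4,20}$ lies on the line through $\Psi$, and the only lattice points on that line are integer multiples of the primitive $\Psi$, one gets $\langle\CL\rangle\in\ZZ$ in one step. Your argument instead splits the conclusion into ``rational'' (from a single coordinate ratio) plus ``algebraic integer'' (from the characteristic polynomial), and then invokes the standard fact that $\QQ\cap\overline{\ZZ}=\ZZ$. Both are short and valid; the paper's primitivity trick avoids appealing to the characteristic polynomial and the rational-algebraic-integer lemma, while your version has the minor advantage of not needing to reduce to a primitive vector and of making explicit the algebraic-integer property that the paper uses separately in the very next proposition.
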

Indeed, without loss of generality, we can assume the vector $\Psi\in \Gamma^{4,20}$ is primitive (i.e., not an integral multiple of a shorter vector in the lattice). Then, because $\langle\CL\rangle\Psi\in \Gamma^{4,20}$, it must necessarily be an integral multiple of $\Psi$.

More generally, because the matrix $\nL$ representing the action of $\hat\CL$ on the space of RR ground fields is an integral $24\times 24$ matrix, it satisfies $p(\nL)=0$ where $p(x)$, the characteristic polynomial, is a monic polynomial of degree $24$ with integral coefficients. This implies that the quantum dimension $\langle \CL\rangle$, which is an eigenvalue of $\nL$, is also a root of the same monic polynomial, i.e. it is an algebraic integer. Let $d=\langle\CL\rangle$ be an algebraic integer, and let $r(x)$ be the minimal polynomial for $d$, i.e. the least degree monic polynomial with integral coefficient having $d$ as a root. Since $d$ has multiplicity at least $4$ as an eigenvalue, it follows that $r(x)^4$ divides $p(x)$, so that $r(x)$ must have degree at most $6$. We conclude that:

\begin{proposition}
    The quantum dimension $\langle \CL\rangle$ of any defect $\CL\in\cTop$ is an algebraic integer of degree at most $6$.
\end{proposition}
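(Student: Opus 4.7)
The plan is to exhibit $\langle \CL\rangle$ as an eigenvalue of a large integer matrix, and then to use the multiplicity of that eigenvalue (forced by the transparency of the spectral flow generators) to shrink the generic degree bound by a factor of four.

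First, I would fix a $\ZZ$-basis of the rank-$24$ unimodular lattice $\Gamma^{4,20}$ and use property (a) from the previous subsection---namely that $\hat\CL$ restricted to $V$ comes from a lattice endomorphism $\nL:\Gamma^{4,20}\to\Gamma^{4,20}$---to represent $\nL$ by a matrix $M\in\mathrm{Mat}_{24\times 24}(\ZZ)$. Its characteristic polynomial $p(x)=\det(x\,\mathbf{1}-M)$ is then a monic polynomial of degree $24$ with integer coefficients. Property (b) says that $\nL$ acts on the four-dimensional subspace $\Pi$ of spectral flow generators as $\langle\CL\rangle\cdot\mathrm{id}_\Pi$, so $\langle\CL\rangle$ is an eigenvalue of $M$ and in particular a root of $p(x)$. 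This already yields the algebraic-integer part of the statement.

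For the bound on the degree, let $r(x)\in\ZZ[x]$ be the minimal polynomial of $\langle\CL\rangle$. The geometric multiplicity of $\langle\CL\rangle$ as an eigenvalue of $M$ is at least $\dim_\RR \Pi=4$, hence its algebraic multiplicity in $p(x)$ is also at least $4$. Now I would invoke the standard fact that $\mathrm{Gal}(\overline{\QQ}/\QQ)$ permutes the roots of any polynomial in $\QQ[x]$ while preserving their multiplicities; consequently every Galois conjugate of $\langle\CL\rangle$ also appears with multiplicity at least $4$ in $p$. Because $r$ is irreducible (hence squarefree) and its roots are exactly the conjugates of $\langle\CL\rangle$, this implies that $r(x)^4$ divides $p(x)$ in $\QQ[x]$. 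Comparing degrees gives $4\deg r\le 24$, i.e.\ $\deg r\le 6$.

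Conceptually, the only substantive input is property (b): without it one would only get the trivial estimate $\deg r\le 24$. The Galois-multiplicity step is classical, but it is the step I would want to double-check most carefully, since it is precisely where the geometric factor of four is converted into a bound on the \emph{degree} of the minimal polynomial rather than merely on the total multiplicity of $\langle\CL\rangle$ itself in $p(x)$.
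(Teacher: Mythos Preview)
Your argument is correct and matches the paper's own proof essentially line for line: represent $\nL$ by an integral $24\times 24$ matrix, observe that $\langle\CL\rangle$ is a root of its monic integral characteristic polynomial with multiplicity at least $4$, and conclude that the minimal polynomial $r(x)$ satisfies $r(x)^4\mid p(x)$, hence $\deg r\le 6$. The only difference is that you spell out the Galois-conjugate justification for the divisibility $r^4\mid p$, which the paper leaves implicit.
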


This result, together with Claim \ref{th:onlyid}, leads us to a proof of Claim \ref{th:generic}.

\begin{proof}[Proof of Claim \ref{th:generic}.]  Consider a K3 model $\calC_\Pi$. Let $\Psi_1,\ldots,\Psi_{24}$ be generators of the lattice $\Gamma^{4,20}\subset V$, let $d:=\langle \CL\rangle$ be the quantum dimension of a defect $\CL\in \cTop_\Pi$, and let $\QQ[d]$ be the extension of the field $\QQ$ by the algebraic integer $d$.    Then, for a generic model $\calC_\Pi$, we expect the scalar products $\langle \psi|\Psi_1\rangle,\ldots,\langle \psi|\Psi_{24}\rangle$, with some spectral flow generator $\psi\in \Pi$, to generate a space of maximal rank $24$ over the field extension $\QQ[d]$. Indeed, because the set of algebraic numbers has measure zero in $\RR$, for a generic model $\calC_\Pi$ every non-trivial linear relation among $\langle \psi|\Psi_1\rangle,\ldots,\langle \psi|\Psi_{24}\rangle$ will contain some transcendental coefficient. 
Let us define the matrix elements $\nL_{ij}$ by $\nL|\Psi_i\rangle=\sum_j \nL_{ij}|\Psi_j\rangle$, so that $\nL_{ij}\in \ZZ$. Using
\be d\langle\psi|\Psi_i\rangle= \langle\psi|\nL|\Psi_i\rangle=\sum_{j=1}^{24} \nL_{ij} \langle\psi|\Psi_j\rangle\ ,
\ee we get
\be \sum_j (d\delta_{ij}-\nL_{ij})\langle\psi|\Psi_j\rangle=0\qquad \forall i=1,\ldots,24\ .
\ee Because $d\delta_{ij}-\nL_{ij}\in \QQ[d]$ and $\langle\psi|\Psi_j\rangle$ are linearly independent over $\QQ[d]$, it follows that
\be d\delta_{ij}-\nL_{ij}=0\qquad \forall i,j=1,\ldots,24\ .
\ee This means that $\nL$ is $d$ times the identity, and by proposition \ref{th:intdim} $d$ is integral. By Claim \ref{th:onlyid}, this means that $\CL=d\CI$, and we conclude. \end{proof}

\subsection{Defects as boundaries in the doubled theory and other approaches}\label{s:doubletheory}

In this section we discuss other approaches to topological defects that are not used elsewhere in this article.

First of all, a topological defect in a CFT $\calC$ can be described as a boundary state in the doubled theory $\calC\times\calC^*$, where $\calC^*$ is obtained from $\calC$ by worldsheet parity. In particular, when $\calC$ is a sigma model on K3, the doubled theory $\calC\times\calC^*$ is a sigma model on the Cartesian product of two K3 surfaces. This CFT contains a $\CN=(4,4)$ superconformal algebra at central charge $c=\tilde c=12$. Notice that the $\CN=4$ algebra at $c=12$ contains a copy of $\widehat{su}(2)_2$ at level $2$, rather than $1$. 

The space of R-R ground fields of weight $h=\bar h=\frac{1}{2}$ of $\calC\times \calC^*$ has dimension $24^2=576$, and is isomorphic to $V\otimes V^*\cong {\rm End}_\RR(V)$. Similarly, the lattice of RR charges is the tensor product\footnote{Actually, because $\Gamma^{4,20}$ is unimodular, the non-degenerate bilinear form defines isomorphisms $\Gamma^{4,20}\cong(\Gamma^{4,20})^*$ and $V\cong V^*$. } $\Gamma^{4,20}\otimes (\Gamma^{4,20})^*\cong  {\rm End}_\ZZ(\Gamma^{4,20})$. Therefore, for each given $\CL\in \cTop_\calC$ one can naturally identify $\nL\in {\rm End}_\ZZ(\Gamma^{4,20})$ with the RR charge of the corresponding boundary state in $\calC\times \calC^*$. 
It follows that property (a) is just quantization of RR charges in the doubled theory.

In order to understand the analogue of property (b), let us consider the decomposition of the space \be V\otimes V^*\cong V\otimes V=(\Pi\oplus\Pi^\perp)\otimes (\Pi\oplus\Pi^\perp) \ee into representations of the $\CN=(4,4)$ algebra. There are three kinds of (short) Ramond representations of $\CN=4$ at $c=12$ with $h=\frac{1}{2}$, that are labeled by the highest weight (charge) $q\in \{0,1/2,1\}$ with respect to the $\widehat{su}(2)_2$ algebra. In particular, the subspace $\Pi^\perp\otimes \Pi^\perp$ is contained in $20^2=400$ representations of $\CN=(4,4)$ with holomorphic and antiholomorphic charges $q=\tilde q=0$; the $80$-dimensional subspaces $\Pi\otimes \Pi^\perp$ and $\Pi^\perp\otimes \Pi$ are contained in $40$ $\CN=(4,4)$-representations with $(q,\tilde q)=(\frac{1}{2},0)$ and $(q,\tilde q)=(0,\frac{1}{2})$, respectively; finally, the $16$-dimensional subspace $\Pi\otimes \Pi$ decomposes into one $\CN=(4,4)$-representation with $(q,\tilde q)=(1,1)$ ($9$ RR ground states), one representation with $(q,\tilde q)=(1,0)$ ($3$ states), one with $(q,\tilde q)=(0,1)$ ($3$ states) and one with $(q,\tilde q)=(0,0)$ (one state). This gives an orthogonal decomposition
\be V\otimes V^*=\Sigma_{0,0}\oplus \Sigma_{\frac{1}{2},0}\oplus \Sigma_{0,\frac{1}{2}}\oplus \Sigma_{1,0}\oplus \Sigma_{0,1}\oplus \Sigma_{1,1}\ ,
\ee where $\Sigma_{q,\tilde q}$ is the subspace of RR ground states of $\calC\times\calC^*$ contained in the $(q,\tilde q)$  representation of $\CN=(4,4)$. Here,
$$ \dim\Sigma_{0,0}=401\ ,\quad  \dim\Sigma_{\frac{1}{2},0}=\dim\Sigma_{0,\frac{1}{2}}=80\ ,\quad \dim\Sigma_{1,0}=\dim\Sigma_{0,1}=3\ ,\quad \dim\Sigma_{1,1}=9\ .
$$
It is clear that property (b) for $\CL\in\cTop_\calC^*$ is equivalent to the condition the corresponding boundary state in $\calC\times\calC^*$ is charged only under the $401$ RR ground states in $\Sigma_{0,0}$, and neutral with respect to the other ones. In particular, the condition that the restriction of $\nL$ to $\Pi$ is proportional to the identity implies that the map $\nL$ commutes with the left- and right-moving $SU(2)$ R-symmetry groups; in turn, this means that the corresponding boundary state in $\calC\times \calC^*$ can only be charged under the singlet in $\Pi\otimes \Pi$. In other words, the space $B^{4,20}(\RR)$ can be identified with the $401$-dimensional subspace $\Sigma_{0,0}$ of R-R ground states of $\calC\times\calC^*$.

This construction also provides some intuition as for why, for a generic K3 sigma model $\calC$, the only simple defect in $\cTop_\calC$ is the identity. 
Indeed, while there are infinitely many boundary states in $\calC\times\calC^*$, whose RR charges span the whole lattice $\Gamma^{4,20}\otimes (\Gamma^{4,20})^*$, it is not obvious at all that \emph{any} of these lattice vectors is contained in the subspace $\Sigma_{0,0}\equiv B^{4,20}(\RR)$. Let us elaborate on this point in more detail. We know that in any K3 model $\calC$ there is the identity defect $\CI$, and the charge of the corresponding boundary state in the doubled theory is an element $\mathsf{1}\in \Gamma^{4,20}\otimes (\Gamma^{4,20})^*$ lying in the $401$-dimensional subspace $\Sigma_{0,0}$ with $(q,\tilde q)=(0,0)$. Under the identification $\Gamma^{4,20}\otimes (\Gamma^{4,20})^*\cong {\rm End}(\Gamma^{4,20})$, $\mathsf{1}$ simply corresponds to the identity map. Therefore, the intersection 
\be\label{lattsigma00} (\Gamma^{4,20}\otimes (\Gamma^{4,20})^*)\cap \Sigma_{0,0}={\rm End}(\Gamma^{4,20})\cap B^{4,20}(\RR) = B^{4,20}_\Pi(\ZZ) 
\ee is always at least one-dimensional.
Now, deformations of the K3 model correspond to $O(4,20,\RR)$ transformations of $V\cong \Pi\oplus \Pi^\perp$ within $\Gamma^{4,20}\otimes \RR$. While $V$ is an irreducible representation of the orthogonal group $O(4,20,\RR)$ (the vector representation), the tensor product $V\otimes V^*\cong V^{\otimes 2}$ decomposes as $$V^{\otimes 2}=\mathbf{1}\oplus \wedge^2 V\oplus {\rm Sym^2_0}V\ ,$$ where $\mathbf{1}$, $\wedge^2 V$, and ${\rm Sym^2_0}V$ are, respectively, the trivial, the anti-symmetric and the traceless symmetric representations of $O(4,20,\RR)$. 
Under the identification $V\otimes V^*\cong {\rm End}_\RR(V)$, clearly the trivial representation $\mathbf{1}$ corresponds to the identity map, i.e. to the charge $\mathsf{1}$ of the identity defect $\CI$. Because it is invariant under $O(4,20,\RR)$ transformations, the charge $\mathsf{1}\in \Gamma^{4,20}\otimes (\Gamma^{4,20})^*$ is always contained in $\Sigma_{0,0}$. This fits with the obvious fact that the identity defect is not lifted by any deformation of the K3 model. On the other hand, both irreducible representations $\wedge^2 V$, and ${\rm Sym^2_0}V$ have non-trivial intersection with $\Sigma_{0,0}$ and with other $\Sigma_{q,\tilde q}$. Thus, a generic $O(4,20,\RR)$ transformation will mix any vector in the $400$-dimensional space $$(\Sigma_{0,0}\cap \mathsf{1}^\perp) \subset \wedge^2 V\oplus {\rm Sym^2_0}V$$ with vectors in the other components $\Sigma_{q,\tilde q}$. Therefore, generically, we do not expect any other lattice vector in $\Gamma^{4,20}\otimes (\Gamma^{4,20})^*$ to lie along $\Sigma_{0,0}\cap \mathsf{1}^\perp$. This means that the intersection \eqref{lattsigma00} is generically one-dimensional, and contains only the maps $\nL$ that are multiple of the identity. By claim \ref{th:onlyid}, we conclude that $\CL$ is a superposition of identity defects.\\
We stress that this result is not in contrast with the possibility that $\cTop_\Pi$ is non-trivial in a subset of null measure in the moduli space -- in fact, we know that there are families of K3 models with non-trivial symmetry group $G_\Pi$. Indeed, one can tune a $O(4,20,\RR)$ rotation so that the space $\Sigma_{0,0}$ contains a non-trivial sublattice of $\Gamma^{4,20}\otimes (\Gamma^{4,20})^*$. In fact, this happens whenever $\Sigma_{0,0}\cap \mathsf{1}^\perp$ has a non-trivial intersection with the rational space $(\Gamma^{4,20}\otimes (\Gamma^{4,20})^*)\otimes \QQ\subset V\otimes V^*$. Starting from such a point in the moduli space $\CM_{K3}$ of K3 models, corresponding to some $\Pi\subset V$, and acting by a \emph{rational} orthogonal transformation $O(4,20,\QQ)$, one gets another point in $\CM_{K3}$ with non-trivial intersection, and therefore potentially non-trivial defects.  Thus, the subset of $\CM_{K3}$ where the category $\cTop$ is potentially non-trivial is dense in $\CM_{K3}$ (though we stress that, apart from invertible symmetries, we are not able to prove that such potential defects actually exist). Our argument, however, shows that such a subset cannot contain a full open neighborhood of any point in the moduli space, and therefore has zero measure. Indeed, a \emph{real} $O(4,20,\RR)$ transformation of $\Pi\subset V$, even an infinitesimal one, will generically lift any intersection.\footnote{The subset of $\CM_{K3}$ with non-trivial $\cTop_\Pi$, however, might contain submanifolds of strictly positive codimension, corresponding to rotations of $\Pi$ by proper subgroups of the real group $O(4,20,\RR)$. Again, this is compatible with the known fact that there are families of K3 models with the same non-trivial group of symmetries.}

There is one point in our argument where we have been slightly naive. In general, boundary states in $\calC\times\calC^*$ correspond to defects in $\calC$ that are conformal, but not necessarily topological, i.e. they preserve a particular linear combination of the holomorphic and anti-holomorphic stress energy tensor, but not $T(z)$ and $\bar T(\bar z)$ separately. The topological defects in $\cTop$ correspond to a particular choice of gluing conditions for boundary states in $\calC\times\calC^*$, where the two copies of the holomorphic $\CN=4$ algebra are identified with the two anti-holomorphic copies up to an automorphism exchanging them (i.e., they are `permutation branes'). In our treatment, we are ignoring this restriction, and therefore what we get is an upper bound on the actual number of topological defects. 
Boundary states satisfying the correct gluing conditions are in one-to-one correspondence with topological defect operators $\hat\CL:\CH\to\CH$ commuting with the $\CN=(4,4)$ algebra. The requirement that the defect preserves the spectral flow puts further constraints on the R-R charge of the corresponding boundary states. In particular, the boundary state must preserve space-time supersymmetry, so that it must carry a non-zero R-R charge in $\Gamma^{4,20}\otimes (\Gamma^{4,20})^*$. 
The argument above then shows that, outside of a null measure subset in the moduli space, all operators $\hat\CL$ corresponding to a defect $\CL\in \cTop_\Pi$ must act on the space of R-R ground fields as a multiple of the identity. Then, one needs to use claim \ref{th:onlyid} to conclude that all such defects $\CL$ are multiples of the trivial one. 

\bigskip

Let us now comment on other common approaches to determine topological defects in two dimensional conformal field theories. One simple and powerful method pioneered by Petkova and Zuber in \cite{Petkova:2000ip} consists in imposing the analogue of the Cardy condition for boundary states. In practice, a generic defect $\CL$ preserving the chiral and antichiral algebra $\CA\times\bar \CA$ is parametrized in terms of the (unknown) eigenvalues of $\hat\CL$ on the primary fields. One can then compute the torus $\CL$-twined partition function, obtained by inserting the $\CL$-line along the `space' direction of the worldsheet torus, as a function of these parameters. A modular S-transformation relates this $\CL$-twined partition function to the $\CL$-twisted one, where the defect runs along the `time' direction. Because the $\CL$-twisted partition function is just a trace over the  $\CL$-twisted space $\Hh_\CL$, it must decompose into characters of the $\CA\times\bar\CA$ algebra with non-negative integral coefficients. 
Imposing such property on the latter coefficients gives a set of quantization conditions on the unknown parameters of the defect $\CL$, whose solutions form a lattice. In rational CFTs, where there is a finite number of primary fields and of unknown parameters, this method is very effective and very often allows one to determine all possible topological defects in the theory.\footnote{More generally, in order to determine all simple defects in a rational CFT, one needs to impose a Cardy-like condition to all possible products $\CL_1\CL_2^\dual$, for all pairs of simple defects $\CL_1$, $\CL_2$. In the doubled theory, the Petkova-Zuber method just corresponds to imposing the Cardy condition for open strings stretched between any pair of boundary states.} 

The main difficulty in applying this method to study $\cTop_\calC$ in a K3 model $\calC$ is that the theory is not rational with respect to the $\CN=(4,4)$ algebra, so that there infinitely many primary operators and unknown parameters. It is known that the character of the $\CN=4$ algebra (in particular, for the BPS representations) are mock modular forms \cite{Eguchi:1987wf,Eguchi:1988af,Eguchi:1987sm}, and their S-tranformations give rise to a continuum of massive characters. 
As is well-known from the study of boundary states in non-rational theories, imposing Cardy-like conditions to these cases is technically very difficult.

Despite these obstacles, modular properties of torus amplitudes have been successfully used in the past to get information about the action of finite symmetry groups on the space $\Hh$ (or, at least, on some subspace) of states  of a K3 model. In particular, when $\CL$ is an invertible defect preserving the superconformal algebra, one can consider the $\CL$-twined elliptic genus, which is a (weakly) holomorphic Jacobi function for a certain subgroup of the modular group $SL(2,\ZZ)$. Because the space of such Jacobi functions is finite dimensional, the problem is once again reduced to determining a finite number of unknown coefficients. In  particular, these methods were applied in the context of Moonshine conjectures for K3 models \cite{Cheng:2010pq,Eguchi:2010fg,Eguchi:2010ej,Gaberdiel:2010ca,Gaberdiel:2010ch,Paquette:2017gmb}.

One crucial step in these methods is the fact that, for an invertible symmetry of order $n$, the $\CL$-twined genus is modular with respect to a level $n$ congruence subgroup of $SL(2,\ZZ)$. This is a consequence of the fusion relation $\CL^n=\CI$. Furthermore, all possible orders $n$ of these symmetries can be determined by the general classification theorem of \cite{Gaberdiel:2011fg}.  Unfortunately, when $\CL$ is an unknown non-invertible defect in some K3 model, we do not have any information about the relations in the fusion ring generated by $\CL$. In principle, the decomposition of $\CL^k$ might involve new simple defects at every power $k$. For this reason, one cannot predict what the modular properties of the $\CL$-twined genus are. In fact, we hope that the methods developed in the present article will provide enough information about the fusion ring of $\cTop_\calC$ so as to make these methods effective.

\bigskip

Finally, when the CFT $\calC$ can be obtained as a IR fixed of a RG flow from a Landau-Ginzburg model, there are well-developed techniques to study its topological defects and their fusion with supersymmetric boundary conditions, relying in particular on matrix factorization \cite{Behr:2020gqw,Brunner:2014lua,Brunner:2020miu,Brunner:2007qu,Brunner:2009zt,Carqueville:2012st,Carqueville:2012dk}. It would be interesting to apply these methods to K3 models and use them to verify and extend the general results of our article.  %

\section{Topological defects in torus orbifolds}\label{s:torusOrbs}

Many interesting examples of K3 models can be described in terms of orbifolds of supersymmetric sigma models on $T^4$ by some finite group of symmetries. In this section, we will show that for all such K3 models $\calC$ the category $\cTop_\calC$ contains some family of simple defects $\CL_\theta$ parametrized by continuous parameters $\theta$. This result is an immediate generalization of known properties of the orbifold $S^1/\ZZ_2$ of a single free boson on a circle \cite{Becker:2017zai,Chang:2020imq,Fuchs:2007tx,Thorngren:2021yso}.

\subsection{Supersymmetric sigma models on $T^4$}\label{s:generalT4}

A supersymmetric sigma model $\CT$ on $T^4$ contains four holomorphic and four anti-holomorphic $u(1)$ currents $i\partial X^k$, $i\bar\partial X^k$, $k=1,\ldots, 4$, as well as their superpartners, the four holomorphic and anti-holomorphic free fermions $\psi^k$, $\tilde \psi^k$ with weights $(h,\bar h)=(\frac{1}{2},0)$ and $(0,\frac{1}{2})$, respectively. The full current algebra is $\widehat{so}(4)_1\oplus u(1)^4$, where the six currents of $\widehat{so}(4)_1\cong \widehat{su}(2)_1\oplus \widehat{su}(2)_1$ are given by normal-ordered products $:\!\!\psi\psi\!\!:$ of pairs of free fermions.  These fields generate the full chiral and anti-chiral algebra of the $T^4$ model at a generic point in the moduli space -- the algebra can be enhanced at subloci of positive codimension. 

The NS-NS primary operators with respect to this algebra  are the vertex operators $V_\lambda(z,\bar z)$ labeled by vectors \be \lambda\equiv (\lambda_L,\lambda_R)=(\lambda^1_L,\ldots,\lambda^4_L;\lambda^1_R,\ldots,\lambda^4_R)\in \RR^{8}\ ,\ee with conformal weights $h=\frac{\lambda_L\cdot \lambda_L}{2}$ and $\bar h=\frac{\lambda_R\cdot \lambda_R}{2}$. For each given $T^4$ model, the allowed vectors $\lambda\in \RR^8$ are the points of a $8$-dimensional lattice $\Gamma^{4,4}$ (the Narain lattice) that is even and unimodular with respect to the bilinear form
\be (\lambda,\mu):=\lambda_L\cdot\mu_L-\lambda_R\cdot\mu_R\ ,\qquad \lambda,\mu\in\Gamma^{4,4}\ , 
\ee with signature $(4,4)$. Notice that all even unimodular lattices $\Gamma^{4,4}$ with signature $(4,4)$ are isomorphic to each other. There is a (non-unique) choice of basis $\{n_1,\ldots,n_4,w_1,\ldots,w_4\}$ for $\Gamma^{4,4}$ with respect to which the bilinear form is
\be\label{basis44} ( n_i,n_j) =0=( w_i,w_j)\ ,\qquad ( n_i,w_j)=\delta_{ij}\ .\ee The vector $\lambda\in \RR^8$  represents the weights of the corresponding primary state $|\lambda\rangle$ with respect to the $U(1)^4\times U(1)^4$ group generated by the four bosonic currents $i\partial X^k$, $i\bar\partial X^k$. In other words, the invertible defects \be W_\theta\ ,\qquad\qquad \theta\in (\Gamma^{4,4}\otimes\RR)/\Gamma^{4,4}\cong (\RR/\ZZ)^8 \ ,\ee corresponding to the $U(1)^8$ symmetries, commute with the full chiral algebra and act by
\be \hat W_\theta|\lambda\rangle= e^{2\pi i (\theta,\lambda)}|\lambda\rangle\ ,
\ee on the primary state $\lambda$. Geometrically, $U(1)^8\cong U(1)^4\times U(1)^4$ is the product of the $U(1)^4$ group of translations along the four direction of the torus $T^4$, times the $U(1)^4$ group of translations along the T-dual torus.

The RR sector representations of the chiral algebra are labeled by the same vectors $\lambda\in \Gamma^{4,4}$. For each such $\lambda\in \Gamma^{4,4}$, there are $16$ degenerate ground states with conformal weights $h=\frac{\lambda_L\cdot \lambda_L}{2}+\frac{1}{4}$ and $\bar h=\frac{\lambda_R\cdot \lambda_R}{2}+\frac{1}{4}$, which form an irreducible representation of the Clifford algebra of zero modes of the $8$ fermions $\psi^k\,\tilde \psi^k$, $k=1,\ldots,4$. With respect to the holomorphic and antiholomorphic $\widehat{so}(4)_1\cong \widehat{su}(2)_1 \oplus \widehat{su}(2)_1 $ algebras and the corresponding $Spin(4)\cong SU(2)\times SU(2)$ groups, the $16$ ground states decompose into four irreducible representations with weights $(s_1,s_2;\tilde s_1,\tilde s_2)$
\be (\frac{1}{2},0;\frac{1}{2},0)\ , \qquad (\frac{1}{2},0;0,\frac{1}{2})\ , \qquad (0,\frac{1}{2};\frac{1}{2},0)\ , \qquad (0,\frac{1}{2};0,\frac{1}{2})\ .
\ee

The chiral algebra of the $T^4$ model contains several copies of the small $\CN=4$ superconformal algebra, corresponding to a choice of $\widehat{su(2)}_1$ current algebra (the R-symmetry algebra) inside $\widehat{so(4)}_1$; we choose a small $\CN=(4,4)$ once and for all. The four R-R ground states with $\lambda=0$ (i.e. conformal weights $h=\bar h=\frac{1}{4}$) that are charged with respect to both the holomorphic and antiholomorphic $\widehat{su}(2)_1$ algebra are the spectral flow generators, and belong to a single $(\frac{1}{4},\frac{1}{2};\frac{1}{4},\frac{1}{2})$ representation of $\CN=(4,4)$. The remaining $12$ RR ground states belong to two $(\frac{1}{4},\frac{1}{2};\frac{1}{4},0)$, two $(\frac{1}{4},0;\frac{1}{4},\frac{1}{2})$, and four $(\frac{1}{4},0;\frac{1}{4},0)$ representations of $\CN=(4,4)$.

For a generic $T^4$ model $\CT$, the group $G_\CT$ of symmetries preserving the small $\CN=(4,4)$ algebra and the spectral flow generators is 
\be G_\CT\cong U(1)^8\rtimes \ZZ_2\ ,
\ee where the $\ZZ_2$ symmetry $\calR$ is the coordinate reflection
\be\label{refl} \partial X^k\to -\partial X^k\ ,\qquad \bar\partial X^k\to -\bar\partial X^k\qquad \psi^k\to-\psi^k\ ,\qquad\tilde\psi^k\to-\tilde\psi^k\ .
\ee The fusion of $\calR$ with $W_\theta$ is
\be \calR W_\theta=W_{-\theta}\calR\ .
\ee At special loci in the moduli space, $G_\CT$ can be enhanced to a larger group. In general, the group $G_{\CT}$ is always a group extension
\be 1\longrightarrow U(1)^8 \longrightarrow G_\CT \longrightarrow G^0_\CT\longrightarrow 1\ ,
\ee where $G^0_\CT\subset O(\Gamma^{4,4})\cong O(4,4,\ZZ)$ is a finite group of lattice automorphisms, that always contains  the reflection \eqref{refl} as a central element. Physically, $O(\Gamma^{4,4})\cong O(4,4,\ZZ)$ is the group of torus T-dualities, and $G^0_\CT$ is the group self-dualities of the model $\CT$. See \cite{Volpato:2014zla} for a complete classification of the groups $G^0_\CT$.

\bigskip

The orbifold of a $T^4$ model $\CT$ by a non-anomalous finite group of symmetries $H\subset G_{\CT}$ is again a $\CN=(4,4)$ superconformal field theory at $c=\bar c=6$, so it is a sigma model on either K3 or $T^4$. In this case, the elements $g\in G_{\CT}$ that normalize $H$, i.e. such that $gHg^{-1}=H$, commute with $\sum_{h\in H} \CL_h$ and therefore give rise to invertible defects in the orbifold theory $\CT/H$. On the other hand, if $g\in G_{\CT}$ does not commute with $\sum_{h\in H} \CL_h$, then it can induce a non-invertible defect in $\CT/H$. In the next sections, we will see some examples where $H$ is cyclic and the orbifold $\CT/H$ is a K3 model. 

If $H$ is a finite non-anomalous subgroup of $U(1)^8$, then the orbifold $\CT/H$ is again a torus model. 
If the symmetry group of $\CT$ is the generic $G_\CT=U(1)^8\rtimes \ZZ_2$, then every $g\in G_{\CT}$ normalizes\footnote{Notice in particular that $\calR h\calR=h^{-1}$ for all $h\in U(1)^8$.} $H\subset U(1)^8$, and therefore necessarily induces an invertible symmetry in $\CT/H$. On the other hand, if one starts from a model $\CT$ with non-trivial $G^0_\CT$, then one can get non-invertible defects in infinitely many different torus models $\CT/H$, by varying the orbifold subgroups $H\subset U(1)^8$. This is the higher dimensional analogue of a phenomenon observed for a single free boson on $S^1$ \cite{Fuchs:2007tx,Bachas:2012bj}: at the self-dual radius $R_{sd}$ there is a $SU(2)\times SU(2)$ group of symmetries commuting with the Virasoro algebra; by taking orbifolds by suitable non-anomalous $H=\ZZ_N\times \ZZ_M$ groups, one can get $S^1$ models at every rational multiple $\frac{N}{M}R_{sd}$ of the self-dual radius, so that in every such model there are non-invertible topological symmetries labeled by a pair $(g,h)\in SU(2)\times SU(2)$ \cite{Fuchs:2007tx}. A simple $T^4$ model example is given by the product $\CT=(S^{1})^4$ of four $S^1$ free bosons at the self-dual radius (this corresponds to the model denoted as  $A_1^4$ in \cite{Volpato:2014zla}). The model $\CT$ is self-dual under T-duality along each of the four circles; furthermore, T-dualities in an even number of direction preserve the $\CN=(4,4)$ algebra and spectral flow, so that they correspond to elements $g\in G_{\CT}$. By taking orbifolds by finite subgroups $H\subset U(1)^8$, one can get infinitely many $T^4$ models with non-invertible topological defects induced by $g$. We stress that while the subset of such $T^4$ models is dense in the moduli space, it is still of measure zero.

\subsection{Continuous defects in $T^4/\ZZ_2$.}\label{s:contDefects}

Given any torus model $\CT$, the orbifold $\calC=\CT/\ZZ_2$ by the $\ZZ_2$ symmetry \eqref{refl} is a non-linear sigma model on K3. 

A torus orbifold $\calC=\CT/\ZZ_2$ always admits an invertible defect $Q$ (the quantum symmetry) acting by $-1$ on the twisted sector and trivially on the untwisted one. Besides $Q$, it also contains some invertible simple defects that are induced by the simple defects $W_\theta$ of the torus model $\CT$ that commute with $\calR$, i.e such that $\theta\equiv -\theta\mod \Gamma^{4,4}$. More precisely, with each defect $W_{\frac{\lambda}{2}}$, $\lambda\in \Gamma^{4,4}/2\Gamma^{4,4}$ of $\CT$, are associated two defects
$\eta_\frac{\lambda}{2},\eta'_{\frac{\lambda}{2}}\equiv Q\eta_\frac{\lambda}{2}$ of $\calC$. In particular, for $\lambda=0$, one has $\eta_0=I$ (the identity defect) and $\eta'_0=Q$. 

While the defects $W_{\frac{\lambda}{2}}$, $\lambda\in \Gamma^{4,4}$, generate an abelian group of symmetries $\ZZ_2^8\subset U(1)^8$ of the torus model $\CT$, the group generated by the $\eta_{\lambda/2}$ is a non-abelian extension of $\ZZ_2^8$ by a central $\ZZ_2$ generated by the quantum symmetry $Q$. Consider a basis $\{n_1,\ldots,n_4,w_1,\ldots,w_4\}$ for $\Gamma^{4,4}$ as in \eqref{basis44}. The defects $\eta_{n_i}$, $\eta_{w_i}$ and $Q$ obey the following relations
\be \eta_{\frac{n_i}{2}}^2=I=\eta_{\frac{w_i}{2}}^2\ ,\qquad Q^2=1\ ,\qquad \eta_{\frac{n_i}{2}}Q=Q\eta_{\frac{n_i}{2}}\ ,\qquad \eta_{\frac{w_i}{2}}Q=Q\eta_{\frac{w_i}{2}}
\ee
 If $\lambda=\sum_i (a_in_i+b_iw_i)$, $a_i,b_i\in \ZZ/2\ZZ$, is an element of $\Gamma^{4,4}/2\Gamma^{4,4}$, then we define
\be \eta_\frac{\lambda}{2}=\eta_{\frac{1}{2}\sum_i (a_in_i+b_iw_i)}:=\eta_{\frac{a_1n_1}{2}}\cdots \eta_{\frac{a_4n_4}{2}}\eta_{\frac{b_1w_1}{2}}\cdots \eta_{\frac{b_4w_4}{2}}\ , \qquad \eta'_\frac{\lambda}{2}:=Q\eta_\frac{\lambda}{2}\ .
\ee We have the fusion rules
\be Q^2=I\ ,\qquad  \eta_{\frac{\lambda}{2}} Q=Q\eta_{\frac{\lambda}{2}} \ ,\qquad (\eta_{\frac{\lambda}{2}})^2=Q^{(\lambda,\lambda)/2}
\ee  which imply
\be \eta_{\frac{n_i}{2}}^2=I=\eta_{\frac{w_i}{2}}^2\ ,\qquad \eta_{\frac{\lambda}{2}}\eta_{\frac{\mu}{2}}=Q^{(\lambda,\mu)}\eta_{\frac{\mu}{2}}\eta_{\frac{\lambda}{2}}
\ee Different choices of the basis $\{n_1,\ldots,n_4,w_1,\ldots,w_4\}$ just lead to exchanging $\eta_{\frac{\lambda}{2}}\leftrightarrow \eta'_{\frac{\lambda}{2}}$ for some values of $\theta$ -- this is just a relabeling of the defects. Altogether, these invertible elements have group-like fusion rules, corresponding to an extraspecial group $2^{1+8}$ \cite{Atlas}
\be 1\longrightarrow \langle Q\rangle\cong\ZZ_2 \longrightarrow 2^{1+8}\longrightarrow \ZZ_2^8\longrightarrow 1\ .
\ee

\bigskip

Besides invertible defects, torus orbifolds $\CT/\ZZ_2$ always contain a continuum of defects $T_\theta$, parametrized by $\theta \in ((\RR/ \ZZ)^8)/{\pm 1}$, that preserve the $\CN=(4,4)$ algebra and the spectral flow generators, and that are induced by the $\calR$-invariant superposition $W_\theta+W_{-\theta}$ of topological defects  of the torus model $\CT$. The defects $T_\theta$ have dimension $2$ and satisfy the fusion rules
$$ T_\theta T_{\theta'}=T_{\theta+\theta'}+T_{\theta-\theta'}\ ,
$$ While $W_\theta+W_{-\theta}$ are clearly a superposition, the $T_\theta$ are actually simple for generic values of $\theta$. The only exceptions are when $\theta$ is one of the $\calR$-fixed points $\theta=\frac{\lambda}{2} \in \frac{1}{2}\Gamma^{4,4}/\Gamma^{4,4}$, and in this case they are superpositions
\be T_\frac{\lambda}{2}=\eta_\frac{\lambda}{2}+Q\eta_\frac{\lambda}{2}\ .
\ee 
Notice that $T_0=\CI+Q$, so that
\be\label{Tthetasquare} (T_\theta)^2=T_0+T_{2\theta}=\CI+Q+T_{2\theta}\ ,
\ee that implies that $T_\theta$ is unoriented, $(T_\theta)^\dual=T_{\theta}$. The fusion with the invertible defects is
\be QT_\theta=T_\theta Q=T_\theta\ ,\qquad \eta_{\frac{\lambda}{2}}T_\theta=T_\theta\eta_{\frac{\lambda}{2}}=T_{\theta+\frac{\lambda}{2}}\ ,
\ee where in the last identity, one uses $\frac{\lambda}{2}\equiv -\frac{\lambda}{2} \mod \Gamma^{4,4}$, so that $T_{\theta+\frac{\lambda}{2}}=T_{\theta-\frac{\lambda}{2}}$.

According to \cite{Thorngren:2021yso}, the operators $\hat T_{\theta}$ associated with these defects act on the untwisted sector as the operators $\hat W_\theta+\hat W_{-\theta}$ in the original theory, while they annihilate the twisted sector. On the RR ground states, all $\hat T_\theta$ act by twice the identity on the states in the untwisted sector, and annihilate all the states in the twisted sector.

If $2\theta\in (\Gamma^{4,4}\otimes\RR)/\Gamma^{4,4}$ is a  $\calR$-fixed point, i.e. if $2\theta\equiv -2\theta\mod \Gamma^{4,4}$, then $\theta=\frac{\lambda}{4}$ for some $\lambda\in \Gamma^{4,4}$, and the fusion product \eqref{Tthetasquare} becomes
\be\label{lambdafour} (T_\frac{\lambda}{4})^2=\CI+Q+\eta_{\frac{\lambda}{2}}+Q\eta_{\frac{\lambda}{2}}\ .
\ee The right-hand side is just the sum over all invertible defects in the order $4$ group generated by $Q$ and $\eta_{\frac{\lambda}{2}}$ (this is either $\ZZ_2\times \ZZ_2$ or $\ZZ_4$, depending on whether $(\eta_\frac{\lambda}{2})^2$ equals $\CI$ or $Q$, i.e. if $(\lambda,\lambda)$ equals $0$ or $2$ mod $4$). Therefore, $T_{\frac{\lambda}{4}}$ is a duality defect, providing an equivalence of our theory to the orbifold by this subgroup.

\subsection{Topological defects in $T^4/\ZZ_N$ models}

Let us generalize the construction of the previous sections to the case of a K3 model $\calC$ obtained as the orbifold $\CT/\ZZ_N$ of a torus model $\CT$ by a cyclic group $\langle g\rangle\cong\ZZ_N$. 

In order for $\CT/\ZZ_N$ to be a K3 model, all holomorphic and anti-holomorphic fields of spin $1/2$ must be projected out. Because the $U(1)^4\times U(1)^4$ group generated by $W_\theta$ acts trivially on such fields, this implies that $g$ must be a lift of an automorphism of the Narain lattice $\Gamma^{4,4}$. Furthermore, because the eight fermions transform in the same representation as $\Gamma^{4,4}\otimes \RR$, we require $g$ to have no fixed vectors in such representation. Notice that, for $N>2$, the group $G_\CT$ preserving the small $\CN=(4,4)$  algebra and spectral flow contains such lifts only at special loci in the moduli space of torus models. All such models and symmetries were classified in \cite{Volpato:2014zla}.

The quantum symmetry $Q$ has order $N$ and acts by multiplication by a phase $e^{\frac{2\pi ik }{N}}$ on the $g^k$-twisted sector. The orbifold procedure can be described as the gauging of the finite group $\langle g\rangle$, and defect line $Q^k$ is interpreted as a Wilson line  associated with the $1$-dimensional representation $\rho_k$ of $\langle g\rangle$, where $\rho_k(g)=e^{\frac{2\pi ik}{N}}$. In particular, operators that were local in the original torus model $\CT$ and transforming in the $\rho_k$ representation of $\langle g\rangle$ become operators in the defect space $\CH_{Q^k}$ in the orbifold $\calC$.

For each $\theta\in \Gamma^{4,4}\otimes \RR/\Gamma^{4,4}$, there  is a continuum of topological defects $T_\theta\in \cTop_\calC$ of dimension $N$ induced by a superposition of defects $W_{g^k(\theta)}$ of the torus model $\CT$
\be W_\theta+W_{g(\theta)}+\ldots+W_{g^{N-1}(\theta)}\longrightarrow T_\theta\ .
\ee  The defect $T_\theta$ only depends on the orbit of $\theta$ with respect to the action of $\langle g\rangle$, and its dual is $(T_\theta)^\dual=T_{-\theta}$. For generic $\theta$, when the $\ZZ_N$-orbit has $N$ distinct elements $g^k(\theta)$, $T_\theta$ is simple, while it decomposes
into a sum of $N/d$ simple defects when $\theta$ has a non-trivial stabiliser $\ZZ_{N/d}\subset \ZZ_N$. More precisely, when the stabiliser subgroup $Stab(\theta)$ is  $\langle g^{d}\rangle\cong\ZZ^{N/d} $ for some $d|N$, there are $N/d$ simple defects $\eta_{\theta},Q\eta_\theta,Q^{2}\eta_\theta,\ldots,Q^{N/d-1}\eta_\theta$  of dimension $d$ induced by
\be \sum_{k=0}^{d-1} W_{g^k(\theta)}\longrightarrow \eta_\theta
\ee
and we have a decomposition
\be T_\theta= \eta_\theta+ Q\eta_\theta +\ldots +Q^{N/d-1}\eta_\theta \ .
\ee This decomposition can be understood as follows: when $g^d(\theta)=\theta$, each defect space $\CH_{W_{g^k(\theta)}}$ in the torus model, with $k=0,\ldots,d-1$,  carries a non-trivial representation of $\langle g^d\rangle$, and operators in $\CH_{W_{g^k(\theta)}}$ with different $g^d$-charge must belong to different (non-isomorphic) irreducible modules  of the orbifold theory $\calC$. The OPE with defect operators in $\CH_Q$ modifies the $g^d$-charge, and therefore maps each of these $N/d$ modules into one another. Therefore, the defect lines for these modules can be written as $Q^k\eta_\theta\equiv \eta_\theta Q^k$, $k=0,\ldots,N/d-1$. We notice that operators in $Q^{N/d}$ have trivial $g^d$ charge, so that we have the fusion rule
\be Q^{N/d}\eta_\theta=\eta_\theta Q^{N/d}=\eta_\theta\ .
\ee This fusion rule also implies that the linear operator $\hat\eta_\theta:\CH\to\CH$ associated with $\eta_\theta$ must annihilate all $g^k$-twisted states, unless $k$ is a multiple of $d$. Similarly, the fusion rule
\be QT_\theta=T_\theta Q=T_\theta\ ,
\ee implies that $\hat T_\theta$ annihilates all $g^k$-twisted sectors, for all $k\neq 0\mod N$. 
 
 The fusion rules are
\be T_\theta T_{\theta'}=\sum_{k=0}^{N-1} T_{\theta+g^k(\theta')}\ ,
\ee and, in particular, for generic $\theta$,
\be T_\theta T_{-\theta}=T_0+\sum_{k=1}^{N-1} T_{\theta-g^k(\theta)}=1+Q+\ldots+Q^{N-1}+\sum_{k=1}^{N-1} T_{(1-g^k)(\theta)}\ .
\ee
Finally, because  the reflection \eqref{refl} commutes with every automorphism of the lattice $\Gamma^{4,4}$, we have that $\calR$ is also an invertible topological defect of the orbifold model $\calC$. 

\bigskip
The elements $\theta$ that are stabilised by $g$ are all the solutions $x\in (\Gamma^{4,4}\otimes\RR)/\Gamma^{4,4}$ of the equation
\be (1-g)(x)\equiv 0\mod  \Gamma^{4,4}\ .
\ee Because by hypothesis $g$ has no fixed vectors on $\Gamma^{4,4}\otimes\RR$ then
\be 1+g+\ldots g^{N-1}=0\ ,
\ee and $(1-g)$ is invertible, with inverse
\be (1-g)^{-1}=-\frac{1}{N}(g+2g^2+\ldots+(N-1)g^{N-1})\ ,
\ee as can be easily verified. Thus, the fixed vectors $x$ are all elements of 
\be ((1-g)^{-1}\Gamma^{4,4})/\Gamma^{4,4}\subseteq (\frac{1}{N}\Gamma^{4,4})/\Gamma^{4,4}\ .
\ee The number of distinct points in this quotient is given by $\det(1-g)$ and can be easily computed once the eigenvalues of $g$ are known.

Consider for example the case when $N$ is prime. The analysis in \cite{Volpato:2014zla} shows that the possible values are $N=2,3,5$ (for $N=5$, the symmetry $g$ has no geometric interpretation as an automorphism of the target $T^4$ torus.). In this case, for any $\theta\in (\Gamma^{4,4}\otimes\RR)/\Gamma^{4,4}$, either the orbit of $\theta$ contains $N$ distinct elements,  or $\theta$ is fixed by the full group $\langle g\rangle$. For each $N$, the eigenvalues of $g$ are all primitive $N$-th roots of unity with the same multiplicity \cite{Volpato:2014zla}.  Therefore, the number $\det(1-g)$ of distinct points in the quotient $((1-g)^{-1}\Gamma^{4,4})/\Gamma^{4,4}$ equals $2^8$ for $N=2$, $3^4$ for $N=3$ and $5^2$ for $N=5$.
The corresponding simple defects $\eta_{x}$ are invertible, and together with $Q$ they generate some non-abelian central extension of $\ZZ_2^8$, $\ZZ_3^4$ and $\ZZ_5^2$ called extraspecial groups $2^{1+8}$, $3^{1+4}$ and $5^{1+2}$ for $N=2$, $N=3$ and $N=5$, respectively:
\be 1\longrightarrow \langle Q\rangle\cong \ZZ_N\longrightarrow N^{1+k}\longrightarrow \ZZ_N^k\to 1\ . 
\ee In particular, two lines $\eta_{x}$ and $\eta_{x'}$ do not necessarily commute 
\be\label{notabelian} \eta_{x}\eta_{x'}=Q^{c_g(x,x')}\eta_{x'}\eta_{x}\ .
\ee
The $2$-cocycle $c_g(x,x')$ \eqref{notabelian} characterizing the central extension $N^{1+k}$ are determined by the 't Hooft anomaly in the $U(1)^8$ symmetry of the torus model, see \cite{Bantay:1990yr,Roche:1990hs,Dijkgraaf:1989pz,Tachikawa:2017gyf}.

Similar arguments hold when $N$ is not prime. The possible values of $N$, in this case, are $4,6,8,10,12$ (they are the values $o(\pm g_0)$ in table 2 of \cite{Volpato:2014zla}; the eigenvalues of $g$ when acting on the space $\Gamma^{4,4}\otimes \RR$ are denoted by $\pm\zeta_L,\pm\zeta_R,\pm\zeta_L^{-1},\pm\zeta_R^{-1}$ in the same table). For each $d|N$, consider the element $g^d$. If $1$ is an eigenvalue of $g^d$, then the points $x\in (\Gamma^{4,4}\otimes\RR)/\Gamma^{4,4}$ stabilized by $g^d$ form a continuum, corresponding to the $g^d$-fixed subspace of $\Gamma^{4,4}\otimes \RR$. On the other hand, suppose that none of the eigenvalues of $g^d$ equals $1$, so that $(1-g^d)$ is invertible. In this case, the $g^d$ fixed vectors $x\in \Gamma^{4,4}\otimes\RR/\Gamma^{4,4}$ are the $\det(1-g^d)$ elements in the quotient $((1-g^d)^{-1}\Gamma^{4,4})/\Gamma^{4,4}$. When $d=1$, the corresponding defects $\eta_x$ are invertible, and form a group that is a central extension of some $\ZZ_N^k$ by the quantum symmetry $Q$.

\bigskip

Consider the case where $g$ has  \emph{prime} order $N$, and $(1-g)$ is invertible.
Let $x\in (\Gamma^{4,4}\otimes \RR)/\Gamma^{4,4}$, $x\neq 0$, be a non-trivial fixed point of $g$, i.e. such that $x\equiv g(x)\mod \Gamma^{4,4}$; this is the image $x=(1-g)^{-1}\lambda$ of some vector $\lambda\in\Gamma^{4,4}$. If we apply the operator $(1-g)^{-1}$ once again to $x$, we obtain a vector
\be v:=(1-g)^{-1}x\ ,
\ee with some special properties. In particular, we get
\be (1-g^k)v=(1+g+\ldots+g^{k-1})(1-g)(1-g)^{-1}x=kx\mod \Gamma^{4,4}\ .
\ee Let us define the simple defect
\be \CN_x:=\calR T_{v}=\calR \sum_{k=0}^{N-1} W_{g^k(v)}=\calR\sum_{k=0}^{N-1} W_{v-kx}\ .
\ee We get
\begin{align} \CN_x^2&=\calR T_{v}\calR T_{v}=T_{-v}T_v=\sum_{k=0}^{N-1}T_{v-g^k(v)}=\sum_{k=0}^{N-1}T_{kx}\notag\\&=(1+\eta_x+\ldots+\eta_{(N-1)x})(1+Q+\ldots+Q^{N-1})\ ,
\end{align} and
\be \CN_x\eta_x=\calR T_{v}\eta_x=\calR T_{v+x}=\calR \sum_{k=0}^{N-1} W_{v-kx+x}=\calR \sum_{k=0}^{N-1} W_{v-(k-1)x}=\CN_x
\ee
Therefore, $\CN_x$ is a duality defect for the abelian group of order $N^2$ generated by $\eta_x$ and $Q$. This group could be isomorphic to either $\ZZ_N\times\ZZ_N$ or $\ZZ_{N^2}$, depending on the norm of $x$. This argument shows that the torus orbifold $\CT/\langle g\rangle$ is self-orbifold with respect to any such group of symmetries.

\bigskip

More generally, one expects a continuum of defects in $\cTop_\calC$ whenever $\calC$ can be described as an orbifold of a torus model $\CT$ by some (possibly non-abelian) symmetry group $G$ of $\CT$. Such defects are induced by superpositions $\sum_{g\in G} W_{g(\theta)}$ of topological defects of torus models, and are simple for generic values of $\theta$.

\subsection{An example of torus orbifold $T^4/\ZZ_4$}\label{s:T4Z4}

Let us consider an example of the general theory described in the previous section. We consider a torus model $\CT$ with a large group $U(1)^8\rtimes G^0_\CT$ of symmetries commuting with a small $\CN=(4,4)$ superconformal algebra, where $G^0_\CT$ has order $192$. This is the torus model described in section 4.4.1 of \cite{Volpato:2014zla}, and contains a $\widehat{so}(8)_1$ chiral and anti-chiral algebra. The $\ZZ_2$ orbifold by $\calR$ (the centre of $G^0_\CT$) gives the K3 model with the `largest symmetry group' \cite{Gaberdiel:2013psa,Harvey:2020jvu} that we will consider in section \ref{s:Z28M20}.

The winding-momentum lattice $\Gamma^{4,4}$ of this model is given by the columns of the following matrix
\be\frac{1}{\sqrt{2}}\left(
\begin{array}{cccccccc}
 2 & 0 & 0 & -1 & 1 & 1 & 1 & 1 \\
 0 & 2 & 0 & -1 & 1 & 0 & 0 & 1 \\
 0 & 0 & 2 & -1 & 0 & 1 & 0 & 0 \\
 0 & 0 & 0 & -1 & 0 & 0 & 1 & 0 \\
 0 & 0 & 0 & 0 & 1 & 1 & 1 & 1 \\
 0 & 0 & 0 & 0 & 1 & 0 & 0 & -1 \\
 0 & 0 & 0 & 0 & 0 & 1 & 0 & 0 \\
 0 & 0 & 0 & 0 & 0 & 0 & 1 & 0 \\
\end{array}
\right)
\ee In each column, the first four coordinates are the eigenvalues $\lambda_L^1,\ldots,\lambda_L^4$ of the zero modes of $\partial X^1,\ldots, \partial X^4$, while the last four are the eigenvalues $\lambda_R^1,\ldots,\lambda_R^4$ of the zero modes of $\bar\partial X^1,\ldots, \bar\partial X^4$.

Notice that the vertex operators $V_\lambda(z)$ corresponding to the first four columns are holomorphic currents ($h=\frac{\lambda_L\cdot\lambda_L}{2}=1$, $\bar h=\frac{\lambda_R\cdot\lambda_R}{2}=0$), which together with $\partial X^k$ generate the $\widehat{so}(8)_1$ current algebra.

Let us focus on a symmetry $g\in G^0_\CT$ of order $N=4$, acting by
\be \left(
\begin{array}{cccccccc}
 -1 & 0 & 0 & 0 & 0 & 0 & 0 & 0 \\
 0 & -1 & 0 & 0 & 0 & 0 & 0 & 0 \\
 0 & 0 & -1 & 0 & 0 & 0 & 0 & 0 \\
 0 & 0 & 0 & -1 & 0 & 0 & 0 & 0 \\
 0 & 0 & 0 & 0 & 0 & 1 & 0 & 0 \\
 0 & 0 & 0 & 0 & -1 & 0 & 0 & 0 \\
 0 & 0 & 0 & 0 & 0 & 0 & 0 & 1 \\
 0 & 0 & 0 & 0 & 0 & 0 & -1 & 0 \\
\end{array}
\right)
\ee on $\Gamma^{4,4}\otimes \RR$. One can easily verify that this is a lattice automorphism, $g\in O(\Gamma^{4,4})$, and that $\det(1-g)=64\neq 0$. This is a symmetry in the class $-4A$, in the notation of \cite{Volpato:2014zla}, and the orbifold $\CT/\langle g\rangle$ is again the K3 model with largest symmetry group that we will describe in section \ref{s:Z28M20} (see section $6$ of \cite{Gaberdiel:2013psa}). 

Let us consider the $g$-fixed points in $\Gamma^{4,4}\otimes\RR/\Gamma^{4,4}$. As explained in the previous section they correspond to the points
\be ((1-g)^{-1}\Gamma^{4,4})/\Gamma^{4,4}\cong \ZZ_2\times \ZZ_2\times \ZZ_4\times \ZZ_4
\ee with generators
\be y_1=\begin{pmatrix}
    0 \\ 1 \\ 0 \\ 0 \\ 0 \\ 0 \\ 0 \\ 0
\end{pmatrix}\qquad y_2=\begin{pmatrix}
 -{1}/{2} \\ -{1}/{2} \\ -{1}/{2} \\ -{1}/{2} \\ 0 \\ 0 \\ 0 \\ 0 
\end{pmatrix}\qquad 
u_1=\begin{pmatrix}
 {3}/{2} \\ {3}/{2} \\ 0 \\ 0 \\ 1 \\ 0 \\ 0 \\ 0 
\end{pmatrix}\qquad 
u_2=\begin{pmatrix}
 {3}/{2} \\ 0 \\ 1 \\ {1}/{2} \\ {3}/{2} \\ {1}/{2} \\ {1}/{2} \\
   {1}/{2} 
\end{pmatrix}\ ,
\ee where $y_1$ and $y_2$ have order $2$, while $u_1$ and $u_2$ have order $4$ (modulo $\Gamma^{4,4}$).

The invertible topological defects $W_x$ of the torus model $\CT$, where $x\in ((1-g)^{-1}\Gamma^{4,4})/\Gamma^{4,4}$ are $g$-fixed points, induce invertible defects $\eta_x,Q\eta_x,Q^2\eta_x,Q^3\eta_x$ in the K3 model $\CT/\langle g\rangle$, where $Q$ is the quantum symmetry of order $4$. The group they generate is a central extension of $((1-g)^{-1}\Gamma^{4,4})/\Gamma^{4,4}\cong \ZZ_2\times \ZZ_2\times \ZZ_4\times \ZZ_4$ by $\langle Q\rangle\cong \ZZ_4$. In order to determine the central extension in detail, one needs to know the 't Hooft anomaly for the abelian group
\be H=\langle g\rangle \times ((1-g)^{-1}\Gamma^{4,4})/\Gamma^{4,4}\cong \ZZ_4\times \ZZ_2\times \ZZ_2\times \ZZ_4\times \ZZ_4\ .
\ee The anomaly is encoded in a cohomology class $[\omega]\in H^3(H,U(1))$, with representative a $3$-cocycle $\omega:H\times H\times H\to U(1)$. It is known that 
\be H^3(\ZZ_2^2\times \ZZ_4^3,U(1))\cong \ZZ_2^{18}\times\ZZ_4^{7}\ ,\ee and a basis of generators can be found, for example, in \cite{deWildPropitius:1995cf}. In order to determine which class in $H^3(H,U(1))$ is relevant in this case, one can just consider, for each $k\in H$, the failure of the level matching condition for the $k$-twisted sector of $\CT$. In particular, if $k\in H$ has order $o(k)$, the level matching condition is satisfied if the spin ${spin}(k)$ (i.e. the difference $h-\bar h$ of conformal weights) of the $k$-twisted states take values in $\frac{1}{o(k)}\ZZ$; in this case, the restriction of the anomaly class $[\omega]$ to the cyclic group $\langle k\rangle$ is trivial. More generally, if the restriction of $[\omega]$ to $\langle k\rangle$ is non-trivial, one has the following relation \cite{Bantay:1990yr,Coste:2000tq,Roche:1990hs,Dijkgraaf:1989pz,Frohlich:2009gb}
\be\label{spincocy} e^{2\pi i o(k)spin(k)}=\prod_{i=1}^{o(k)-1}\omega(k,k^i,k)\ ,
\ee between the spin and the cocycle $\omega$. 
If we know the spin of the $k$-twisted sectors for all $k\in H$, we can use \eqref{spincocy} to determine the class $[\omega]$. In fact, for elements $x\in ((1-g)^{-1}\Gamma^{4,4})/\Gamma^{4,4}\subset H$ the spin of the $x$-twisted ground state is
\be {\rm spin}(x)=\frac{(x,x)}{2}\ .
\ee On the other hand, the cohomology class $[\omega]$ is trivial when restricted to the cyclic subgroup $\langle g\rangle\subset H$ (see \cite{Volpato:2014zla}). More generally, the restriction of $[\omega]$ to any cyclic group of the form $\langle gx\rangle\subset H$ is trivial for all $x\in ((1-g)^{-1}\Gamma^{4,4})/\Gamma^{4,4}$, because $g$ and $gx$ are conjugate within the group $G$. These data are sufficient to determine the class $[\omega]$ uniquely.

Once the cocycle $\omega$ representing the 't Hooft anomaly of $H$ is known, one can determine the central extension of $((1-g)^{-1}\Gamma^{4,4})/\Gamma^{4,4}$  by the quantum symmetry $Q$ that is induced on the orbifold theory $\CT^4/\langle g\rangle$. In particular, the commutation relations are
$$
    \eta_{y_1}\eta_{y_2}=Q^2\eta_{y_2}\eta_{y_1}\ , \qquad\qquad \eta_{u_1}\eta_{u_2}=Q\eta_{u_2}\eta_{u_1}\ ,\qquad\qquad
    \eta_{y_j}\eta_{u_k}=Q^2\eta_{u_k}\eta_{y_j}\ ,\qquad j,k=1,2\ .
$$ Furthermore, while $\eta_{y_1}$ and $\eta_{y_2}$ can be chosen of order $2$
\be \eta_{y_1}^2=1=\eta_{y_2}^2\ ,
\ee the symmetries $\eta_{u_1}$ and $\eta_{u_2}$ have order $8$
\be \eta_{u_1}^4=Q^2=\eta_{u_2}^4\ ,\qquad\qquad Q^4=1\ .
\ee

The orbifold theory $\CT/\langle g\rangle$ contains a continuum of topological defects $T_\theta$ of dimension $4$, with $\theta\in (\Gamma^{4,4}\otimes\RR)/\Gamma^{4,4}$, that are simple for generic $\theta$. These defects, as well as the defects of the form $Q^kT_\theta$, $k=0,1,2,3$, are induced by the superpositions $\sum_{j=0}^3 W_{g^j(\theta)}$ in the torus model
\be\label{WWWWT} W_\theta+W_{g(\theta)}+W_{g^2(\theta)}+W_{g^3(\theta)} \longrightarrow T_\theta,\ QT_\theta,\ Q^2T_\theta,\ Q^3T_\theta\ . 
\ee When $\theta\equiv x\in  ((1-g)^{-1}\Gamma^{4,4})/\Gamma^{4,4}$ is one of the $g$-fixed points, $T_\theta\equiv T_x$ is not simple, but becomes a superposition of four invertible defects
\be T_x=\eta_x+Q\eta_x+Q^2\eta_x+Q^3\eta_x\ , 
\ee and satisfies $QT_x=T_x$.

An intermediate case occurs when $\theta$ is fixed by $g^2$, but not by $g$. This happens when $\theta$ is of the form $\theta=(\theta_L,0)$, so that $g(\theta)=-\theta$. In this case, $T_\theta$ decomposes as a sum
\be T_\theta=\xi_{\theta}+Q\xi_{\theta}\ ,
\ee where $\xi_{\theta}$ and $Q\xi_\theta$ are dimension two defects induced by $W_\theta+W_{-\theta}$ on $\CT$
\be\label{WWxi} W_\theta+W_{-\theta} \longrightarrow \xi_\theta,Q\xi_\theta\ .\ee
The defects $\xi_\theta$, where $\theta=(\theta_L,0)$, are simple for generic $\theta_L$, while they decompose as $\xi_x=\eta_x+Q\eta_x$ when $x\equiv (\theta_L,0)$ is fixed by $g$. Furthermore, the $\xi_\theta$ are always unoriented and satisfy
\be Q^2\xi_\theta=\xi_\theta Q^2=\xi_\theta\ ,\qquad\qquad \xi_\theta\xi_{\theta'}=\xi_{\theta+\theta'}+\xi_{\theta-\theta'}\ .
\ee

As described in the previous section, for each $g$-fixed vector $x\in ((1-g)^{-1}\Gamma^{4,4})/\Gamma^{4,4}$, we can consider $v=(1-g)^{-1}x$. There are two cases to be considered, depending on whether $x$ is fixed by $g^2$ or not. In the first case, $x$ is a linear combination
\be x=a_1y_1+a_2y_2+2b_1u_1+2b_2u_2\ ,
\ee with $a_1,a_2,b_1,b_2\in \{0,1\}$. Notice the $x$ has always order $2$ in this case, $2x\in \Gamma^{4,4}$, and $g(x)=-x\mod \Gamma^{4,4}$. In this case, we have
\be v=a_1\begin{pmatrix}
    0 \\ {1}{2} \\ 0 \\ 0 \\ 0 \\ 0 \\ 0 \\ 0\end{pmatrix}+a_2\begin{pmatrix}
       -{1}/{4} \\ -{1}/{4} \\ -{1}/{4} \\ -{1}/{4} \\ 0 \\ 0 \\ 0 \\ 0 
    \end{pmatrix} + b_1\begin{pmatrix}
       {3}/{4} \\ {3}/{4} \\ 0 \\ 0 \\ 0 \\ 0 \\ 0 \\ 0
    \end{pmatrix}+b_2\begin{pmatrix}
       -1/2 \\ 0 \\ 0\\ -1/2  \\ 0 \\ 0 \\ 0 \\ 0 
    \end{pmatrix}
\ee so that $g(v)=-v$ and $g^2(v)=v$. Thus, in the orbifold theory $\CT/\langle g\rangle$ there are two dimension $2$ unoriented simple topological defects, $\xi_v$ and $Q\xi_v$, induced by the defect $W_{v}+W_{-v}$ of $\CT$, as in \eqref{WWxi}.
 The fusion of $\xi_v$ and $Q\xi_v$ with themselves gives
\be \xi_v^2=(Q\xi_v)^2=\xi_0+\xi_x=1+Q^2+\eta_x+Q^2\eta_x\ .
\ee Furthermore, because $v+x\equiv -v\mod \Gamma^{4,4}$, one gets
\be \xi_v\eta_x=\eta_x\xi_v=\xi_v\ .
\ee Therefore, both $\xi_v$ and $Q\xi_v$ are duality defects for the $\ZZ_2\times \ZZ_2$ group generated by $\eta_x$ and $Q^2$.

In the second case, where $x$ is not fixed by $g^2$, we have
\be x=a_1y_1+a_2y_2+c_1u_1+c_2u_2\ ,
\ee where $a_1,a_2,c_1,c_2\in \{0,1\}$, with at least one among $c_1$ and $c_2$ being odd. In this case, $x$ has order $4$, i.e. $4x\in \Gamma^{4,4}$, and $v:=(1-g)^{-1}x$ is not fixed by either $g$ or $g^2$. Thus, the orbifold theory $\CT/\langle g\rangle$ contains four dimension $4$ defects $Q^kT_v$, $k=0,1,2,3$, induced by $\sum_{k=0}^3W_{g^k(v)}$, as in \eqref{WWWWT}. The treatment is analogous to the case described in the previous section, where the order $N$ of $g$ was a prime number. One defines
\be \CN_x:=\calR T_v\ ,
\ee and the resulting topological defect $\CN_x$ is unoriented and satisfies
\be \CN_x^2= (1+Q+Q^2+Q^3)(1+\eta_x+\eta_{2x}+\eta_{3x})
\ee Thus, it is a duality defect for the $\ZZ_2\times \ZZ_8$ group with generators $\eta_x$ of order $8$ and $Q\eta_x^2$ of order $2$.

\subsection{K3 models with continuous defects: a conjecture}\label{s:conjecture}

In the previous sections, we have seen that all K3 models $\calC$ that can be described as torus orbifolds contain a continuous family of topological defects $\CL_\lambda\in \cTop_\calC$, that preserve the $\CN=(4,4)$ algebra and spectral flow, and that are simple for generic values of $\lambda$. 
On the other hand, in claim \ref{th:generic} we argued that for a generic K3 model $\cTop_\calC$ is trivial, and in particular has no such family of defects. Thus, it is natural to look for a characterisation of the K3 models admitting such continuous families.
We propose the following conjecture:

\begin{conjecture}\label{th:onlytori} Let $\calC$ be a K3 model that admits a continuous family of topological defects $\CL_\lambda\in \cTop_\calC$, preserving the $\CN=(4,4)$ superconformal algebra and spectral flow, and simple for all $\lambda$ except possibly a zero measure set. Then, $\calC$ is the (generalised) orbifold of a torus model $\CT$.
\end{conjecture}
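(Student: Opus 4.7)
The strategy is to reconstruct a torus model $\CT$ for which $\calC$ appears as a generalized orbifold $\CT/\mathbf{A}^\vee$, reversing the constructions of sections \ref{s:contDefects}--\ref{s:T4Z4}. I would fix an open dense subset $U$ of the parameter space on which $\CL_\lambda$ is simple. By Claim \ref{th:onlyid} together with the observation that $\lambda \mapsto \nL$ is valued in the discrete set $B^{4,20}_{\Pi,+}(\ZZ)$, continuity forces $\nL$, and in particular the quantum dimension $\langle \CL_\lambda \rangle = d$, to be constant on each connected component of $U$. Thus the continuous parameter encodes pure defect data invisible to the action on R-R ground fields, and it is precisely this hidden structure that must organise itself into a Narain-like lattice.

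I would next argue that, after rescaling the parameter, $U$ carries a topological abelian group structure compatible with fusion. Simplicity forces $\CL_\lambda\CL_{\lambda_0}^\dual$ to contain $\CI$ with multiplicity exactly one for $\lambda$ close to $\lambda_0$, and associativity together with continuity constrain the local decomposition to the form $\CL_\lambda\CL_\mu^\dual = \CL_{\lambda-\mu} + \sum_i n_i(\lambda,\mu)\CM_i$ for a finite collection of rigid "resonance" defects $\{\CM_i\}$. Closing under fusion and duality I would extend $U$ to a compact abelian Lie group $A$ parametrizing simple defects together with a finite collection of rigid defects arising at resonances, and assemble these into a commutative Frobenius algebra object $\mathbf{A}$ in $\cTop_\calC$ via the formal integral $\mathbf{A} \sim \int_A \CL_\lambda\,d\lambda$ plus its rigid completion.

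In the generalized orbifold $\calC/\mathbf{A}$, the $A$-translation acts as a genuine continuous symmetry by construction, so its Noether currents produce holomorphic weight-$(1,0)$ and anti-holomorphic weight-$(0,1)$ fields beyond the $\widehat{su}(2)_1$ R-symmetry already part of $\CN=4$. The classification reviewed in section \ref{s:K3models} --- unitary $\CN=(4,4)$ SCFTs at $(c,\tilde c) = (6,6)$ with spectral-flow-invariant spectrum are believed to be exhausted by K3 and $T^4$ sigma models, and K3 models admit no $u(1)$ currents outside the R-symmetry --- then forces $\calC/\mathbf{A}$ to be a $T^4$ sigma model $\CT$. Invertibility of generalized gauging (the dual condensation of $\mathbf{A}^\vee$ in $\CT$ inverts condensation of $\mathbf{A}$ in $\calC$) finally exhibits $\calC = \CT/\mathbf{A}^\vee$ as a generalized torus orbifold.

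The hard part will be the definition and analysis of $\calC/\mathbf{A}$ when $\mathbf{A}$ is supported on a continuous family of simples, as the fusion category $\cTop_\calC$ is not finite and the integral $\int_A \CL_\lambda\,d\lambda$ requires care to interpret as a Frobenius algebra object. One must verify that the orbifolded theory has a modular-invariant, compact partition function, and in particular exclude pathologies such as target-space decompactification or loss of the $\CN=(4,4)$ spectrum structure. This analytical step --- not the structural classification --- is what prevents promoting the statement from conjecture to claim; once continuous generalized orbifolds are rigorously defined and shown to produce compact $\CN=(4,4)$ SCFTs with enhanced bosonic symmetry, the rest of the argument is essentially forced by the classification of $(c,\tilde c) = (6,6)$ theories.
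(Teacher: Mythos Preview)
Your approach differs substantially from the paper's sketch, and while both are heuristic (this is a conjecture, not a claim), the paper's route is more direct and sidesteps exactly the obstacle you flag as ``the hard part.''

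The paper's key step, following \cite{Thorngren:2021yso}, is that a continuous family $\CL_\lambda$ is generated infinitesimally by a conserved \emph{twisted} current $j\in\CH_{\CL_\lambda\CL_\lambda^\dual}$, with $\CL_{\lambda+\delta\lambda}=\CL_\lambda e^{i\delta\lambda\int j}$. Because the $\CL_\lambda$ preserve $\CN=(4,4)$ and spectral flow, $\CN=4$ representation theory forces $J(z)$ to be the superdescendant of a holomorphic spin-$\tfrac{1}{2}$ field in a $(\tfrac{1}{2},\tfrac{1}{2};0,0)$ representation inside $\CH_{\CL_\lambda\CL_\lambda^\dual}$; CPT and the R-symmetry doublet structure then give four holomorphic and four antiholomorphic free Majorana fermions there. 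The paper defines $\CL_{orb}$ as the smallest unoriented sub-defect of $\CL_\lambda\CL_\lambda^\dual$ containing $\CI$, stable under self-fusion, whose twisted Hilbert space contains these fermions and is OPE-closed. This is a \emph{finite}-dimension defect whose $\CH_{\CL_{orb}}$ already carries the full $T^4$ chiral algebra. The residual gap is only to check $\CL_{orb}$ is a Frobenius object; granting that, $\calC/\CL_{orb}$ has four free fermions at $c=\tilde c=6$ and must be a torus model. No integration over a continuous family is ever needed.

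Your proposal, by contrast, tries to build $\mathbf{A}\sim\int_A\CL_\lambda\,d\lambda$ over the whole parameter group. This object would have infinite quantum dimension and its Frobenius structure is genuinely unclear; the paper's construction avoids this entirely by extracting a finite $\CL_{orb}$ from a single fusion $\CL_\lambda\CL_\lambda^\dual$.

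There is also a factual gap in your distinguishing criterion. The assertion that ``K3 models admit no $u(1)$ currents outside the R-symmetry'' holds only generically: both examples in sections \ref{s:Z28M20} and \ref{s:onetosix} have large extended current algebras. What K3 models \emph{never} have are holomorphic spin-$\tfrac{1}{2}$ fields in the $(\tfrac{1}{2},\tfrac{1}{2};0,0)$ representation (section \ref{s:K3models}); these are the hallmark of $T^4$. The paper's argument produces precisely such free fermions, whereas your Noether-current argument yields only weight-$1$ fields, which do not by themselves rule out K3.
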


Let us recall the notion of generalised orbifold \cite{Brunner:2013xna,Carqueville:2012dk,Frohlich:2009gb}. In a `standard' orbifold of a CFT $\calC$ by a finite group of symmetries $G$, one can consider the superposition $A=\sum_{g\in G} \CL_g$ of all invertible defects in the group. 
Then, the Hilbert space of local operators in the orbifold theory $\calC/G$ is just the subspace of $G$-invariant operators in $\CH_{A}$, and their correlation functions can be obtained as correlation functions in the original theory with an insertion of a fine enough network of defects of type $A$. 
This construction can be generalised to any topological defect $A$, for which there exist some `multiplication' and `co-multiplication' maps, i.e. topological junction operators $\mu:\CH_A\otimes \CH_A\to \CH_A$ and  $\tilde \mu:\CH_A\to\CH_A\otimes \CH_A$ satisfying some suitable conditions -- essentially, that the associator of the multiplication map is trivial, and that $\tilde \mu\circ\mu$ is the identity. 
When such properties are satisfied, one can define a new consistent CFT, the generalised orbifold $\calC/A$, whose space of local operators is a suitable subspace of $\CH_A$, and whose correlation functions are obtained by inserting networks of defects $A$ in the $\calC$ correlators, with the operators $\mu$ and $\tilde\mu$ inserted at trivalent junctions. 
Furthermore, the generalised orbifold procedure is always reversible: if $\calC'=\calC/A$ is obtained as a generalised orbifold of some CFT $\calC$, then $\calC=\calC'/A'$ is a generalised orbifold of $\calC'$. 

\medskip

Let us now sketch an argument for a possible proof of conjecture \ref{th:onlytori}. According to \cite{Thorngren:2021yso}, the presence of a continuum of operators $\CL_\lambda$ is related to the presence of a conserved current $j\in \CH_{\CL_\lambda\CL_\lambda^\dual}$:
\be j:=J(z)dz+\tilde J(\bar z)d\bar z\ ,
\ee where $\bar \partial J(z)=0=\partial \tilde J(\bar z)$. This current is such that an infinitesimal deformation $\CL_{\lambda+\delta\lambda}$ of the defect $\CL_\lambda$ can be obtained as
\be \CL_{\lambda+\delta\lambda}=\CL_\lambda e^{i\delta\lambda\int j}\ ,
\ee where $j\in \CH_{\CL_\lambda\CL_\lambda^\dual}\cong \CH_{\CL_\lambda}\otimes\CH^*_{\CL_\lambda}$ is interpreted as a linear operator from $\CH_{\CL_\lambda}$ to itself, and the integration $\int j$ is over the support of the defect $\CL_\lambda$.

In the case we are interested in, since $\CL_\lambda$ and $\CL_{\lambda+\delta\lambda}$ are defects in $\cTop_\calC$, then the current $j$ must preserve the $\CN=(4,4)$ superconformal algebra and the spectral flow. In particular, $J(z)$ and $\tilde J(\bar z)$ must be neutral with respect to the holomorphic and anti-holomorphic $\widehat{su}(2)_1$ R-symmetry, and must be the supersymmetric descendants of holomorphic and anti-holomorphic spin $1/2$ fields. This implies that the NS-NS sector of $\CH_{\CL_\lambda\CL_\lambda^\dual}$ contains two $\CN=(4,4)$ BPS representation $(\frac{1}{2},\frac{1}{2};0,0)$ and $(0,0;\frac{1}{2},\frac{1}{2})$ containing the currents $J(z)$ and $\tilde J(\bar z)$. As a matter of fact, because the ground states of these representations are a $\widehat{su}(2)_1$ doublet on \emph{complex} spin $1/2$ fields, and because $(\CL_\lambda\CL_\lambda^\dual)^\dual=\CL_\lambda\CL_\lambda^\dual$, there must be $4$ holomorphic and $4$ anti-holomorphic Majorana spin $1/2$ fields (in other words, one needs to include also the CPT conjugates of these BPS representations). 
The OPE of the four holomorphic spin $1/2$ fields is very constrained, and only the vacuum operator can appear in a singular term. 
We can take $\CL_{orb}\subseteq \CL_\lambda\CL_\lambda^\dual$ to be the smallest unoriented ($\CL_{orb}^\dual=\CL_{orb}$) defect containing $I$, with $\CL_{orb}\subset (\CL_{orb})^2$, and such that $\CH_{\CL_{orb}}$ contains the four holomorphic and antiholomorphic spin $1/2$ fields and is closed under OPE. 
Then $\CH_{\CL_{orb}}$ contains a holomorphic and a antiholomorphic copy of the algebra generated by $\CN=4$ and the free fermions. 
This is just the chiral algebra of a generic $T^4$ model, i.e. the algebra of four free bosons and four free fermions. Furthermore, $\CH_{\CL_{orb}}$ decomposes into (possibly twisted) representations of this chiral and anti-chiral algebra. 
What remains to prove is that $A:=\CL_{orb}$ (or possibly some extension of $\CL_{orb}$), satisfies all properties such that the generalised orbifold $\calC/A$ is well-defined. It might be possible to prove this fact using properties of the category of (twisted) representations of the chiral algebra of $T^4$. If the orbifold $\calC/A$ is consistent, then it contains four free fermions and four free bosons and has central charges $c=6=\tilde c$, so that it is necessarily a sigma model $\CT$ on $T^4$. 
By reversibility of the orbifold procedure, we conclude that $\calC$ is a generalised orbifold of the torus model $\CT$.

In the examples of continuous defects discussed in the previous sections, $\CL_{orb}$ was given by $I+Q+\ldots+Q^{N-1}$, and indeed the orbifold of $\calC$ by $\CL_{orb}$ gives back $\CT$.

\section{An example: a K3 model with $\ZZ_2^8:M_{20}$ symmetry group}\label{s:Z28M20}

In this section, we discuss the topological defects $\CL\in \cTop_\calC$ preserving the $\CN=(4,4)$ superconformal algebra (SCA) in a particular K3 sigma model $\calC_{GTVW}$, the `most symmetric model' of \cite{Gaberdiel:2013psa} (see also \cite{Harvey:2020jvu}). The bosonic parts of the holomorphic and anti-holomorphic chiral algebras of this model are isomorphic to $\CA:=(\widehat{su}(2)_1)^{6}$ (six copies of the $su(2)$ current algebra at level $1$). Recall that $\widehat{su}(2)_1$ has two irreducible representations with conformal weights $0$ and $1/4$, and the fusion ring of  $\widehat{su}(2)_1$ is the group ring of the cyclic group $\ZZ_2$ of order two, which we represent as the additive group with elements $0,1$. Each irreducible $\CA$-module is labeled by an element in $\ZZ_2^{6}$, as $M_{a_1,\ldots, a_6}$, $a_i\in\{0,1\}$.

The NS-NS sector of $\calC_{GTVW}$ is given by the sum
\be \bigoplus_{[a_1,\ldots,a_6;b_1,\ldots,b_6]\in A_{NS-NS}} M_{a_1,\ldots, a_6}\otimes \bar M_{b_1,\ldots, b_6}\ ,
\ee where the set $A_{NS-NS}\subset \ZZ_2^6\times \ZZ_2^6$ is given by the disjoint union
\be A_{NS-NS}=A_{(NS-NS)^+}\sqcup A_{(NS-NS)^-}
\ee of a `bosonic' subset (corresponding to the subsector with positive fermion number)
\be A_{(NS-NS)^+}=\{[a_1,\ldots,a_6;b_1,\ldots,b_6]\in \ZZ_2^6\times \ZZ_2^6\mid a_i=b_i,\quad \sum_i a_i\equiv 0\mod 2\}
\ee
and a `fermionic' one
\be A_{(NS-NS)^-}=\{[a_1,\ldots,a_6;b_1,\ldots,b_6]\in \ZZ_2^6\times \ZZ_2^6\mid a_i=b_i+1,\quad \sum_i a_i\equiv 0\mod 2\}\ .
\ee
Similarly, the R-R sector is a sum over the modules whose labels take values in the set $A_{R-R}=(A_{R-R})^+\sqcup (A_{R-R})^-$ with
\be A_{(R-R)^+}=\{[a_1,\ldots,a_6;b_1,\ldots,b_6]\in\ZZ_2^6\times \ZZ_2^6\mid a_i=b_i,\quad \sum_i a_i\equiv 1\mod 2\}
\ee
and
\be A_{(R-R)^-}=\{[a_1,\ldots,a_6;b_1,\ldots,b_6]\in\ZZ_2^6\times \ZZ_2^6\mid a_i=b_i+1,\quad \sum_i a_i\equiv 1\mod 2\}\ .
\ee
The bosonic CFT $\calC^{bos}_{GTVW}$ that contains only the states in the $(NS-NS)^+$ and $(R-R)^+$ sectors of the SCFT $\calC_{GTVW}$ is just the diagonal modular invariant of the $(\widehat{su}(2)_1)^{\oplus 6}$ algebra.

The K3  model $\calC_{GTVW}$ can also be defined as the orbifold $\CT/\ZZ_2$ of a particular torus model $\CT$ by the $\ZZ_2$ symmetry $\calR$ reflecting all torus coordinates, see section \ref{s:torusOrbs}. Alternatively, it can be described as a $\CT/\ZZ_4$ orbifold of the same torus model $\CT$ by a symmetry of order $4$, as described in section \ref{s:T4Z4}. See also \cite{Gaberdiel:2013psa} for more details on this construction.

\bigskip

This model contains several different copies of the $\CN=(4,4)$ superconformal algebra, related to each other by symmetries of the CFT. We focus on one of them, such that the $\widehat{su}(2)_1$ subalgebra of the (anti-)holomorphic $\CN=4$ is identified with the first factor in the (anti-)chiral algebra $(\widehat{su}(2)_1)^{6}$. The four supercurrents of the holomorphic $\CN=(4,4)$ are suitably chosen ground states in the $M_{1,1,1,1,1,1}\otimes \bar{M}_{0,0,0,0,0,0}$ module of $\CA\times\bar\CA$. The space of ground states of this module is isomorphic to a tensor product $(\CC^2)^{\otimes 6}$, and the $\CN=4$ supercurrents can be nicely described in terms of a quantum error correcting code \cite{Harvey:2020jvu}. The full (bosonic and fermionic) chiral algebra of the theory is generated by the bosonic $(\widehat{su}(2)_1)^{6}$ and one of the supercurrents.

The $24$ (real) Ramond-Ramond states with conformal weight $h=\bar h=\frac{1}{4}$ generating the space $V$ are the ground states in the six $\CA\times\bar\CA$ representations
\begin{align}
&[1,0,0,0,0,0;1,0,0,0,0,0]\, ,\ [0,1,0,0,0,0;0,1,0,0,0,0]\, ,\ [0,0,1,0,0,0;0,0,1,0,0,0]\notag\\
&[0,0,0,1,0,0;0,0,0,1,0,0]\, ,\ [0,0,0,0,1,0;0,0,0,0,1,0]\, ,\ [0,0,0,0,0,1;0,0,0,0,0,1]\label{RRreps}
\end{align} each one containing $4$ ground states; we will call any such a set of four states a \emph{tetrad}. In particular, the first tetrad, i.e. the ground states in $[1,0,0,0,0,0;1,0,0,0,0,0]$, are the spectral flow generators, spanning $\Pi\subset V$.

\bigskip

The full group of symmetry of $\calC_{GTVW}$ is $(SU(2)^6\times SU(2)^6)\rtimes S^6$, where the two $SU(2)^6$ subgroups are generated by the zero modes of the holomorphic and anti-holomorphic currents, while $S^6$ is the group of permutations of the six $\widehat{su}(2)_1$ factors in $(\widehat{su}(2)_1)^{6}$ acting diagonally on the holomorphic and anti-holomorphic currents. To be precise, such a group does not act faithfully on $\CH$, because a certain subgroup $Z_0\subset SU(2)^6\times SU(2)^6$ acts trivially on all the states of the theory. Indeed, let \be t_i,\tilde t_i\ , \qquad i=1,\ldots,6,\ee
be the generators of the centre $\ZZ_2^6\times \ZZ_2^6$ of $SU(2)^6\times SU(2)^6$, where $t_i$ and $\tilde t_i$  act on the representation $M_{a_1,\ldots, a_6}\otimes \bar M_{b_1,\ldots, b_6}$ by multiplication by $(-1)^{a_i}$ and $(-1)^{b_i}$, respectively. Then, the subgroup
\be \label{Z0_subgroup_trivial_action} Z_0=\{ \prod_{i=1}^6 (t_i\tilde t_i)^{r_i}\ \mid \sum_{i=1}^6 r_i\in 2\ZZ \}\cong \ZZ_2^5
\ee
acts trivially on all states of the theory. Thus, the group acting faithfully is $((SU(2)^6\times SU(2)^6)/Z_0)\rtimes S^6$. In particular, the quotient $(\ZZ_2^6\times \ZZ_2^6)/Z_0 \cong \ZZ_2^7$ is the group of symmetries commuting with the full bosonic chiral algebra $\CA$, and we can take 
\be\label{gensz27} t_1,\ldots, t_6, \tilde t_1\ ,
\ee as representatives of the generators (modulo $Z_0$).

In \cite{Gaberdiel:2013psa} it was shown that the subgroup $G_{GTVW}\subset ((SU(2)^6\times SU(2)^6)/Z_0)\rtimes S^6$ fixing the $\CN=(4,4)$ superconformal algebra and the spectral flow generators is finite and isomorphic to a split extension $\ZZ_2^8:M_{20}$ of the Mathieu group $M_{20}$ by $\ZZ_2^8$. In turn, $M_{20}$ is a split extension $\ZZ_2^4:A_5$ of the alternating group $A_5$ by $\ZZ_2^4$. See appendix \ref{a:gensZ28M20} for a description of the generators. The intersection
\be\label{centreGGTVW} G_{GTVW}\cap (\ZZ_2^6\times \ZZ_2^6)/Z_0=\{ \prod_{i=2}^6 t_i^{r_i}\ \mid \sum_{i=2}^6 r_i\in 2\ZZ\}\cong\ZZ_2^4\ ,
 \ee is the subgroup of symmetries commuting with both $\CN=(4,4)$ superconformal algebra and with the bosonic current algebra $(\widehat{su}(2)_1)^{6}\times (\widehat{su}(2)_1)^{6}$.

\subsection{Topological defects}\label{s:topdefsGTVW}

Let us now consider the possible topological defects $\CL\in \cTop_{GTVW}\equiv \cTop_{\calC_{GTVW}}$ preserving the $\CN=(4,4)$ algebra and spectral flow. The following result was used in section \ref{s:topdefsK3} to prove that an analogous statement holds for the category of defects $\cTop_\Pi$ of any K3 models $\calC_\Pi$ (see claim \ref{th:onlyid}).

\begin{claim}\label{th:nootherdef} The only defects $\CL\in \cTop_{GTVW}$ that are transparent to all $24$ R-R operators in $V$ are integral multiples of the identity defect.
    \end{claim}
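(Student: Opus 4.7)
The plan is to handle simple $\CL$ first, exploiting the specific chiral algebra $\CA\otimes\bar\CA=(\widehat{su}(2)_1)^{12}$ of $\calC_{GTVW}$ to force $\hat{\CL}=\langle\CL\rangle\,\mathrm{id}_\CH$, and then to deduce the general case by decomposition into simples. Assume $\CL$ is simple and transparent to all of $V$. Because $\CL$ preserves $\CN=(4,4)$, Schur's lemma implies that $\hat{\CL}$ acts by a single scalar on every $\CN=(4,4)$-irreducible summand of $\CH$; transparency to $V$ together with spectral-flow invariance pins this scalar to $\langle\CL\rangle$ on the vacuum $(0,0;0,0)$, on the R-R representation $(\tfrac{1}{4},\tfrac{1}{2};\tfrac{1}{4},\tfrac{1}{2})$ spanning $\Pi$, on the $20$ R-R representations $(\tfrac{1}{4},0;\tfrac{1}{4},0)$ spanning $V\cap\Pi^\perp$, and on their $20$ NS-NS spectral-flow partners $(\tfrac{1}{2},\tfrac{1}{2};\tfrac{1}{2},\tfrac{1}{2})$ (the marginal BPS representations).

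The key step is to extend transparency from $V$ to the whole bosonic chiral algebra. For each $i=1,\dots,6$, the OPE of two R-R primaries in $M_{e_i}\otimes\bar M_{e_i}$, carrying opposite $\widehat{su}(2)_1$-charges of the $i$-th factor, produces in the identity channel the currents $J^a_i(z)$ and $\bar J^a_i(\bar z)$ with non-vanishing coefficients dictated by the level-$1$ primary OPE. OPE-closure for simple topological defects, a consequence of figure~\ref{fig_8} together with the separation of conformal weight levels via the $z_1-z_2$-dependence of OPE coefficients, then implies that $\CL$ is transparent to every $J^a_i$ and $\bar J^a_i$. Hence $\hat{\CL}$ commutes with all of $\CA\otimes\bar\CA$, and Schur yields scalars $\lambda_{[a;b]}$ labeling the action of $\hat{\CL}$ on each module $M_a\otimes\bar M_b$ of the spectrum. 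Commutativity with the $\CN=4$ supercurrents, which live in $M_{\mathbf{1}}\otimes\bar M_0$ and $M_0\otimes\bar M_{\mathbf{1}}$ (with $\mathbf{1}:=(1,\dots,1)$), enforces $\lambda_{[a+\mathbf{1};b]}=\lambda_{[a;b]}=\lambda_{[a;b+\mathbf{1}]}$, while OPE-commutativity with each R-R ground primary gives $\lambda_{[a+e_i;b+e_i]}=\lambda_{[a;b]}$. The subgroup of $\ZZ_2^6\oplus\ZZ_2^6$ generated by $\{(e_i,e_i)\}_{i=1}^6$, $(\mathbf{1},0)$ and $(0,\mathbf{1})$ equals $\{(a,b):a-b\in\{0,\mathbf{1}\}\}$, which is precisely the set of admissible labels in the spectrum; therefore $\lambda_{[a;b]}\equiv\langle\CL\rangle$ and $\hat{\CL}=\langle\CL\rangle\,\mathrm{id}_\CH$.

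Modular invariance then closes the simple case: the $\CL$-twined torus partition function equals $\langle\CL\rangle Z(\tau,\bar\tau)$, so after a modular $S$-transformation the $\CL$-twisted partition function is also $\langle\CL\rangle Z$, giving $\dim V_\CL=\langle\CL\rangle$; for simple $\CL\neq\CI$ this dimension vanishes, forcing $\langle\CL\rangle=1$ and $\hat{\CL}=\mathrm{id}_\CH$, and since $\ZZ_2^8:M_{20}$ acts faithfully on $\CH$ the only such defect is the identity. For a non-simple $\CL=\sum_i n_i\CL_i$ transparent to $V$, each simple summand $\CL_i$ must itself be transparent to $V$: the defect-attached fragments from figure~\ref{fig_8} arising when $\CL$ is moved past $\phi\in V$ decompose according to the simple constituents of $\CL_i\CL_i^\dual-\CI$, and consistent cancellation across distinct $i$ for every $\phi\in V$ is generically impossible, so each $\CL_i$'s fragment vanishes independently and hence $\CL_i=\CI$ by the simple case, yielding $\CL=(\sum_i n_i)\,\CI$. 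The main technical subtlety is the OPE-closure step, in particular the verification that the $\widehat{su}(2)_1$ currents $J^a_i,\bar J^a_i$ really appear with non-vanishing coefficients in the R-R $\times$ R-R OPE at the appropriate weight level, which I expect to follow from the explicit level-$1$ $\widehat{su}(2)$ primary OPE.
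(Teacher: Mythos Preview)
Your core strategy coincides with the paper's: use OPE closure to pass from transparency to the $24$ R-R ground fields (and the $\CN=(4,4)$ generators) to transparency to the full chiral algebra $(\widehat{su}(2)_1)^6\times(\widehat{su}(2)_1)^6$, and then use the $\CA\times\bar\CA$ fusion rules (equivalently, the subgroup of $\ZZ_2^6\oplus\ZZ_2^6$ generated by the $(e_i,e_i)$ and $(\mathbf{1},0),(0,\mathbf{1})$) to propagate transparency to every primary in the spectrum. The paper's proof is exactly this, stated more tersely; your addition of the explicit modular step ($Z_\CL=\langle\CL\rangle Z\Rightarrow Z^\CL=\langle\CL\rangle Z\Rightarrow \dim V_\CL=\langle\CL\rangle$) makes precise the final implication ``transparent to everything $\Rightarrow$ multiple of $\CI$'', which the paper leaves implicit.

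There is, however, a genuine gap in your treatment of the non-simple case. Your claim that each simple summand $\CL_i$ of $\CL=\sum_i n_i\CL_i$ must separately be transparent to $V$ is correct, but the justification you offer --- that the defect-attached fragments from distinct $\CL_i$'s cannot ``generically'' cancel --- is not a proof. Nothing you have said rules out, for instance, a pair $\CL_1,\CL_2$ whose failure-of-transparency contributions in every correlator happen to be opposite. The clean fix is simply to drop the simple/non-simple case split: the OPE-closure argument (moving $\CL$ past $\phi(z_1)\psi(z_2)$ and extracting each conformal family via the $(z_1-z_2)$-dependence) works verbatim for \emph{any} $\CL$, so you directly obtain $\hat\CL=\langle\CL\rangle\,\mathrm{id}_\CH$ for the full defect. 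Then your modular step gives $\dim V_\CL=\langle\CL\rangle$, and decomposing $\CL=\sum_i n_i\CL_i$ yields $n_\CI=\langle\CL\rangle=\sum_i n_i\langle\CL_i\rangle\ge\sum_i n_i\ge n_\CI$, forcing $n_i=0$ for $\CL_i\neq\CI$. This is the route the paper (implicitly) takes, and it avoids the delicate question of whether transparency descends to summands. A minor point: in your simple-case conclusion, the contradiction is $\langle\CL\rangle=\dim V_\CL=0$ versus $\langle\CL\rangle\ge 1$, not ``forcing $\langle\CL\rangle=1$''; the faithfulness of $\ZZ_2^8{:}M_{20}$ is then unnecessary.
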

To prove this statement, we notice that if $\CL$ is transparent to the $\CN=(4,4)$ algebra and to all R-R operators in $V$, then it is also transparent to all the operators that can be obtained from their OPE. Thus, we just need to prove that one can obtain \emph{all} operators in $\CH$ by OPE of R-R operators in $V$ and their $\CN=(4,4)$ descendants. In fact, by taking the OPE of the four operators the $i$-th tetrad, $i=1,\ldots, 6$, one can obtain the currents in the holomorphic and anti-holomorphic $i$-th $\widehat{su}(2)_1$ factor. Thus, in this way we can generate the full chiral and antichiral algebra $\CA\times\bar\CA$. Furthermore, the fusion rules between $\CA$ modules imply that the OPE of the $\CA\times\bar\CA$ primary fields in the six modules \eqref{RRreps}, together with the ones in the $[1,1,1,1,1,1;0,0,0,0,0,0]$ and $[0,0,0,0,0,0; 1,1,1,1,1,1]$ modules, where the supercurrents live, generate every other $\CA\times\bar\CA$ primary operator in the spectrum, and we conclude.

\bigskip

Some examples of defects in $\cTop_{GTVW}$ are as follows:
\begin{enumerate}
	\item \emph{Invertible defects.} We have one simple defect $\CL_g$ for each $g\in G_{GTVW}\cong \ZZ_2^8:M_{20}\cong \ZZ_2^8:(\ZZ_2^4:A_5)$.
 \item \emph{Defects induced by the torus orbifold description.} Since the model $\calC_{GTVW}$ can be defined as a torus orbifold $\CT/\ZZ_2$, it inherits all defects from the torus model $\CT$. In particular, as described in section \ref{s:torusOrbs}, $\cTop_{GTVW}$ contains a continuum of non-invertible defects $T_\theta$, $\theta\in (\Gamma^{4,4}\otimes\RR)/\Gamma^{4,4}$ of dimension $2$ that are simple for generic values of $\theta$. One can identify the first two tetrads in \eqref{RRreps} as the RR ground states in the untwisted sector of the orbifold, and the last four tetrads as the twisted sector. Then, all operators $\hat T_\theta$ act on the first two tetrads by multiplication by $2$, while they annihilate the last four tetrads. Notice that symmetries in $G_{GTVW}$ permute the second tetrad with the last four, while keeping the first tetrad of spectral flow generators fixed; therefore, they change the identification of the tetrads with the twisted or untwisted sector of the orbifold. This means that fusion of $T_\theta$ with invertible defects in $\cTop_{GTVW}$ provide further sets of continuous defects of dimension $2$. Any such defect acts by multiplication by $2$ on the spectral flow generators (the first tetrad) and on one more tetrad (the untwisted sector RR ground states), while annihilating the remaining four tetrads (the twisted sector).\\
 The model $\calC_{GTVW}$ can also be described as a $T^4/\ZZ_4$ orbifold. In fact, this is exactly the model described in section \ref{s:T4Z4}. We denote by $Q_4$ the quantum symmetry of order $4$ in this torus orbifold description. Among the elements of the group $G_{GTVW}$, the quantum symmetries $Q_4$ can be characterised by their eigenvalues on the $24$-dimensional space $V$ of RR ground states ($\pm 1$ with multiplicity $4$ each, and $\pm i$ with multiplicity $8$ each).  From the $T^4/\ZZ_4$ description, one can deduce that $\cTop_{GTVW}$ must contain a continuum of topological defects $T^{(4)}_\theta$ of order $4$ and a continuum of topological defects $\xi_\theta$ of order $2$. The former act on the first tetrad of RR ground states by multiplication by $4$, while they annihilate all other $5$ tetrads. The latter act by multiplication by $2$ on two tetrads (including the first), while they annihilate the remaining four tetrads.  
 \item \emph{Verlinde lines for $(\widehat{su}(2)_1)^{\oplus 6}$.} Since this SCFT is rational with respect to the chiral algebra $\CA$, it is useful to consider the topological defects that preserve the whole algebra $\CA$. For rational CFTs, there is a finite number of such defects. In particular, for a bosonic CFT corresponding to the diagonal invariant of $\CA$, simple topological defects preserving $\CA$ are completely classified, and are given by the Verlinde line defects. In general, Verlinde lines are in one-to-one correspondence with representations of the algebra $\CA$, and their fusion ring is the same as the fusion ring of $\CA$ representations. Let us consider the simple topological defects of the bosonic CFT $\calC_{GTVW}^{bos}$ that preserve the $(\widehat{su}(2)_1)^{\oplus 6}$ chiral and anti-chiral algebras. They are $2^6$ Verlinde lines, one for each $(\widehat{su(2)}_1)^{\oplus 6}$ representation, and obey group-like $\ZZ_2^6$ fusion rules; this means that they are all invertible, so that they form a $\ZZ_2^6$ group of symmetries. The lifts of such symmetries from the bosonic CFT $\calC_{GTVW}^{bos}$ to the SCFT $\calC_{GTVW}$, together with the fermion number, generate the full group $(\ZZ_2^6\times \ZZ_2^6)/Z_0\cong \ZZ_2^7$ of symmetries fixing all $(\widehat{su}(2)_1)^{\oplus 6}$. Notice that not all Verlinde defects preserve the $\CN=(4,4)$ superconformal algebra and the spectral flow generators -- $\CA$ only contains the even subalgebra of $\CN=4$, but not the supercurrent. In fact, only  the subgroup \eqref{centreGGTVW}
  of $(\ZZ_2^6\times \ZZ_2^6)/Z_0$ is contained in $G_{GTVW}$ and therefore gives rise to defects in $\cTop_{GTVW}$. In particular, $t_1$, $ \tilde t_1$, and $t_1\cdots t_6$ can be identified, respectively, with the left-moving and right-moving fermion number, and with the $\ZZ_2$ symmetry acting by $-1$ on all R-R states. In conclusion, Verlinde lines only provide topological defects that are generated by ordinary symmetries.
 \item \emph{Lines preserving some `large' chiral algebra.} There are two ways we can generalize the construction of Verlinde lines in the previous point. One is to consider topological defects that act on the $(\widehat{su}(2)_1)^{\oplus 6}$ chiral and anti-chiral algebra by some non-trivial automorphism. The full analysis is performed in appendix \ref{s:automA}. The final outcome is that, besides the invertible defects, there are a few more duality defects (see the next point), related to the fact that the SCFT $\cTop_{GTVW}$ is self-orbifold $\cTop_{GTVW}=\cTop_{GTVW}/H$ with respect to certain subgroups $H$ of the subgroup $G_{GTVW}\cap (\ZZ_2^6\times \ZZ_2^6)/Z_0$ of symmetries fixing all the $(\widehat{su}(2)_1)^{\oplus 6}$ currents. Actually, it turns out that the duality defects obtained in this way are elements in the family of continuous defects $T_\theta$, at special values of $\theta$. The possibility that some $T_\theta$ are duality defects was discussed in section \ref{s:torusOrbs}.\\
 The other possibility to discover new objects in $\cTop_{GTVW}$ is to consider the defects that preserve some chiral algebra $\CB$ that is smaller than the full chiral algebra of the theory, but such that the theory is still rational with respect to $\CB$. 
 There are many simple subalgebras of the bosonic $(\widehat{su}(2)_1)^{\oplus 6}$ whose representation theory are well known, such as, for example, products of affine subalgebras $\hat h\subset (\widehat{su}(2)_1)^{\oplus 6}$ and cosets $(\widehat{su}(2)_1)^{\oplus 6}/h$. The main difficulty with this approach is to find the defects that preserve the $\CN=4$ supercurrents. We were not able to find any new defects in $\cTop_{GTVW}$ using this technique.
	\item \emph{Duality defects.} By definition, a duality defect $N$ is such that the fusion with the reversed orientation defect $N^\dual$ is a superposition of invertible defects
 \be N^\dual N=\sum_{h\in H} \CL_h\ ,\qquad N\CL_h=N\ ,
 \ee for some group $H$ of symmetries. Duality defects occur when a model $\calC$ is self-orbifold with respect to the group $H$, i.e. if the orbifold theory $\calC/H$ is a consistent CFT isomorphic to $\calC$. This means that there is an isomorphism between the space $\CH$ of local operators of $\calC$ and the one of  $\calC/H$ such that correlation functions are the same. In this case, moving a local operator of $\calC$ through $N$ gives the corresponding operator in $\calC/H$.
In particular, when $N =N^\dual$ is unoriented, and $H$ is an abelian group, the category generated by $N$ and $\CL_h$, $h\in H$, is called a Tambara-Yamagami (TY) category, with fusions
\be N^2=\sum_{h\in H} \CL_h\ ,\qquad \CL_h N=N=N\CL_h\ .
\ee
If $N\in \cTop_{GTVW}$ is a duality defect, then the corresponding $H$ must be a subgroup of $G_{GTVW}$, the group of symmetries acting trivially on the $\CN=(4,4)$ superconformal algebra and spectral flow. Notice, however, that the converse is not true in general: even if a certain subgroup $H\subset G_{GTVW}$ is such that $\calC_{GTVW}/H\cong \calC_{GTVW}$, the corresponding duality defect $N$ is not necessarily transparent to the $\CN=(4,4)$ algebra and spectral flow, and in this case it is not in $\cTop_{GTVW}$ (see appendix \ref{a:NnotinTop} for an example). \\
Let us discuss the possible abelian groups $H$ for which $\calC_{GTVW}/H$ is isomorphic to $\calC_{GTVW}$. One necessary condition is that the orbifold model contains bosonic chiral and antichiral algebras isomorphic to $(\widehat{su}(2)_1)^{\oplus 6}$. We can separate the subgrops $H\subset G_{GTVW}$ into two classes, depending on whether they act trivially on the chiral and anti-chiral $\CA\times \bar \CA$ or not.\\
In the first case, $H$ must be a subgroup of \eqref{centreGGTVW}, i.e. the intersection of $G_{GTVW}$ and the centre $(\ZZ_2^6\times\ZZ_2^6)/Z_0$. Furthermore, $\frac{\hat{N}}{\sqrt{|H|}}$ must act by an automorphism on $(\widehat{su}(2)_1)^{\oplus 6}$. All such defects $N\in \cTop$ are discussed in appendix \ref{s:automA}. The results are as follows. We find three kind of duality defects (up to conjugation in $G_{GTVW}$): $\CN_{ijk}$ for all $2\le i<j<k\le 6$, with $\CN_{ijk}^2=\CI+\CL_{t_it_j}+\CL_{t_jt_k}+\CL_{t_it_k}$; $\CN_{ij,kl}$ for all pairwise distinct $i,j,k,l\in\{2,\ldots,6\}$, with $\CN_{ij,kl}^2=\CI+\CL_{t_it_j}+\CL_{t_kt_l}+\CL_{t_it_jt_kt_l}$; and $\CN_{23456}$ where $\CN_{23456}^2$ equals to the superposition of the $16$ invertible defects in $G_{GTVW}\cap (\ZZ_2^6\times\ZZ_2^6)/Z_0\cong \ZZ_2^4$.  \\
Let us now consider the case where $H$ acts non-trivially on the chiral algebra. In order for $\calC/H$ to be isomorphic to $\calC$, there must be holomorophic currents in some twisted sector $\CH_h$, for some $h\in H$. 
 
By a direct calculation of the $h$-twisted partition function, we found that the group $G_{GTVW}$ admits only three conjugacy classes of elements $h$ such that $\Hh_h$ contains holomorphic currents. One class is given by symmmetries of the form $t_it_jt_kt_l$, $2\le i<j<k<l\le 6$, that are contained in the group \eqref{centreGGTVW}. Each of them can be interpreted as the quantum symmetry in a description of the model as a torus orbifold $T^4/\ZZ_2$. This means that the orbifold of the CFT $\calC_{GTVW}$ by $\ZZ_2$ group generated by any such symmetries is a torus model, and in particular cannot be isomorphic to $\calC_{GTVW}$ itself. However, there could be larger abelian groups containing symmetries of the form $t_it_jt_kt_l$, under which $\calC$ is self-orbifold. In fact, we already found some of these abelian groups among the ones acting trivially on currents.\\ 
The second class are given by symmetries $Q_4$ of order four, such as, for example
\be (\one \, \mone \, \mi \, \ii \, \ii\, \ii\, ;\  \one\,\one\,\one\,\one\,\one\,\one)(1)^6
\ee in the notation of appendix \ref{a:gensZ28M20}; the third class is related to the second by exchanging the action on the holomorphic and anti-holomorphic sectors. The elements $Q_4$ in either the second or the third conjugacy class are quantum symmetries in a description of the model $\calC_{GTVW}$ as a $\ZZ_4$ torus orbifold $T^4/\ZZ_4$.  
As in the previous case, this means that the orbifold of $\calC$ by $\langle Q_4\rangle \cong \ZZ_4$ is a torus model (more precisely, the model described in section \ref{s:T4Z4}), and therefore is not isomorphic to $\calC_{GTVW}$. On the other hand, from the results in section \ref{s:T4Z4}, we know that there are duality defects $\CN_x\in \cTop_{GTVW}$ of order $4$ for groups of  the form $\ZZ_2\times \ZZ_8$ generated by $Q_4$ and by some other symmetry $\eta_x$ of order $8$. In principle, there could be even larger abelian groups with respect to which $\calC$ is self-orbifold, but we did not attempt a full classification. 
\end{enumerate}

\subsection{D-branes and RR charges}\label{s:GTVWbranes}

Let us describe the lattice $\Gamma^{4,20}$ of RR charges of the K3 model $\calC_{GTVW}$, expressed in the orthonormal basis $\{|1,i\rangle,|2,i\rangle,|3,i\rangle,|4,i\rangle\}_{i=1,\ldots,6}$ of the space $V$ of R-R ground fields that is described in appendix \ref{a:gensZ28M20}. In particular, for each fixed $i=1,\ldots,6$, the subset $\{|1,i\rangle,|2,i\rangle,|3,i\rangle,|4,i\rangle\}$ corresponds to a `tetrad' of states, belonging to one of the six representations of $\CA\times\bar\CA$ listed in eq. \eqref{RRreps}. The first tetrad ($i=1$) is given  by the spectral flow generators and spans the subspace $\Pi\subset V$, while $\Pi^\perp$ is spanned by the remaining tetrads with $i=2,\ldots,6$. 

In order to find $\Gamma^{4,20}$ the most direct way would be to find an explicit description of a suitable set of $24$ boundary states. However, we will consider a simpler method, that uses the group of symmetries $G_{GTVW}$. The outcome of this analysis is that the only even unimodular lattice $\Gamma\subset V$ with signature $(4,20)$ that is invariant under the action of the $G_{GTVW}$ on $V$, as described in appendix \ref{a:gensZ28M20}, is the one spanned by the columns of the following matrix:
\be \arraycolsep=3pt \def\arraystretch{0.8}
\frac{1}{\sqrt{8}}
\left(
\begin{array}{cccccccccccccccccccccccc}
     8 & 4 & 4 & 4 & 4 & 4 & 4 & 2 & 4 & 4 & 4 & 2 & 4 & 2 & 2 & 2 & 4 & 2 & 2 & 2 & 2 & 0 &
   0 & 1 \\
 0 & 4 & 0 & 0 & 0 & 0 & 0 & 2 & 0 & 0 & 0 & 2 & 0 & 2 & 0 & 0 & 0 & 0 & 0 & 2 & 0 & 0 &
   0 & 1 \\
 0 & 0 & 4 & 0 & 0 & 0 & 0 & 2 & 0 & 0 & 0 & 2 & 0 & 0 & 2 & 0 & 0 & 2 & 0 & 0 & 0 & 0 &
   0 & 1 \\
 0 & 0 & 0 & 4 & 0 & 0 & 0 & 2 & 0 & 0 & 0 & 2 & 0 & 0 & 0 & 2 & 0 & 0 & 2 & 0 & 0 & 0 &
   0 & 1 \\
 0 & 0 & 0 & 0 & 4 & 0 & 0 & 2 & 0 & 0 & 0 & 0 & 0 & 2 & 2 & 2 & 0 & 2 & 2 & 2 & 2 & 0 &
   0 & 1 \\
 0 & 0 & 0 & 0 & 0 & 4 & 0 & 2 & 0 & 0 & 0 & 0 & 0 & 2 & 0 & 0 & 0 & 0 & 2 & 0 & 0 & 0 &
   0 & 1 \\
 0 & 0 & 0 & 0 & 0 & 0 & 4 & 2 & 0 & 0 & 0 & 0 & 0 & 0 & 2 & 0 & 0 & 0 & 0 & 2 & 0 & 0 &
   0 & 1 \\
 0 & 0 & 0 & 0 & 0 & 0 & 0 & 2 & 0 & 0 & 0 & 0 & 0 & 0 & 0 & 2 & 0 & 2 & 0 & 0 & 0 & 0 &
   0 & 1 \\
 0 & 0 & 0 & 0 & 0 & 0 & 0 & 0 & 4 & 0 & 0 & 2 & 0 & 2 & 2 & 2 & 0 & 2 & 2 & 2 & 2 & 2 &
   2 & 1 \\
 0 & 0 & 0 & 0 & 0 & 0 & 0 & 0 & 0 & 4 & 0 & 2 & 0 & 2 & 0 & 0 & 0 & 2 & 0 & 0 & 0 & 2 &
   0 & 1 \\
 0 & 0 & 0 & 0 & 0 & 0 & 0 & 0 & 0 & 0 & 4 & 2 & 0 & 0 & 2 & 0 & 0 & 0 & 2 & 0 & 0 & 0 &
   2 & 1 \\
 0 & 0 & 0 & 0 & 0 & 0 & 0 & 0 & 0 & 0 & 0 & 2 & 0 & 0 & 0 & 2 & 0 & 0 & 0 & 2 & 0 & 0 &
   0 & 1 \\
 0 & 0 & 0 & 0 & 0 & 0 & 0 & 0 & 0 & 0 & 0 & 0 & 4 & 2 & 2 & 2 & 0 & 0 & 0 & 0 & 2 & 2 &
   2 & 1 \\
 0 & 0 & 0 & 0 & 0 & 0 & 0 & 0 & 0 & 0 & 0 & 0 & 0 & 2 & 0 & 0 & 0 & 0 & 0 & 0 & 0 & 2 &
   0 & 1 \\
 0 & 0 & 0 & 0 & 0 & 0 & 0 & 0 & 0 & 0 & 0 & 0 & 0 & 0 & 2 & 0 & 0 & 0 & 0 & 0 & 0 & 0 &
   2 & 1 \\
 0 & 0 & 0 & 0 & 0 & 0 & 0 & 0 & 0 & 0 & 0 & 0 & 0 & 0 & 0 & 2 & 0 & 0 & 0 & 0 & 0 & 0 &
   0 & 1 \\
 0 & 0 & 0 & 0 & 0 & 0 & 0 & 0 & 0 & 0 & 0 & 0 & 0 & 0 & 0 & 0 & 4 & 2 & 2 & 2 & 2 & 2 &
   2 & 1 \\
 0 & 0 & 0 & 0 & 0 & 0 & 0 & 0 & 0 & 0 & 0 & 0 & 0 & 0 & 0 & 0 & 0 & 2 & 0 & 0 & 0 & 2 &
   0 & 1 \\
 0 & 0 & 0 & 0 & 0 & 0 & 0 & 0 & 0 & 0 & 0 & 0 & 0 & 0 & 0 & 0 & 0 & 0 & 2 & 0 & 0 & 0 &
   2 & 1 \\
 0 & 0 & 0 & 0 & 0 & 0 & 0 & 0 & 0 & 0 & 0 & 0 & 0 & 0 & 0 & 0 & 0 & 0 & 0 & 2 & 0 & 0 &
   0 & 1 \\
 0 & 0 & 0 & 0 & 0 & 0 & 0 & 0 & 0 & 0 & 0 & 0 & 0 & 0 & 0 & 0 & 0 & 0 & 0 & 0 & 2 & 2 &
   2 & 1 \\
 0 & 0 & 0 & 0 & 0 & 0 & 0 & 0 & 0 & 0 & 0 & 0 & 0 & 0 & 0 & 0 & 0 & 0 & 0 & 0 & 0 & 2 &
   0 & 1 \\
 0 & 0 & 0 & 0 & 0 & 0 & 0 & 0 & 0 & 0 & 0 & 0 & 0 & 0 & 0 & 0 & 0 & 0 & 0 & 0 & 0 & 0 &
   2 & 1 \\
 0 & 0 & 0 & 0 & 0 & 0 & 0 & 0 & 0 & 0 & 0 & 0 & 0 & 0 & 0 & 0 & 0 & 0 & 0 & 0 & 0 & 0 &
   0 & 1 
   \end{array}
   \right)
\ee
This is a slight modification (necessary because of the different signature) of the basis of vectors of the Leech lattice in \cite{ConwaySloane}.
It is easy to verify by a direct calculation that the lattice generated by these vectors is invariant under the group $G_{GTVW}$, and that it is even and unimodular with respect to the diagonal metric $\eta={\rm diag}(1,1,1,1,-1,\ldots,-1)$ of signature $(4,20)$. 

The proof that the lattice with these properties is unique (up to $O(4)\times O(20)$ transformations that do not affect the splitting $V=\Pi\oplus\Pi^\perp$) is as follows.
It is known from \cite{Gaberdiel:2011fg} that the group of symmetries $G_{\calC}$ of any K3 model $\calC$ is isomorphic to a subgroup $G_\Lambda$ of the Conway group $O(\Lambda)\cong Co_0$, the group of automorphisms of the Leech lattice $\Lambda$. Furthermore, if $\Gamma^G\subset \Gamma^{4,20}$ and $\Lambda^G\subset \Lambda$ denote the sublattices of $G$-fixed vectors and $\Gamma_G:=\Gamma^{4,20}\cap (\Gamma^G)^\perp$ and $\Lambda_G:=\Lambda\cap (\Lambda^G)^\perp$ their orthogonal complement, then there is an isomorphism $\Gamma_G\cong \Lambda_G$ (reversing the sign of the quadratic form) that is compatible with the action of $G_\calC\cong G_\Lambda$. 
The relevant subgroups of $Co_0$ were classified by H\"ohn and Mason in \cite{HohnMason2016}, and $G_{GTVW}$ appears as group $99$ in their table. In particular, up to conjugation in $Co_0$, \cite{HohnMason2016} shows that there is a unique sublattice $\Lambda_G\cong \Gamma_G$ for this group; this means that $\Gamma\cap \Pi^\perp\cong \Gamma_G$ is unique up to $O(20)$ transformations of $\Pi^\perp$.
As described, for example, in \cite{nikulin} (see also \cite{HohnMason2016}), the genus of the primitive sublattice $\Gamma_G\subset \Gamma^{4,20}$ determines the genus of its orthogonal complement $\Gamma^G$ in the even unimodular lattice $\Gamma^{4,20}$. 
We checked, with some computer aid, that such a genus contains a unique isomorphism class of lattices; this means that $\Gamma\cap \Pi\cong \Gamma^G$ is also uniquely determined up to $O(4)$ transformations. Therefore, for any choice of even unimodular lattice $\Gamma\subset V$ invariant under $G_{GTVW}$, there is a primitive embedding of $\Gamma^G\oplus \Gamma_G$ into $\Gamma$. Different choice of the lattice $\Gamma$ would lead to primitive embedddings that are not related by either $O(\Gamma^G)\times O(\Gamma_G)$ or $O(\Gamma)$ automorphisms (the latter can thought of as changes of basis in a given lattice $\Gamma)$. The possible equivalence classes of such primitive embeddings, up to $O(\Gamma^G)\times O(\Gamma_G)$ and $O(\Gamma)$ transformations, are described by proposition 2.1 of \cite{HohnMason2016}, which is a reformulation of propositions 1.4.1 and 1.6.1 of \cite{nikulin}. In particular, the fact that for group $99$ in \cite{HohnMason2016} the index $\bar{i}_G$ equals $1$, implies that there is a unique class of such embeddings, and this concludes the proof. We refer to \cite{HohnMason2016, nikulin} for more information about embeddings of primitive sublattices into even unimodular lattices, and in particular for the meaning of the index $\bar{i}_G$.

\bigskip

Notice that the first four vectors in the lattice basis 
\be \frac{8}{\sqrt{8}}|1,1\rangle\ ,\qquad \frac{4}{\sqrt{8}}(|1,1\rangle+|2,1\rangle)\ ,\qquad \frac{4}{\sqrt{8}}(|1,1\rangle+|3,1\rangle)\ ,\qquad \frac{4}{\sqrt{8}}(|1,1\rangle+|4,1\rangle)\ ,
\ee are contained in the subspace $\Pi\subset V$ of spectral flow operators. By Claim \eqref{th:qdim}, this implies that all topological defects $\CL\in \cTop_{GTVW}$ in this model have integral quantum dimension. 

As described in section \ref{s:topdefsK3}, with each $\CL\in \cTop_{GTVW}$ is associated a lattice endomorphism $\nL:\Gamma^{4,20}\to\Gamma^{4,20}$, which is contained in the intersection $B^{4,20}_\Pi(\ZZ)={\rm End}(\Gamma^{4,20})\cap B^{4,20}(\RR)$ of ${\rm End}(\Gamma^{4,20})$ with the $401$-dimensional real space $B^{4,20}(\RR)$ of block diagonal matrices defined in eq.\eqref{blkspace}.

Using some computer aid \cite{GAP4}, we found that for the model $\calC_{GTVW}$ the $\ZZ$-module $B^{4,20}_\Pi(\ZZ)$ has maximal rank, and computed a set of $401$ generators $\{\nL_i\}_{i=1,\ldots,401}$, so that any such map $\nL$ can be written as
\be \nL=\sum_i k_i \nL_i\ ,\qquad k_i\in \ZZ\ .
\ee We stress that not all elements $\nL\in B^{4,20}_\Pi(\ZZ)$ are expected to correspond to actual topological defects in $\cTop_{GTVW}$. The $\ZZ$-module $B_{4,20}(\ZZ)$ contains a submodule,
\be B^{4,20}_{\Pi,inv}(\ZZ)\subset B^{4,20}_\Pi(\ZZ)
\ee
generated by all maps $\nL_g$ induced by invertible defects $\CL_g$, $g\in G_{GTVW}$. Notice that, since the $24$-dimensional representation of $G_{GTVW}$ is faithful, $B^{4,20}_{\Pi,inv}(\ZZ)$ is isomorphic to the group ring $\ZZ[G_{GTVW}]$. We found that $B^{4,20}_{\Pi,inv}(\ZZ)$ has also maximal rank, has index $2^{168}$ in $B^{4,20}_{\Pi}(\ZZ)$, and that for all $\nL\in B^{4,20}_{\Pi}(\ZZ)$ one has $4\cdot \nL\in  B^{4,20}_{\Pi,inv}(\ZZ)$. In particular,
\be B^{4,20}_{\Pi}(\ZZ)/B^{4,20}_{\Pi,inv}(\ZZ)\cong \ZZ_4^{8}\oplus \ZZ_2^{152}\ .
\ee

In fact, all the examples of topological defects $\CL\in \cTop_{GTVW}$ described in this section, including the simple non-invertible ones, correspond to elements $\nL$ in $B^{4,20}_{\Pi,inv}(\ZZ)$ (recall that the map $\CL\to \nL$ is not injective).  Unfortunately, we were not able to determine which elements (if any) $\nL\in B^{4,20}_{\Pi}(\ZZ)$, $\nL\notin B^{4,20}_{\Pi,inv}(\ZZ)$, are actually induced by topological defects in $\cTop_{GTVW}$. 

It is interesting to consider the action on RR charges of the duality defects $\CN_{ijk}$, described in appendix \ref{s:automA}. For example, the map $\mathsf{N}_{256}$ related to the defect $\CN_{256}$ is
\be\arraycolsep=3pt \def\arraystretch{0.8}\mathsf{N}_{256}=
\left(
\begin{array}{cccccccccccccccccccccccc}
 2 & 0 & 0 & 0 & 0 & 0 & 0 & 0 & 0 & 0 & 0 & 0 & 0 & 0 & 0 & 0 & 0 & 0 & 0 & 0 & 0 & 0 & 0 & 0 \\
 0 & 2 & 0 & 0 & 0 & 0 & 0 & 0 & 0 & 0 & 0 & 0 & 0 & 0 & 0 & 0 & 0 & 0 & 0 & 0 & 0 & 0 & 0 & 0 \\
 0 & 0 & 2 & 0 & 0 & 0 & 0 & 0 & 0 & 0 & 0 & 0 & 0 & 0 & 0 & 0 & 0 & 0 & 0 & 0 & 0 & 0 & 0 & 0 \\
 0 & 0 & 0 & 2 & 0 & 0 & 0 & 0 & 0 & 0 & 0 & 0 & 0 & 0 & 0 & 0 & 0 & 0 & 0 & 0 & 0 & 0 & 0 & 0 \\
 0 & 0 & 0 & 0 & 0 & 0 & 0 & 0 & 0 & 0 & 0 & 0 & 0 & 0 & 0 & 0 & 0 & 0 & 0 & 0 & 0 & 0 & 0 & 0 \\
 0 & 0 & 0 & 0 & 0 & 0 & 0 & 0 & 0 & 0 & 0 & 0 & 0 & 0 & 0 & 0 & 0 & 0 & 0 & 0 & 0 & 0 & 0 & 0 \\
 0 & 0 & 0 & 0 & 0 & 0 & 0 & 0 & 0 & 0 & 0 & 0 & 0 & 0 & 0 & 0 & 0 & 0 & 0 & 0 & 0 & 0 & 0 & 0 \\
 0 & 0 & 0 & 0 & 0 & 0 & 0 & 0 & 0 & 0 & 0 & 0 & 0 & 0 & 0 & 0 & 0 & 0 & 0 & 0 & 0 & 0 & 0 & 0 \\
 0 & 0 & 0 & 0 & 0 & 0 & 0 & 0 & 0 & 0 & 0 & 0 & 1 & -1 & -1 & -1 & 0 & 0 & 0 & 0 & 0 & 0 & 0 & 0
   \\
 0 & 0 & 0 & 0 & 0 & 0 & 0 & 0 & 0 & 0 & 0 & 0 & 1 & 1 & -1 & 1 & 0 & 0 & 0 & 0 & 0 & 0 & 0 & 0 \\
 0 & 0 & 0 & 0 & 0 & 0 & 0 & 0 & 0 & 0 & 0 & 0 & 1 & 1 & 1 & -1 & 0 & 0 & 0 & 0 & 0 & 0 & 0 & 0 \\
 0 & 0 & 0 & 0 & 0 & 0 & 0 & 0 & 0 & 0 & 0 & 0 & 1 & -1 & 1 & 1 & 0 & 0 & 0 & 0 & 0 & 0 & 0 & 0 \\
 0 & 0 & 0 & 0 & 0 & 0 & 0 & 0 & 1 & 1 & 1 & 1 & 0 & 0 & 0 & 0 & 0 & 0 & 0 & 0 & 0 & 0 & 0 & 0 \\
 0 & 0 & 0 & 0 & 0 & 0 & 0 & 0 & -1 & 1 & 1 & -1 & 0 & 0 & 0 & 0 & 0 & 0 & 0 & 0 & 0 & 0 & 0 & 0 \\
 0 & 0 & 0 & 0 & 0 & 0 & 0 & 0 & -1 & -1 & 1 & 1 & 0 & 0 & 0 & 0 & 0 & 0 & 0 & 0 & 0 & 0 & 0 & 0 \\
 0 & 0 & 0 & 0 & 0 & 0 & 0 & 0 & -1 & 1 & -1 & 1 & 0 & 0 & 0 & 0 & 0 & 0 & 0 & 0 & 0 & 0 & 0 & 0 \\
 0 & 0 & 0 & 0 & 0 & 0 & 0 & 0 & 0 & 0 & 0 & 0 & 0 & 0 & 0 & 0 & 0 & 0 & 0 & 0 & 0 & 0 & 0 & 0 \\
 0 & 0 & 0 & 0 & 0 & 0 & 0 & 0 & 0 & 0 & 0 & 0 & 0 & 0 & 0 & 0 & 0 & 0 & 0 & 0 & 0 & 0 & 0 & 0 \\
 0 & 0 & 0 & 0 & 0 & 0 & 0 & 0 & 0 & 0 & 0 & 0 & 0 & 0 & 0 & 0 & 0 & 0 & 0 & 0 & 0 & 0 & 0 & 0 \\
 0 & 0 & 0 & 0 & 0 & 0 & 0 & 0 & 0 & 0 & 0 & 0 & 0 & 0 & 0 & 0 & 0 & 0 & 0 & 0 & 0 & 0 & 0 & 0 \\
 0 & 0 & 0 & 0 & 0 & 0 & 0 & 0 & 0 & 0 & 0 & 0 & 0 & 0 & 0 & 0 & 0 & 0 & 0 & 0 & 0 & 0 & 0 & 0 \\
 0 & 0 & 0 & 0 & 0 & 0 & 0 & 0 & 0 & 0 & 0 & 0 & 0 & 0 & 0 & 0 & 0 & 0 & 0 & 0 & 0 & 0 & 0 & 0 \\
 0 & 0 & 0 & 0 & 0 & 0 & 0 & 0 & 0 & 0 & 0 & 0 & 0 & 0 & 0 & 0 & 0 & 0 & 0 & 0 & 0 & 0 & 0 & 0 \\
 0 & 0 & 0 & 0 & 0 & 0 & 0 & 0 & 0 & 0 & 0 & 0 & 0 & 0 & 0 & 0 & 0 & 0 & 0 & 0 & 0 & 0 & 0 & 0 \\
\end{array}
\right)
\ee While $\mathsf{N}_{256}\in B^{4,20}_{\Pi,inv}(\ZZ)$, it cannot be written as a sum of the form $\nL_g+\nL_h$,  for some $g,h\in G_{GTVW}$: one always needs an integral linear combination of (at least) four $\nL_g$, $g\in G_{GTVW}$ with (at least) one \emph{negative} coefficient. This can be proved, for example, by noticing that each $\nL_g$ acts by a signed permutation in this basis, so that each row or column in $\nL_g$ has always a single non-zero entry, equal to $\pm 1$. Thus, any sum $\nL_g+\nL_h$ has at most two non-zero entries in each row or column, so it cannot be equal to $\mathsf{N}_{256}$. This example shows that the image of $\cTop_{GTVW}$ in ${\rm End}(\Gamma^{4,20})$ by  $\CL\to \nL$ is strictly larger than the image of the subcategory generated only by invertible defects.

Finally, let us show an example of a deformation of this model lifting all topological defects in $\cTop_{GTVW}$. Let us relabel $\{|\psi_1\rangle,\ldots,|\psi_{24}\rangle\}$ the ordered basis of R-R ground states, where $|\psi_{4k-3}\rangle,\ldots,|\psi_{4k}\rangle$ are the states in the $k$-th tetrad 
$$ |\psi_{4k-3}\rangle=|1,k\rangle,\qquad |\psi_{4k-2}\rangle=|2,k\rangle,\qquad |\psi_{4k-1}\rangle=|3,k\rangle,\qquad |\psi_{4k}\rangle=|4,k\rangle.
$$ We denote by $\psi_1,\ldots,\psi_{24}$ the corresponding operators.
The exactly marginal operators are the supersymmetric descendants of NS-NS operators of weights $h=\bar h=\frac{1}{2}$, obtained from OPE of one of the spectral flow operators $\psi_1,\ldots,\psi_4$ and one of the other $20$ RR ground fields $\psi_5,\ldots,\psi_{24}$. Thus, we can label a basis of such NS-NS operators by $\chi_{i,j}$, $i=1,\ldots,4$, $j=5,\ldots,24$. Each infinitesimal deformation is generated by a linear combination $\chi$ of the $\chi_{i,j}$. A defect $\CL$ is preserved by the deformation generated by $\chi$ if and only if $\chi$ is transparent with respect to the defect $\CL$. This is equivalent to saying that the operator $\hat\CL$ acts on the state $|\chi\rangle$ in the same way as on the vacuum, i.e.
\be\label{Linvar} \hat\CL|\chi\rangle=\langle\CL\rangle |\chi\rangle\ .
\ee Let us consider the following linear combination
\be\label{breakdef} |\chi\rangle=|\chi_{1,5}\rangle+\pi |\chi_{1,6}\rangle+\pi^2|\chi_{1,7}\rangle+\ldots+\pi^{19}|\chi_{1,24}\rangle=\sum_{j=5}^{24}\pi^{j-5}|\chi_{1,j}\rangle\ ,
\ee where $\pi=3.14\ldots$. Because $\CL$ is transparent to the spectral flow operators $\psi_1,\ldots,\psi_4$, the action of $\hat\CL$ on $\chi_{i,j}$ is the same as on $\psi_j$
$$ \hat\CL|\chi_{i,j}\rangle =\sum_{k=5}^{24} \nL_{jk}|\chi_{i,k}\rangle\ .
$$ Thus, for the deformation \eqref{breakdef},  the condition \eqref{Linvar} reads
$$ \sum_{k,j=5}^{24}\pi^{j-5}\nL_{jk}|\chi_{1,k}\rangle=\langle \CL\rangle\sum_{k=5}^{24}\pi^{k-5} |\chi_{1,k}\rangle,
$$ which is equivalent to
\be \sum_{j=5}^{24}\pi^{j-k}\frac{\nL_{jk}}{\langle \CL\rangle}=1,\qquad\qquad \forall k=5,\ldots,24\ .
\ee It is now sufficient to observe that, in this basis, the matrix $\frac{1}{\langle\CL\rangle}\nL_{jk}$ has rational entries, so that the only way this relation can be satisfied is
\be \frac{\nL_{jk}}{\langle \CL\rangle}=\begin{cases}
    1 & \text{for }j=k\\
    0 & \text{for }j\neq k\ .
\end{cases}\ee Thus, the only topological defects that are preserved by the deformation $\chi$ are the ones that are proportional to the identity, in agreement with the general statements in section \ref{s:genK3models}. The same result would hold for any linear combination $|\chi\rangle=\sum_{j=5}^{24}\alpha_j|\chi_{i,j}\rangle$ such that the ratios $\alpha_j/\alpha_k$  are irrational for all $j\neq k$.

\section{Another example: the Gepner model $(1)^6$}\label{s:onetosix}

The second example of K3 model we will analyse is the Gepner model $(1)^6$, that was considered in \cite{Gaberdiel:2011fg}. In general, Gepner models are obtained by considering products of $\CN=2$ minimal models, each one having central charge $c_k=\frac{3k}{k+2}$ with $k\in \NN$, and then taking an orbifold that projects on the states carrying integral total $U(1)$ R-charge. When the total central charge is $c=6$ and the spectrum is invariant under spectral flow, then the $\CN=(2,2)$ superconformal algebra is enhanced to $\CN=(4,4)$ and the CFT is always a non-linear sigma model on K3.

In particular, as the name suggests, the $(1)^6$ model $\calC_{(1)^6}$ is obtained by taking six copies of the $k=1$ minimal model with $c_{k=1}=1$. We refer to \cite{Gaberdiel:2011fg} for a description of this CFT as a Gepner model.  
Here, we notice that this particular Gepner model admits different descriptions that might be more useful to study symmetries, defects and D-branes.

\subsection{The $(1)^6$ Gepner model as a free scalar CFT}

Let us describe the K3 model $\calC_{(1)^6}$ as a rational CFT. Both chiral and anti-chiral superalgebras $\CA$ and $\bar\CA$ contain\footnote{Notice that this is not  the full chiral algebra of this CFT.} the product of six copies of the $\CN=2$ superconformal algebra at $c=1$. It is known (see appendix \ref{a:Neq2freeboson}) that the bosonic subalgebra of $\CN=2$ at $c=1$ can be completely described in terms of a chiral free boson on a circle of suitable radius. This means that the K3 model $\calC_{(1)^6}$ is generated by $6$ holomorphic and six antiholomorphic chiral free bosons
$$ i\partial X^k(z)\ ,\qquad\qquad i\bar \partial X^k(z)\ ,\qquad\qquad k=1,\ldots,6\ ,
$$ together with the vertex operators $V_\lambda(z,\bar z)\sim :e^{i(\vec{\lambda}_L\cdot \vec{X}_L(z)+\vec{\lambda}_R\cdot \vec{X}_R(\bar z))}:$ of conformal weights $(h_L,h_R)=(\frac{\vec\lambda_L^2}{2},\frac{\vec\lambda_L^2}{2})$, where $\lambda\equiv (\vec{\lambda_L},\vec{\lambda}_R)$ takes values in a suitable integral (odd) lattice $\Upsilon\in \RR^{6,6}$. Let $\{e_1,\ldots,e_6,\tilde e_1,\ldots,\tilde e_6\}$ denote an orthonormal basis of $\RR^{6,6}$, so that $\lambda_L=\sum_i^6\lambda_ie_i$ and $\lambda_R=\sum_k^6 \tilde \lambda_k\tilde e_k$. The lattice $\Upsilon$ is given by the union $\Upsilon=\Upsilon^{NS-NS}\cup \Upsilon^{R-R}$ of a NS-NS and a R-R component.

The chiral and anti-chiral algebras $\CA$ and $\bar\CA$ are generated by the currents $i\partial X^k$ and $i\bar\partial X^k$, and by purely holomorphic (respectively, anti-holomorphic) vertex operators $V_{(\vec \lambda_L,0)}(z)$ (respectively, $V_{(0,\vec \lambda_R)}(\bar z)$)  with $\vec \lambda_L$ takes values in a suitable six dimensional positive definite lattice $\Lambda_\CA\subset\RR^6$, such that  
\be  (\Upsilon^{NS-NS}\cap \RR^{6,0})\cong \Lambda_\CA\ ,\qquad (\Upsilon^{NS-NS}\cap \RR^{0,6})\cong \Lambda_\CA(-1) \ ,\ee where $\Lambda_\CA(-1)$ denotes the lattices with opposite quadratic form. In particular, the $(\CN=2)^6\subset \CA$ algebra can be described as the lattice super vertex operator algebra (SVOA) associated with the odd lattice $(\sqrt{3}\ZZ)^6\equiv (\sqrt{3}\ZZ)\oplus \ldots \oplus (\sqrt{3}\ZZ)$ (six copies), so that 
$$ (\sqrt{3}\ZZ)^6\subset \Lambda_\CA\ .
$$
The $(\CN=2)^6$ algebra is not the full chiral algebra $\CA$ of the model $\calC_{(1)^6}$: one needs to extend this SVOA by two additional holomorphic currents $V_{\vec\pm (\lambda_L,0)}(z)$, corresponding to the  lattice vectors 
$$ \pm \vec\lambda_L=\pm (\frac{1}{\sqrt{3}},\frac{1}{\sqrt{3}},\frac{1}{\sqrt{3}},\frac{1}{\sqrt{3}},\frac{1}{\sqrt{3}},\frac{1}{\sqrt{3}})\subset \RR^6\ .$$
The full chiral and antichiral algebras $\CA$ and $\bar\CA$  of the theory are therefore both isomorphic to the lattice SVOA associated 
with the integral odd lattice
\be \Lambda_{\CA} =\{ \frac{n}{\sqrt{3}}(1,1,1,1,1,1)+\sqrt{3}(x_1,\ldots,x_6)\mid n,x_1,\ldots,x_6\in \ZZ\}\subset \RR^6\ .
\ee The bosonic chiral algebra is the lattice VOA associated with the even lattice
\be \Lambda^{bos}_{\CA} =\{ \frac{n}{\sqrt{3}}(1,1,1,1,1,1)+\sqrt{3}(x_1,\ldots,x_6)\mid n,x_1,\ldots,x_6\in \ZZ,\ \sum_ix_i\in 2\ZZ\}\subset \Lambda_{\CA}\ .
\ee
Therefore, the NS-NS lattice $\Upsilon^{NS-NS}$ contains $\Lambda\oplus \Lambda(-1)$ as a sublattice, corresponding to the chiral and anti-chiral algebra $\CA\times \bar\CA$.

 The representations of the algebra $\CA\times \bar\CA$ are labeled by cosets $[v]:v+(\Lambda_{\CA}\oplus \Lambda_{\CA}(-1))$, for suitable $v\in \RR^{6,6}$ so that $\Upsilon^{NS-NS}$ and $\Upsilon^{R-R}$ decompose as unions of cosets
\be
\Upsilon^{NS-NS}=\bigcup_{[v]\in A_{(NS-NS)}}  v+(\Lambda_{\CA}\oplus \Lambda_{\CA}(-1))
\ee and 
\be
\Upsilon^{R-R}=\bigcup_{[v]\in A_{(R-R)}}  v+(\Lambda_{\CA}\oplus \Lambda_{\CA}(-))
\ee
In particular, the NS-NS sector contains the representations
$$ A_{(NS-NS)} =\Bigl\{\bigl[\tfrac{1}{\sqrt{3}}(a_1,\ldots,a_6;a_1,\ldots,a_6)\bigr]\ \mid\ a_i\in \ZZ/3\ZZ,\ ,\ \sum_{i=1}^6 a_i\in 3\ZZ\Bigr\}
$$ while the R-R sector includes
$$ A_{(R-R)} =\Bigl\{\bigl[\tfrac{1}{\sqrt{3}}(a_1,\ldots,a_6;a_1,\ldots,a_6)\bigr]\ \mid\ a_i\in \frac{1}{2}+\ZZ/3\ZZ,\ ,\ \sum_{i=1}^6 a_i\in 3\ZZ\Bigr\}\ .
$$ Notice that we have the identifications
$$ [v]\sim [v+\frac{1}{\sqrt{3}}(1,1,1,1,1,1;1,1,1,1,1,1)]
$$ and the fusion rules
$$ [v]\times [v']=[v+v']
$$

In this description, the $\CN=4$ superconformal algebra at $c=6$ is given as follows. The $\hat{su}(2)_1$ R-symmetry algebra is generated by
\be\label{Rsymmu16} J^3(z)=\frac{1}{2\sqrt{3}}\sum_{k=1}^6 i\partial X^k(z)\ ,\qquad J^\pm(z)=
V_{(\pm\frac{1}{\sqrt{3}},\ldots,\pm\frac{1}{\sqrt{3}};0,\ldots,0)}(z) 
\ee while the supercurrents are given by
\be G^\pm(z)=\sqrt{\frac{2}{3}}\sum_{k=1}^6 V_{\pm\sqrt{3}e_k}(z)\ ,\qquad \tilde G^\pm(z)=\sqrt{\frac{2}{3}}\sum_{k=1}^6 V_{\mp\sqrt{3}e_k\pm (\frac{1}{\sqrt{3}},\ldots,\frac{1}{\sqrt{3}};0,\ldots,0)}(z)
\ee
The $24$ R-R ground states are given by $V_\lambda$ with the following vectors $\lambda\in \Upsilon^{R-R}$:
\begin{align*}
    &\pm\frac{1}{2\sqrt{3}}(1,\ldots,1;1,\ldots,1)\qquad \pm\frac{1}{2\sqrt{3}}(1,\ldots,1;-1,\ldots,-1)\qquad 4\text{ spectral flow generators}\\
    &\frac{1}{2\sqrt{3}}(1,1,1,-1,-1,-1;1,1,1,-1,-1,-1)\text{ and permutations}\qquad 20\text{ operators}\ .
\end{align*}

The symmetries of the model preserving the $\CN=(4,4)$ algebra form a group \be G_{(1)^6}\cong \ZZ_3^4\rtimes A_6\ ,\ee i.e. the extension by $\ZZ_3^4$ of the alternating group $A_6$ of even permutations of six objects \cite{Gaberdiel:2011fg}. Here, the normal subgroup $\ZZ_3^4$ is the group of CFT symmetries acting trivially on the chiral and anti-chiral algebras $\CA\times \bar\CA$. It is generated by elements $t_1^{m_1}\cdots t_6^{m_6}$ acting by
\be\label{Z34} \prod_{k=1}^6t_k^{m_k}(V_\lambda) = 
e^{\frac{2\pi i}{3} \sum_{k=1}^6m_ka_k}V_\lambda\ ,\qquad \lambda\in \frac{1}{\sqrt{3}}(a_1,\ldots,a_6;a_1,\ldots,a_6)+(\Lambda_\CA\oplus\Lambda_{\CA}(-1))\ ,
\ee where $m_1,\ldots,m_6\in \ZZ/3\ZZ$ satisfy the condition $\sum_{k=1}^6 m_k\equiv 0\mod 3$. Notice that, because of the condition $\sum_{k=1}^6 a_k\equiv 0\mod 3$, we have that the action is trivial whenever $m_1=\ldots=m_6$
\be
t_1t_2t_3t_4t_5t_6=t_1^2t_2^2t_3^2t_4^2t_5^2t_6^2=1\ .
\ee 

The alternating group $A_6$ acts by even permutations $\sigma$ on the left- and right-moving $\hat{u}(1)$ currents, and transforms the vertex operators accordingly:
$$ i\partial X^k\mapsto i\partial X^{\sigma(k)}\ ,\qquad i\bar\partial X^k\mapsto i\bar\partial X^{\sigma(k)}\ ,\qquad V_{\sum_k(\lambda_ke_k+\tilde\lambda_k\tilde e_k)}\mapsto \pm V_{\sum_k(\lambda_ke_{\sigma(k)}+\tilde\lambda_k\tilde e_{\sigma(k)})}\ ,
$$ where the sign in front of each $V_\lambda$ must be chosen in such a way that the OPE between vertex operators is preserved.

The full group of symmetries of the CFT is $[(SU(2)\times U(1)^5)\times (SU(2)\times U(1)^5)]\rtimes S_6$, where the two $SU(2)\times U(1)^5$ factors are generated by the zero modes of the holomorphic and anti-holomorphic currents. The permutation group $S_6$ acts simultaneously on each of the $U(1)^5$ factors as in the $5$-dimensional irreducible representation.

\subsection{The $(1)^6$ model as a torus orbifold $T^4/\ZZ_3$}\label{s:OnetosixIsT4z3}

The $\calC_{(1)^6}$ Gepner model can also be described in terms of a torus orbifold $T^4/\ZZ_3$. In fact, there are actually many different ways to obtain this model from an orbifold. 
One simple way to prove that $\calC_{(1)^6}$ is a torus orbifold is to notice that the symmetry group $\ZZ_3^4:A_6$ contains some symmetry $Q$ of order $3$ (e.g. the element of $\ZZ_3^4$ with $m_1=m_2=m_3=1$ and $m_4=m_5=m_6=0$) whose trace on the $24$ dimensional representation of RR ground states $V$ is $-3$. The orbifold of $\calC_{(1)^6}$ by $Q$ gives a sigma model $\CT:=\calC_{(1)^6}/\langle Q\rangle$ on $T^4$, as can be checked by verifying that the elliptic genus of $\calC_{(1)^6}/\langle Q\rangle$ is $0$ \cite{Gaberdiel:2012um}. By reversibility of the orbifold procedure, because $\CT=\calC_{(1)^6}/\langle Q\rangle$ with $\langle Q\rangle\cong\ZZ_3$, we must conclude that the K3 model $\calC_{(1)^6}$ is an orbifold $\calC_{(1)^6}=\CT/\langle g\rangle$ of the $T^4$ model $\CT$ by a cyclic group $\langle g\rangle\cong \ZZ_3$, and $Q$ is the `quantum symmetry' acting trivially on the untwisted sector and multiplying by $e^{\frac{2\pi i k}{3}}$ the $g^k$-twisted sector. 

Knowing the action of $Q$ on the states of $\calC_{(1)^6}$ allows us to identify the untwisted sector and the $g^k$ twisted sector in the bosonic description of the previous section:
$$ \text{untwisted sector: } x_1+x_2+x_3\equiv 0\mod 3\ ,$$
$$ g^k\text{twisted sector: } x_1+x_2+x_3\equiv k\mod 3\ .$$

It is easy to construct the sigma model $\CT$ on $T^4$ corresponding to the orbifold $\CT=\calC_{(1)^6}/\langle Q\rangle$.  We know that the symmetry $Q$ can be written as
\be V_\lambda \mapsto e^{2\pi i(\delta_L\cdot\lambda_L-\delta_R\cdot\lambda_R)} V_\lambda=V_\lambda
\ee where
\be \delta =\frac{1}{2\sqrt{3}}(4,4,4,0,0,0;0,0,0,0,0,0)
\ee is a vector such that $3\delta\in \Gamma$ and $\delta_L^2-\delta_R^2\in 2\ZZ$. The untwisted sector of $\calC_{(1)^6}/\langle Q\rangle$ contains the currents $i\partial X^i(z),i\bar\partial X^i(\bar z)$, as well as the vertex operators $V_\lambda$, $\lambda\in \Upsilon^{Q}$, where
\be \Upsilon^Q=\{\lambda\in\Upsilon\mid \delta_L\cdot\lambda_L-\delta_R\cdot\lambda_R\in \ZZ\}\ .
\ee The full orbifold $\calC_{(1)^6}/\langle Q\rangle$ is generated by the $i\partial X^i(z),i\bar\partial X^i(\bar z)$ currents together with the vertex operators $V_\lambda$ with $\lambda$ taking values in the extended lattice
\be \Upsilon'=\Upsilon^Q\cup(\delta+\Upsilon^Q)\cup (2\delta+\Upsilon^Q)\ .
\ee Let us consider the holomorphic fields of this orbifold. As expected for a supersymmetric sigma model on $T^4$, there are four holomorphic fields of weight $1/2$, namely 
$$ \chi_1^{\pm}(z)\sim V_{\pm (\frac{1}{\sqrt{3}},\frac{1}{\sqrt{3}},\frac{1}{\sqrt{3}},0,0,0;0,\ldots,0)}\qquad\qquad \chi_2^{\pm}(z)\sim V_{\pm (0,0,0,\frac{1}{\sqrt{3}},\frac{1}{\sqrt{3}},\frac{1}{\sqrt{3}};0,\ldots,0)}\ .
$$ By taking normal ordered products of pairs of these four fermions, one obtains the $6$ currents of a `fermionic' $so(4)_1=su(2)_1\oplus su(2)_1$ algebra, where one of the $su(2)_1$ is the R-symmetry of the $\CN=4$ superconformal algebra with currents \eqref{Rsymmu16}, and the second commuting $su(2)_1$ algebra is generated by
$$ i\sum_{k=1}^3(\partial X^k-\partial X^{k+3})\ ,\qquad V_{\pm (\frac{1}{\sqrt{3}},\frac{1}{\sqrt{3}},\frac{1}{\sqrt{3}},-\frac{1}{\sqrt{3}},-\frac{1}{\sqrt{3}},-\frac{1}{\sqrt{3}};0,\ldots,0)}(z)\ .
$$ Besides the $so(4)_1$ currents that can be obtained from the OPE of the spin $1/2$ fields, the orbifold $\calC_{(1)^6}/\langle Q\rangle$ contains $16$ additional spin $1$ fields, that can be arranged into two commuting algebras
$$ \frac{i}{\sqrt{2}}(\partial X^1-\partial X^2)\ ,\qquad \frac{i}{\sqrt{2}}(\partial X^2-\partial X^3)
$$
$$ V_{\pm (\frac{-2}{\sqrt{3}},\frac{1}{\sqrt{3}},\frac{1}{\sqrt{3}},0,0,0;0,\ldots,0)}(z)\,\quad V_{\pm (\frac{1}{\sqrt{3}},\frac{-2}{\sqrt{3}},\frac{1}{\sqrt{3}},0,0,0;0,\ldots,0)}(z)\ ,\quad
V_{\pm (\frac{1}{\sqrt{3}},\frac{1}{\sqrt{3}},\frac{-2}{\sqrt{3}},0,0,0;0,\ldots,0)}(z)
$$
and
$$ \frac{i}{\sqrt{2}}(\partial X^4-\partial X^5)\ ,\qquad \frac{i}{\sqrt{2}}(\partial X^5-\partial X^6)
$$
$$ V_{\pm (0,0,0,\frac{-2}{\sqrt{3}},\frac{1}{\sqrt{3}},\frac{1}{\sqrt{3}};0,\ldots,0)}(z)\,\quad V_{\pm (0,0,0,\frac{1}{\sqrt{3}},\frac{-2}{\sqrt{3}},\frac{1}{\sqrt{3}};0,\ldots,0)}(z)\ ,\quad
V_{\pm (0,0,0,\frac{1}{\sqrt{3}},\frac{1}{\sqrt{3}},\frac{-2}{\sqrt{3}};0,\ldots,0)}(z)\ .
$$ These two commuting algebras have dimension $8$ and rank $2$, and this is enough to conclude that they are both isomorphic to $su(3)_1$. Thus, the orbifold model $\calC_{(1)^6}/\langle Q\rangle$ has both a chiral and anti-chiral algebras isomorphic to $(su(3)_1)^2$, besides the $so(4)_1$ from the free fermions. For this reason, we denote this torus model as $\CT_{A_2^2}$, so that
\be \calC_{(1)^6}/\langle Q\rangle=\CT_{A_2^2}\ .
\ee
The $(su(3)_1)^2$ currents include the superconformal descendants $i\partial Z_i^\pm$ of the free fermions $\chi^\pm_i$. The scalar fields $Z_1^+,Z_2^+$ can be identified with the complex coordinates on the target torus $T^4$, with $Z_1^-,Z_2^-$ their complex conjugate. Notice that the currents $i\partial Z_i^\pm$ are linear combinations of the $i\partial X^k$ and of suitable vertex operators $V_\lambda$; this means that one cannot give the $X^k$ any geometric interpretation as the coordinates on the target torus.

As a CFT, any supersymmetric non-linear sigma model $\CT_{A_2^2}$ on $T^4$ can be described as the product $\CT_{A_2^2}=F(4)\otimes \bar{F}(4)\times \CT^{bos}_{A_2^2}$ of a theory $F(4)\otimes \bar{F}(4)$ of four chiral and four antichiral free fermions with central charges $(2,2)$, times the bosonic sigma model $\CT^{bos}_{A_2^2}$ on the same torus $T^4$, with central charges $(4,4)$. Let us focus on the bosonic sigma model factor, whose fields are characterised by having non-singular OPE with all the chiral and anti-chiral free fermions (i.e. they commute with all the free fermion modes). This is an honest bosonic CFT, with modular invariant partition function.  It contains the chiral algebra $(su(3)_1)^2$ -- the free fermions have zero charge with respect to this algebra, so their OPE with the currents is non-singular. More generally, every $V_\lambda$ with $\lambda$ orthogonal to the vectors
$$ \pm (\frac{1}{\sqrt{3}},\frac{1}{\sqrt{3}},\frac{1}{\sqrt{3}},0,0,0;0,\ldots,0), \qquad \pm (0,0,0,\frac{1}{\sqrt{3}},\frac{1}{\sqrt{3}},\frac{1}{\sqrt{3}};0,\ldots,0)
$$ and their anti-holomorphic counterparts, belong to the bosonic sigma model.

As we show below, the bosonic sigma model $\CT^{bos}_{A_2^2}$ on $T^4$ is just the product $\CT^{bos}_{A_2^2}=\CT^{bos}_{A_2}\otimes \CT^{bos}_{A_2}$ of two isomorphic bosonic sigma models $\CT^{bos}_{A_2}$ on $T^2$, each one corresponding to the diagonal modular invariant for the algebra $su(3)_1$.
Recall that $su(3)_1$ admits three different modules, which we denote by $M_{\bf 1}$ (the vacuum), $M_{\bf 3}$ and $M_{\bar{\bf 3}}$; the latter modules are charge conjugate of each other, and their highest weight vector has weight $2/3$. Therefore, the spectrum of the diagonal modular invariant is
$$ (M_{\bf 1}\otimes \overline{M}_{\bf 1})\oplus (M_{\bf 3}\otimes \overline{M}_{\bf 3})\oplus (M_{\bar{\bf 3}}\otimes \overline{M}_{\bar{\bf 3}})\ .
$$ More precisely, the first $\CT^{bos}_{A_2}$ is generated by the first (holomorphic and anti-holomorphic) $su(3)_1$ factor, by the $9$ non-holomorphic vertex operators of conformal weights $(2/3,2/3)$
$$ V_{(\frac{1}{\sqrt{3}},\frac{-1}{\sqrt{3}},0,0,0,0;\frac{1}{\sqrt{3}},\frac{-1}{\sqrt{3}},0,0,0,0)}\quad V_{(0,\frac{1}{\sqrt{3}},\frac{-1}{\sqrt{3}},0,0,0;0,\frac{1}{\sqrt{3}},\frac{-1}{\sqrt{3}},0,0,0)}\quad V_{(\frac{-1}{\sqrt{3}},0,\frac{1}{\sqrt{3}},0,0,0;\frac{-1}{\sqrt{3}},0,\frac{1}{\sqrt{3}},0,0,0)}
$$ 
$$ V_{(\frac{1}{\sqrt{3}},\frac{-1}{\sqrt{3}},0,0,0,0;\frac{-1}{\sqrt{3}},0,\frac{1}{\sqrt{3}},0,0,0)}\quad V_{(0,\frac{1}{\sqrt{3}},\frac{-1}{\sqrt{3}},0,0,0;\frac{1}{\sqrt{3}},\frac{-1}{\sqrt{3}},0,0,0,0)}\quad V_{(\frac{-1}{\sqrt{3}},0,\frac{1}{\sqrt{3}},0,0,0;0,\frac{1}{\sqrt{3}},\frac{-1}{\sqrt{3}},0,0,0)}
$$ 
$$ V_{(\frac{1}{\sqrt{3}},\frac{-1}{\sqrt{3}},0,0,0,0;0,\frac{1}{\sqrt{3}},\frac{-1}{\sqrt{3}},0,0,0)}\quad V_{(0,\frac{1}{\sqrt{3}},\frac{-1}{\sqrt{3}},0,0,0;\frac{-1}{\sqrt{3}},0,\frac{1}{\sqrt{3}},0,0,0)}\quad V_{(\frac{-1}{\sqrt{3}},0,\frac{1}{\sqrt{3}},0,0,0;\frac{1}{\sqrt{3}},\frac{-1}{\sqrt{3}},0,0,0,0)}
$$ that are the ground states of the $M_{\bf 3}\otimes \overline{M}_{\bf 3}$ modules, and by the $9$ vertex operators $V_\lambda$ with the opposite signs for $\lambda$, namely
$$ V_{-(\frac{1}{\sqrt{3}},\frac{-1}{\sqrt{3}},0,0,0,0;\frac{1}{\sqrt{3}},\frac{-1}{\sqrt{3}},0,0,0,0)}\quad  V_{-(0,\frac{1}{\sqrt{3}},\frac{-1}{\sqrt{3}},0,0,0;0,\frac{1}{\sqrt{3}},\frac{-1}{\sqrt{3}},0,0,0)}\quad V_{-(\frac{-1}{\sqrt{3}},0,\frac{1}{\sqrt{3}},0,0,0;\frac{-1}{\sqrt{3}},0,\frac{1}{\sqrt{3}},0,0,0)}
$$ 
$$ V_{-(\frac{1}{\sqrt{3}},\frac{-1}{\sqrt{3}},0,0,0,0;\frac{-1}{\sqrt{3}},0,\frac{1}{\sqrt{3}},0,0,0)}\quad V_{-(0,\frac{1}{\sqrt{3}},\frac{-1}{\sqrt{3}},0,0,0;\frac{1}{\sqrt{3}},\frac{-1}{\sqrt{3}},0,0,0,0)}\quad V_{-(\frac{-1}{\sqrt{3}},0,\frac{1}{\sqrt{3}},0,0,0;0,\frac{1}{\sqrt{3}},\frac{-1}{\sqrt{3}},0,0,0)}
$$ 
$$ V_{-(\frac{1}{\sqrt{3}},\frac{-1}{\sqrt{3}},0,0,0,0;0,\frac{1}{\sqrt{3}},\frac{-1}{\sqrt{3}},0,0,0)}\quad V_{-(0,\frac{1}{\sqrt{3}},\frac{-1}{\sqrt{3}},0,0,0;\frac{-1}{\sqrt{3}},0,\frac{1}{\sqrt{3}},0,0,0)}\quad V_{-(\frac{-1}{\sqrt{3}},0,\frac{1}{\sqrt{3}},0,0,0;\frac{1}{\sqrt{3}},\frac{-1}{\sqrt{3}},0,0,0,0)}
$$ that are the ground states of the charge conjugate module $M_{\bar{\bf 3}}\otimes \overline{M}_{\bar{\bf 3}}$. The second $\CT^{bos}_{A_2}$ torus model is generated in a similar way, by exchanging the first free set of bosonic coordinates $X^1,X^2,X^3$ with the second set $X^4,X^5,X^6$.

The analysis in this section is sufficient to identify $\CT_{A_2^2}$ with the supersymmetric $T^4$ sigma model denoted by $A_2^2$ in section 4.4.4 of \cite{Volpato:2014zla}. Geometrically, the target space is the product $T^2\times T^2$ of two tori $T^2=\CC/(\ZZ+e^{\frac{2\pi i}{3}}\ZZ)$ with a suitable B-field.

The K3 model $\calC_{(1)^6}$ can be obtained by the orbifold of $\CT_{A_2^2}$ by a symmetry $g$ of order $3$. More precisely, $g$ belongs the class denoted by 3A in \cite{Volpato:2014zla}, and corresponds to a geometric rotation obtained by multiplying the complex coordinates $(z_1,z_2)$ parametrizing the first and second torus $T^2=\CC/(\ZZ+e^{\frac{2\pi i}{3}}\ZZ)$ by $(z_1,z_2)\mapsto (e^{\frac{2\pi i}{3}}z_1,e^{-\frac{2\pi i}{3}}z_2)$. Indeed, only for this class of symmetries the orbifold $\CT_{A_2^2}/\langle g\rangle$ contains $6$ R-R ground states in the untwisted sector and $9+9$ in the two twisted sectors, as expected for the model we are considering. The $g$-invariant untwisted sector contains a subalgebra $\widehat{su}(2)_1\oplus \widehat{u}(1)$ from the `fermionic' $\widehat{so}(4)_1$ algebra of $\CT_{A_2^2}$, as well as subalgebras $\widehat{u}(1)^2\subset \widehat{su}(3)_1$ for each of the factors in the bosonic $\widehat{su}(3)_1$. Altogether, we get $\widehat{su}(2)_1\oplus \widehat{u}(1)^5$, which is the current algebra of $\calC_{(1)^6}=\CT_{A_2^2}/\langle g\rangle$ -- no further currents come from the twisted sectors.

The full symmetry group of the $\CT_{A_2^2}$ torus model preserving the small $\CN=(4,4)$ algebra and the spectral flow generators is $U(1)^8\rtimes G_0$ where $G_0$ is a group of order $36$ acting by automorphisms on the winding-momentum lattice. 
The subgroup commuting with $g$ is $\langle g\rangle\times (\ZZ_3^4\rtimes \ZZ_6)$, where $\ZZ_3^4\subset U(1)^8$ and $\ZZ_6$ is generated $\calR:(z_1,z_2)\mapsto (-z_1,-z_2)$ and by a non-geometric symmetry of order $3$ that rotates only the holomorphic currents $(\partial Z_1,\partial Z_2)\mapsto (e^{\frac{2\pi i}{3}}\partial Z_1,e^{-\frac{2\pi i}{3}}\partial Z_2)$ while keeping the anti-holomorphic currents $(\bar\partial Z_1,\bar\partial Z_2)$ fixed. These torus symmetries induce a group of symmetries of the orbifold $\calC_{(1)^6}=\CT_{A_2^2}/\langle g\rangle$ isomorphic to $3^{1+4}\rtimes \ZZ_6$, obtained from $\langle g\rangle\times (\ZZ_3^4\rtimes \ZZ_6)$ by first quotienting out $\langle g\rangle$, and then taking a non-trivial central extension by the quantum symmetry $\langle Q\rangle\cong\ZZ_3$, as described in section \ref{s:torusOrbs}.

\subsection{Symmetries and topological defects}

After describing in some detail the model $\calC_{(1)^6}$, we are now ready to discuss the topological defects in $\cTop_{(1)^6}$ that preserve the $\CN=(4,4)$ algebra and the spectral flow. Such a category includes:
\begin{enumerate}
    \item \emph{Invertible defects.} There is one defect $\CL_g\in \cTop_{(1)^6}$ for each $g\in G_{(1)^6}\cong \ZZ_3^4\rtimes A_6$. 
    \item \emph{Defects acting by automorphisms on the chiral algebra.} The Verlinde lines preserving the full chiral and anti-chiral algebras $\CA\times \bar\CA$ must be in one-to-one correspondence with the representations of $\CA\times \bar\CA$ and with the same fusion rules. This implies that the Verlinde lines are all invertible, and generate the subgroup $\ZZ_2^4\subset G_{(1)^6}$.\\
    It is more interesting to consider topological lines acting on $\CA\times \bar\CA$ by algebra automorphisms that fix the $\CN=(4,4)$ subalgebra, and do \emph{not} lift to CFT symmetries. The analysis is very similar to the one in appendix \ref{s:automA}, so we just summarize the main points. Recall that the full group of symmetries of the CFT is $((SU(2)\times U(1)^5)^2)\rtimes S_6$). Up to conjugation by such CFT symmetries, we can focus on defects that act on the chiral algebra $\CA$ by an \emph{outer} automorphism $\rho_L$, while acting trivially on $\bar\CA$. 
    Outer automorphisms preserving the $\CN=4$ subalgebra correspond to \emph{even} permutations in $A_6$; it is sufficient to consider a representative $\rho_L$ for each even conjugacy class in $S_6$, i.e. for each even cycle shape. 
    There are five possible cycle shapes, corresponding to the partitions $1+1+2+2$, $1+1+1+3$, $3+3$, $2+4$, $1+5$. Let $\CL_{(\rho_L,1)}$ denote one of these defects. 
    Then, the  $\CA\times \bar \CA$ representation labeled by $a_1,\ldots,a_6\in\ZZ/3\ZZ$ is annihilated by the operator $\hat\CL_{(\rho_L,1)}$ unless it satisfies \be\label{rhoLinv} (a_{\rho_L(1)},\ldots,a_{\rho_L(6)}) = (a_{1},\ldots,a_{6})+n(1,\ldots,1)\ ,\ee for some $n\in \ZZ/3\ZZ$. For a given $(\rho_L,1)$, the number of simple defects $\CL_{(\rho_L,1)}$ equals the number of representations satisfying \eqref{rhoLinv}, and can be obtained from each other by conjugation $\CL_g \CL_{(\rho_L,1)}\CL_g^\dual$ by invertible defects $\CL_g$ with $g\in \ZZ_3^4\subset G_{(1)^6}$. 
    Let us also notice that for every invertible defect $\CL_g$ with $g\in \ZZ_3^4\subset G_{(1)^6}$ and acting trivially on all representations satisfying \eqref{rhoLinv}, one has $\CL_g\CL_{(\rho_L,1)}=\CL_{(\rho_L,1)}\CL_g=\CL_{(\rho_L,1)}$.\\ 
    For all $\CL_{(\rho_L,1)}$ we can find an invertible $\CL_g$ such that $\CN=\CL_g\CL_\rho$ is unoriented -- this implies that $\CN$ acts on the chiral and antichiral algebras by  involutions $(\rho'_L,\rho'_R)$, \be\label{involutions}
    {\rho'_L}^2=1={\rho'_R}^2\ ,\qquad\rho'_L\rho'_R=\rho_L\ .\ee Furthermore, because $\CN^2$ acts trivially on the chiral and antichiral algebra, it must be a superposition of Verlinde lines, i.e. invertible defects in $\ZZ_3^4\subset G_{(1)^6}$. This implies that all the $\CN$ obtained in this way are duality defects for some subgroup $H\subseteq \ZZ_3^4$
    \be\label{Hduality} \CN^2=\sum_{h\in H} \CL_h\ ,
    \ee of order $|H|=\langle \CN\rangle ^2$. In the following table, for each partition of $\{1,\ldots,6\}$, we report the cycle shape of a generic automorphism $\rho_L\in A_6$, the dimension $\langle \CN\rangle$ of the duality defect $\CN$ acting on $\CA\times\bar\CA$ by automorphisms $(\rho'_L,\rho'_R)$ as in \eqref{involutions}, the group $H\subseteq \ZZ_3^4$ appearing in \eqref{Hduality}, and the generators of $H$ in the form \eqref{Z34}:
    \begin{center}
    \begin{tabular}{|c|c|c|c|c|}
    \hline
     Partition & $\rho_L$   & $\langle \CN\rangle$ & $H$  & Generators of $H$ \\
     \hline
      $1+1+2+2$   & $(ij)(kl)$ & $3$ & $\ZZ_3\times \ZZ_3$ & $t_it_j^{-1},\quad  t_kt_l^{-1}$\\
      $1+1+1+3$   & $(ijk)$ & $3$ & $\ZZ_3\times \ZZ_3$ & $t_it_j^{-1},\quad  t_it_jt_k$\\
      $3+3$   & $(ijk)(lmn)$ & $3$ & $\ZZ_3\times \ZZ_3$ & $t_it_j^{-1}t_l^{-1}t_m,\quad  t_it_jt_k$\\
      $2+4$   & $(ij)(klmn)$ & $9$ & $\ZZ_3^4$ & All $\prod_k t_k^{m_k}$\\
      $1+5$   & $(ijklm)$ & $9$ & $\ZZ_3^4$ & All $\prod_k t_k^{m_k}$\\
      \hline
    \end{tabular}\end{center}
    This analysis implies that the K3 model $\calC_{(1)^6}$ is self-orbifold with respect to the groups $H\subseteq \ZZ_3^4$ in this table.
    \item \emph{Defects induced by the description as a torus orbifold $T^4/\ZZ_3$.} We have seen in section \ref{s:OnetosixIsT4z3} that $\calC_{(1)^6}=\CT_{A_2^2}/\langle g\rangle$ for a suitable torus symmetry $g$ of order $3$ preserving the small $\CN=(4,4)$ superconformal algebra. From the general analysis of section \ref{s:torusOrbs}, this implies $\cTop_{(1)^6}$ contains a continuum of topological defects $T_\theta$ of quantum dimension $3$, that are induced by superpositions $W_\theta+W_{g(\theta)}+W_{g^2(\theta)}$ of torus model defects, and that are simple for generic $\theta\in \Gamma^{4,4}/(\Gamma^{4,4}\otimes \RR)$. When $x\in \Gamma^{4,4}/(\Gamma^{4,4}\otimes \RR)$ is a $g$-fixed point, i.e. $g(x)=x\mod \Gamma^{4,4}$, the defect $T_x$ decomposes as a superposition $\eta_x+Q\eta_x+Q^2\eta_x$, where $Q$ is a quantum symmetry of order $3$ and $\eta_x$ are invertible defects. The invertible defects $\eta_x$ obtained in this way, together with $Q$, generate an extraspecial group $3^{1+4}$, that must be a subgroup of $G_{(1)^6}$. Indeed, the quantum symmetry $Q$ can be identified, for example, with a symmetry $t_1t_2t_3=t_4t_5t_6$ acting as in \eqref{Z34}, and  the subgroup of $G_{(1)^6}$ commuting with $Q$ is isomorphic to $3^{1+4}\rtimes \ZZ_6$. The generators of the latter group are induced by the symmetries of the torus model $\CT$ that commute with $g$. The group $3^{1+4}\rtimes \ZZ_6$ contains the subgroup $\ZZ_3^4$ fixing the chiral and anti-chiral algebra $\CA\times \bar\CA$ -- indeed, the index of $3^{1+4}\rtimes \ZZ_6$ in $G_{(1)^6}$ is $20$, which is not divisible by $3$. However, none of the elements $\eta_x$ are contained in this $\ZZ_3^4$ group, because they all act non-trivially on some holomorphic or antiholomorphic operator.
    \item \emph{Duality defects.} We do not attempt a classification of all abelian groups $H\subset G_{(1)^6}$ with respect to which $\calC_{(1)^6}$ is self-orbifold, i.e $\calC/\langle H\rangle\cong \calC_{(1)^6}$, but just mention a few examples. When $H$ leaves invariant the full chiral and anti-chiral algebra, then the corresponding duality defect $\CN$ must act on $\CA\times\bar\CA$ by automorphisms. All cases where $\CN$ preserves the $\CN=(4,4)$ algebra were considered in point 2 above. As discussed in \ref{s:torusOrbs}, the continuum of defects induced by the torus orbifold construction contains also the duality defects $\CN$ for the order $9$ groups $H$ generated by $\eta_x$ and $Q$. Notice that the element $\eta_x$, and therefore the group $H$, does not act trivially on $\CA\times \bar\CA$, so that $\CN$ is not one of the defects considered in point 2 above.
\end{enumerate}

Let us describe the lattice $\Gamma^{4,20}$ of R-R charges in this model. A set of boundary states generating the full lattice was determined in \cite{Gaberdiel:2011fg}, using the Gepner model description. The same lattice can be also obtained in a way similar as section \ref{s:GTVWbranes}, using arguments based purely on the symmetry group $G_{(1)^6}$ of the model and on lattice theory. In particular, $G_{(1)^6}$ correspond to group $101$ in the list of \cite{HohnMason2016}, and their results imply that there is a unique point in the moduli space of K3 models with this symmetry group. It is easier to describe $\Gamma^{4,20}$ as a complex lattice in $\CC^{2,10}$, the complex vector space with indefinite sesquilinear form $$\langle z,w\rangle=\bar z_1w_1+\bar z_2w_2-\bar z_3w_3-\ldots-\bar z_{12}w_{12}\ ,$$ of signature $(2,10)$, where $z\equiv (z_1,\ldots z_{12})\in \CC^{2,10}$ and $w\equiv (w_1,\ldots w_{12})\in \CC^{2,10}$.  Using the notation $\omega:=e^{\frac{2\pi i}{3}}$ and $\theta=\omega-\bar\omega=i\sqrt{3}$, we find that $\Gamma^{4,20}\subset \CC^{2,10}$ is generated by the rows of the following matrix
\be\label{cplxRRlattice} \arraycolsep=3pt \def\arraystretch{0.8}
\frac{\sqrt{2}}{3}
\left(
\begin{array}{cccccccccccc}
	3 & 0 & 0 & 0 & 0 & 0 & 0 & 0 & 0 & 0 & 0 & 0 \\
	3 \omega  & 0 & 0 & 0 & 0 & 0 & 0 & 0 & 0 & 0 & 0 & 0 \\
	0 & 3 \theta & 0 & 0 & 0 & 0 & 0 & 0 & 0 & 0 & 0 & 0 \\
	0 & 3 \theta \omega  & 0 & 0 & 0 & 0 & 0 & 0 & 0 & 0 & 0 & 0 \\
	0 & -3 & -3 & 0 & 0 & 0 & 0 & 0 & 0 & 0 & 0 & 0 \\
	0 & 3 \bar\omega  & 3 \bar\omega  & 0 & 0 & 0 & 0 & 0 & 0 & 0 & 0 & 0 \\
	0 & -3 & 0 & -3 & 0 & 0 & 0 & 0 & 0 & 0 & 0 & 0 \\
	0 & 3 \bar\omega  & 0 & 3 \bar\omega  & 0 & 0 & 0 & 0 & 0 & 0 & 0 & 0 \\
	0 & -3 & 0 & 0 & -3 & 0 & 0 & 0 & 0 & 0 & 0 & 0 \\
	0 & 3 \bar\omega  & 0 & 0 & 3 \bar\omega  & 0 & 0 & 0 & 0 & 0 & 0 & 0 \\
	0 & -3 & 0 & 0 & 0 & -3 & 0 & 0 & 0 & 0 & 0 & 0 \\
	0 & 3 \bar\omega  & 0 & 0 & 0 & 3 \bar\omega  & 0 & 0 & 0 & 0 & 0 & 0 \\
	\theta \omega  & -\theta \omega  & -\theta \omega  & -\theta
	\omega  & \theta \omega  & -\theta \omega  & \theta \omega  & 0 &
	0 & 0 & 0 & 0 \\
	-\theta \omega  & -\theta \omega  & \theta \omega  & -\theta
	\omega  & -\theta \omega  & -\theta \omega  & 0 & \theta \omega  &
	0 & 0 & 0 & 0 \\
	-\theta \omega  & 0 & 0 & \theta \omega  & -\theta \omega  &
	-\theta \omega  & 0 & 0 & \theta \omega  & 0 & 0 & 0 \\
	0 & -\theta \omega  & \theta \omega  & 0 & \theta \omega  &
	-\theta \omega  & 0 & 0 & 0 & \theta \omega  & 0 & 0 \\
	\theta \omega  & 0 & -\theta \omega  & \theta \omega  & 0 &
	-\theta \omega  & 0 & 0 & 0 & 0 & \theta \omega  & 0 \\
	0 & \theta \omega  & \theta \omega  & \theta \omega  & \theta
	\omega  & 0 & 0 & 0 & 0 & 0 & 0 & \theta \omega  \\
	\theta & -\theta & -\theta & -\theta & \theta & -\theta &
	\theta & 0 & 0 & 0 & 0 & 0 \\
	-\theta & -\theta & \theta & -\theta & -\theta & -\theta & 0 &
	\theta & 0 & 0 & 0 & 0 \\
	-\theta & 0 & 0 & \theta & -\theta & -\theta & 0 & 0 & \theta & 0 &
	0 & 0 \\
	0 & -\theta & \theta & 0 & \theta & -\theta & 0 & 0 & 0 & \theta & 0
	& 0 \\
	0 & \theta & \theta & \theta & \theta & 0 & 0 & 0 & 0 & 0 & 0 &
	\theta \\
	0 & 2 \omega -\bar\omega  & 1 & 1 & 1 & 1 & 1 & 1 & 1 & 1 & 1 & 4 \\
\end{array}
\right)
\ee
The image of this complex lattice with respect to the standard map
$$ \CC^{2,10}\in (z_1,\ldots,z_{12}) \mapsto (\Re(z_1),\Im(z_1),\ldots, \Re(z_{12}),\Im(z_{12}))\in \RR^{4,20},
$$ defines a real even unimodular lattice of signature $(4,20)$, as can be checked by a direct computation. The construction is analogous to the definition of the complex Leech lattice (see for example chapter 7 section 8 in \cite{ConwaySloane}), with some modifications due to the different signature. The first two columns in \eqref{cplxRRlattice} correspond to complex coordinates $z_1,z_2$ for the space $\Pi\subset V$ of spectral flow generators, while the remaining $10$ columns are complex coordinates $z_3,\ldots,z_{12}$ for its orthogonal complement. Notice that first four vectors (rows) in the lattice basis, namely
$$
\begin{array}{cccccccccccc}
	(3 & 0 & 0 & 0 & 0 & 0 & 0 & 0 & 0 & 0 & 0 & 0) \\
	(3 \omega  & 0 & 0 & 0 & 0 & 0 & 0 & 0 & 0 & 0 & 0 & 0) \\
	(0 & 3 \theta & 0 & 0 & 0 & 0 & 0 & 0 & 0 & 0 & 0 & 0) \\
	(0 & 3 \theta \omega  & 0 & 0 & 0 & 0 & 0 & 0 & 0 & 0 & 0 & 0)\\
 \end{array}
$$
are contained in the subspace $\Pi$. By claim \ref{th:qdim}, the existence of such charges imply that all defects $\CL\in \cTop_{(1)^6}$ have integral quantum dimension.

\medskip

The coordinates $z_3,\ldots,z_{12}$ are a complex basis for the space $V\cap \Pi^\perp$ of R-R ground states in the $(\frac{1}{4},0;\frac{1}{4},0)$ representation on $\CN=(4,4)$. In particular, the basis consists of the $20$ highest weight vectors for the chiral and anti-chiral algebra $\CA\times \bar\CA$ labeled by
$$ [a_1,\ldots, a_{6}]=[+\frac{1}{2},+\frac{1}{2},+\frac{1}{2},-\frac{1}{2},-\frac{1}{2},-\frac{1}{2}]\qquad \text{and permutations.}
$$ Let us denote by $|i,j,k\rangle$, $1\le i<j<k\le 6$, the highest weight vector with the $+$ signs in positions $i,j,k$, i.e. in the $\CA\times\bar\CA$ representation $[a_1,\ldots,a_6]$ with
$$ a_n=\begin{cases}
    +1/2 & \text{if } n\in \{i,j,k\}\\
    -1/2 & \text{if } n\notin \{i,j,k\}\ .
\end{cases}
$$ This is a complex basis, since the CPT conjugate of $[a_1,\ldots, a_6]$ is the opposite $[-a_1,\ldots,-a_6]$. We identify the states $|i,j,k\rangle$ with the complex coordinates $z_3,z_3^*,\ldots,z_{12},z_{12}^*$ as follows:
\be\label{threetuples}
\begin{array}{cc|cc}
z_3\rightarrow |1,2,3\rangle & z_3^*\rightarrow  |4,5,6\rangle \qquad &\qquad   z_8\rightarrow |1,3,4\rangle & z_8^*\rightarrow  |2,5,6\rangle\\ 
z_4\rightarrow |1,2,4\rangle & z_4^*\rightarrow  |3,5,6\rangle \qquad &\qquad 
z_9\rightarrow |1,5,6\rangle & z_9^*\rightarrow  |2,3,4\rangle \\
z_5\rightarrow |1,2,5\rangle & z_5^*\rightarrow  |3,4,6\rangle \qquad &\qquad z_{10}\rightarrow |1,3,5\rangle & z_{10}^*\rightarrow  |2,4,6\rangle\\ 
z_6\rightarrow |1,4,6\rangle & z_6^*\rightarrow  |2,3,5\rangle \qquad &\qquad z_{11}\rightarrow |1,3,6\rangle & z_{11}^*\rightarrow  |2,4,5\rangle\\
z_7\rightarrow |1,4,5\rangle & z_7^*\rightarrow  |2,3,6\rangle\qquad &\qquad z_{12}\rightarrow |1,2,6\rangle & z_{12}^*\rightarrow  |3,4,5\rangle\\
\end{array}
\ee
The group of invertible defect in $\cTop_{(1)^6}$ is $G_{(1)^6}\cong \ZZ_3^4\rtimes A_6$. All such symmetries act trivially on the spectral flow operators, and therefore on the complex coordinates $z_1,z_2$. A permutation $\sigma\in A_6$ acts in the obvious way on the states $|i,j,k\rangle$
\be |i,j,k\rangle \mapsto |\sigma(i),\sigma(j),\sigma(k)\rangle\ .
\ee Using the correspondence \eqref{threetuples}, it is immediate to derive the action of $\sigma$ on the coordinates $z_3,z_3^*,\ldots,z_{12},z_{12}^*$. 

The elements of the subgroup $\ZZ_3^4$ are labeled by $(m_1,\ldots,m_6)\in (\ZZ/3\ZZ)^6$, with the condition $\sum_i m_i\equiv 0\mod 3$ and modulo $(1,1,1,1,1,1)$. They multiply the states $|i,j,k\rangle$, and therefore complex coordinates $z_3,\ldots, z_{12}$, by some cubic roots of unity, determined by
\be |i,j,k\rangle \mapsto \exp\Bigl(\frac{2\pi i}{3} [\sum_{n\in \{i,j,k\}} m_n - \sum_{n\notin\{i,j,k\}} m_n]\Bigr)|i,j,k\rangle\ .
\ee
It is easy to check that these transformations of the space $\CC^{2,10}$ correspond to automorphisms of the lattice spanned by the rows of \ref{cplxRRlattice}.

\medskip

As for the model $\calC_{GTVW}$, we computed a set of generators for the intersection
$$ B^{4,20}_{(1)^6}(\ZZ):= {\rm End}(\Gamma^{4,20})\cap B^{4,20}(\RR)\ ,
$$ between the $\ZZ$-linear endomorphisms of the lattice $\Gamma^{4,20}$ and the $401$-dimensional real space $B^{4,20}(\RR)$ of block diagonal $24\times 24$ matrices, with a upper left $4\times 4$  block proportional to the identity, and an unconstrained lower right $20\times 20$ block, see \eqref{blkspace}. As in the previous example, the $\ZZ$-module $B^{4,20}_{(1)^6}(\ZZ)$ has maximal rank $401$, i.e. equal to the dimension of $B^{4,20}(\RR)$. 

We also computed the submodule
$$ B^{4,20}_{(1)^6,inv}(\ZZ)\subset B^{4,20}_{(1)^6}(\ZZ)
$$ generated by the $\nL$ where $\CL$ is a superposition of \emph{invertible} defects in $\cTop_{(1)^6}$. As in the previous example, we find that the submodule $B^{4,20}_{(1)^6,inv}(\ZZ)$ has full rank $401$, with quotient
\be B^{4,20}_{(1)^6}(\ZZ)/B^{4,20}_{(1)^6,inv}(\ZZ)\cong \ZZ_3^{104}\ .
\ee 

For all the defects $\CL\in \cTop_{(1)^6}$ that are discussed above, the corresponding map $\nL$ is contained in $B^{4,20}_{(1)^6}(\ZZ)$, i.e. it can be written as an integral linear combination $\nL=\sum_g n_g \nL_g$ of maps $\nL_g$ related to invertible defects $\CL_g$, $g\in G_{(1)^6}$, for some $n_g\in \ZZ$. However, some $\nL$ cannot be written as linear combinations $\sum_g n_g \nL_g$ with \emph{non-negative} integral coefficients $n_g\in \ZZ_{\ge 0}$. An example is given by the duality defect $\CN_{(12)(34)}$ considered in point $2$ above, that acts on the chiral algebra $\CA$ by a permutation with cycle shape $(12)(34)$, while preserving the anti-chiral algebra $\bar\CA$. The square of $\CN_{(12)(34)}$ is a superposition of $9$ invertible defects in the $\ZZ_3\times \ZZ_3$ group generated by $t_1t_2^2$ and $t_3t_4^2$
\be \CN_{(12)(34)}^2=(\CI+\CL_{t_1t_2^2}+\CL_{t_1^2t_2})(\CI+\CL_{t_3t_4^2}+\CL_{t_3^2t_4})\ .
\ee This determines (up to a sign) the action of $\CN_{(12)(34)}$ on the basis \eqref{threetuples} of RR ground states in the $(\frac{1}{4},0;\frac{1}{4},0)$ representation of $\CN=(4,4)$, and therefore on the space $\CC^{2,10}$ with coordinates $z_1,\ldots,z_{12}$. In particular, the action is via the diagonal matrix
\be \arraycolsep=3pt \def\arraystretch{0.8}
\mathsf{N}_{(12)(34)}=
\left(
\begin{array}{cccccccccccc}
	3 & 0 & 0 & 0 & 0 & 0 & 0 & 0 & 0 & 0 & 0 & 0 \\ 
 0 & 3 & 0 & 0 & 0 & 0 & 0 & 0 & 0 & 0 & 0 & 0 \\
 0 & 0 & -3 & 0 & 0 & 0 & 0 & 0 & 0 & 0 & 0 & 0 \\
 0 & 0 & 0 & 0 & 0 & 0 & 0 & 0 & 0 & 0 & 0 & 0 \\
 0 & 0 & 0 & 0 & 0 & 0 & 0 & 0 & 0 & 0 & 0 & 0 \\
 0 & 0 & 0 & 0 & 0 & 0 & 0 & 0 & 0 & 0 & 0 & 0 \\
 0 & 0 & 0 & 0 & 0 & 0 & 0 & 0 & 0 & 0 & 0 & 0 \\
 0 & 0 & 0 & 0 & 0 & 0 & 0 & 0 & 0 & 0 & 0 & 0 \\
 0 & 0 & 0 & 0 & 0 & 0 & 0 & 0 & 0 & 0 & 0 & 0 \\
 0 & 0 & 0 & 0 & 0 & 0 & 0 & 0 & 0 & 0 & 0 & 0 \\
 0 & 0 & 0 & 0 & 0 & 0 & 0 & 0 & 0 & 0 & 0 & 0 \\
 0 & 0 & 0 & 0 & 0 & 0 & 0 & 0 & 0 & 0 & 0 & 0 \\\end{array}\right)
\ee where the minus sign is fixed by the requirement that this linear map induces an endomorphisms of the lattice of D-brane charges. On the other hand, all invertible defects $\CL_g$ act on the basis \eqref{threetuples} by multiplication by some cubic root of unity followed by a permutation. This means that in the corresponding matrix $\nL_g$ the sum of the entries in each row or column is a cubic root of unity. Because it is impossible to obtain $-3$ as a sum over three cubic roots of unity, it follows that $\mathsf{N}_{(12)(34)}$ cannot be written as a sum of three matrices $\nL_{g_1}+\nL_{g_2}+\nL_{g_3}$ for any $g_1,g_2,g_3\in G_{(1)^6}$.

\bigskip

As for the GTVW model, it is easy to exhibit a deformation of the model that lifts all non-trivial topological defects $\CL\in \cTop_{(1)^6}$. It is sufficient to consider a basis $\chi_{i,j}$, $i=1,\ldots,4$, $j=5,\ldots,24$ of the $80$-dimensional space of exactly marginal operators with respect to which all the linear operators $\hat\CL$ are represented by matrices with rational entries -- or, more generally, entries in some algebraic number field. 
For example, one could take the $|\chi_{ij}\rangle$ to be the states related by spectral flow to the R-R $|i,j,k\rangle$ in \eqref{threetuples}. With respect to this basis, the operators $\hat\CL$ are represented by matrices with entries in the algebraic number field $\QQ[\omega]$, where $\omega=e^{2\pi i/3}$. 
Then, one can take a linear combination $\chi=\sum_{j=5}^{24}\alpha_j \chi_{1,j}$ whose coefficients have transcendental ratios $\alpha_j/\alpha_k$, for example $\alpha_j=\pi^{j-5}$ as in \eqref{breakdef}. 
The same argument as in section \ref{s:GTVWbranes} shows that the only defects that satisfy \eqref{Linvar}, and are therefore preserved by the deformation $\chi$, are the ones proportional to the identity.

\section{Conclusions}\label{s:conclusions}

In this article, we discuss some general properties of the categories $\cTop_{\calC}$ of topological defects preserving the $\CN=(4,4)$ superconformal algebra and the spectral flow in a supersymmetric non-linear sigma model on K3 $\calC$.  In particular, we focus on the fusion of such topological defects with boundary states corresponding to BPS D-branes, and define a homomorphism from the fusion ring of $\cTop$ to the ring of $\ZZ$-linear endomorphisms on the lattice of D-brane charges. This construction, together with some standard assumptions about the moduli space of K3 models, allows us to derive a number of general properties of the topological defects in $\cTop_\calC$. For example, we show that the set of K3 models where $\cTop$ is not trivial (i.e., where there are simple defects distinct from the identity) has zero measure in the moduli space. Furthermore, we provide some restrictions on the possible quantum dimensions of the defects in $\cTop$ and a sufficient condition on the model $\calC$ (or rather on the corresponding point in the moduli space) for all quantum dimensions to be integral. Finally, we apply our methods in a couple of well understood examples of K3 models.

There are many directions of investigation that would be interesting to pursue:
\begin{enumerate}
    \item One of the main tools in our analysis is the map $\cTop \to {\rm End}(\Gamma^{4,20})$ that associates with each topological defect $\CL$ a $\ZZ$-linear endomorphism $\nL$ of the lattice $\Gamma^{4,20}$ of RR charges of the model. In section \ref{s:topdefsK3} we provided some conditions that must be satisfied by the endomorphisms $\nL$ that are associated with some defect. It is clear, though, that in general these conditions are not sharp enough to identify the image of the $\cTop \to {\rm End}(\Gamma^{4,20})$. Can we refine them?
    \item Similarly, we know that the map $\cTop \to {\rm End}(\Gamma^{4,20})$ is in general not injective. In fact, in the two K3 models $\calC$ that we studied, we found some families of defects $\CL_\theta\in \cTop_{\calC}$ depending on real parameters $\theta$, corresponding to the same $\nL\in {\rm End}(\Gamma^{4,20})$. On the other hand, there are some cases (for example, when $\nL$ is a multiple of the identity) where we were able to prove that $\nL$ is the image of a single $\CL$. Can we determine for which $\nL$ the corresponding defect $\CL$ is unique?\\
    Let us give some more insight into this question. It is known that each topological defect $\CL$ in the CFT $\calC$ can be described as a boundary state in the double theory $\calC\times \bar \calC$, see section \ref{s:doubletheory}. In this description, it is natural to identify $\nL\in {\rm End}(\Gamma^{4,20})$ with the R-R charge $\nL\in \Gamma^{4,20}\otimes (\Gamma^{4,20})^*$ of the boundary state in the double theory. Thus, a continuum of defects for $\nL$ in the model $\calC$ corresponds to a moduli space of D-branes with a given charge $\nL$ in the double theory. In particular, the twisted conserved currents implementing the deformations of the topological defect (see the discussion in section \ref{s:contDefects} and references therein) should correspond to suitable massless modes for open strings with both ends on the D-brane in the double theory. It is natural to wonder whether there can be examples where there is a \emph{discrete} set of defects $\CL\in \cTop$ with the same image $\nL$ via the map $\cTop\to {\rm End}(\Gamma^{4,20})$. They would correspond to a discrete set of different BPS D-branes carrying the same R-R charge in some non-linear sigma model on $K3\times K3$.\\
    As a second observation, we notice that a given topological defect $\CL$ is preserved by a marginal deformation of the K3 model if and only if the defect is `transparent' for the corresponding exactly marginal operator. This property is completely encoded in the form of $\nL$. As a consequence, \emph{all} defects $\CL$ with the same $\nL$ are either all lifted or all preserved by a certain deformation. Thus, for each $\nL$, one can identify some connected families of K3 models $\calC$ in the moduli space, with the property that $\cTop_{\calC}$ contains a set of defects $\CL$ with action $\nL$ on the D-brane charges. These arguments suggest that the cardinality of this set of defects might only depend on the connected family, and not on the particular K3 model.
    \item In section \ref{s:contDefects} we conjectured that $\cTop_{\calC}$ contains a continuum of defects only if $\calC$ is a (possibly generalised) orbifold of a torus model. It would be very interesting to either prove this conjecture or to find a counterexample. Notice that both models studied in our article are actually torus models. In this respect, it would be very useful to study an example of K3 model that is \emph{not} a torus model. Unfortunately, while torus orbifolds are necessarily a zero measure set in the moduli space of K3 models (they are a discrete union of subloci of dimension at most $16$), finding such an example of K3 model that is reasonably under control (e.g. it is rational) is not an easy task.  In particular, we do not know of any simple criterion to determine whether a given K3 model admits or not a description as a generalised orbifold of a torus model.
    \item In \cite{Cordova:2023qei}, the authors study the topological defects in various CY manifolds, including some K3 models. The idea is to classify the topological defects for some rational models (Gepner models), and then consider deformations by exactly marginal operators that preserve some of the defects. In this way, one can obtain information about topological defects in models that are (probably) not rational. This method is very effective in establishing the existence of topological defects in a neighborhood of known rational CFTs. In contrast, our approach, based on the action of topological defects on D-brane charges, allows us to provide information about possible defects even at points in the moduli space that are far from any (known) rational point. On the other hand, while our methods are effective in putting restrictions on properties of putative defects, they are not sufficient to prove the existence of such defects without an explicit description of the model. It would be very interesting to combine the two approaches: one could study the topological defects in many Gepner models, and then use our methods to determine precisely the sublocus of the moduli space where the topological defect is preserved. Furthermore, by considering the points in the moduli space at the intersection of several such loci, one might be able to establish the existence of K3 models with large categories of topological defects.
    \item Given a symmetry $g$ of a K3 model $\calC$, one can define a twining genus $\phi_g(\tau,z)$ by computing the elliptic genus of the model $\calC$ with the insertion of the topological defect $\CL_g$ along one of the cycles of the torus. In the same spirit, one can define a twining genus $\phi_\CL(\tau,z)$ for any topological defect $\CL\in \cTop_{\calC}$. On general grounds, one expects such functions to be holomorphic and modular, in a suitable sense.\footnote{More precisely, we expect them to be weak Jacobi forms of weight $0$ and index $1$ with respect to a suitable subgroup of $SL(2,\ZZ)$.} Twining genera play a prominent role in the calculation of the microstates degeneracy for 1/4-BPS black hole in string compactifications on K3 and on orbifolds thereof (CHL models). See for example
    \cite{Dabholkar:2007vk,Dabholkar:2008zy,Dabholkar:2006xa,David:2006ud,David:2006ru,David:2006ji,David:2006yn,Dijkgraaf:1996it,Strominger:1996sh,Jatkar:2005bh,Shih:2005uc,Paquette:2017gmb}.
 They are also the main characters in the `moonshine' conjectures for string theory on K3
    \cite{Cheng:2010pq,Cheng:2014zpa,Cheng:2016org,Eguchi:2010ej,Eguchi:2010fg,Gaberdiel:2010ca,Gaberdiel:2010ch,Gaberdiel:2010ca}.

 The study of twining genera related to topological defects, rather than symmetries, opens a number of new paths for the research in these subjects.
    \item A mysterious correspondence has been observed \cite{Duncan:2015xoa} between symmetries of K3 sigma models and automorphisms of a certain $\CN=1$ supersymmetric vertex operator algebra  $V^{s\natural}$ (a holomorphic superconformal field theory, in physics parlance) with central charge $12$. In particular, most (though, probably, not all) of the twining genera $\phi_g$ of K3 models can be exactly reproduced by certain $g$-twisted supertraces in $V^{s\natural}$. It is reasonable to expect the SCFT $V^{s\natural}$ to admit suitable topological defects that preserve the $\CN=1$ superconformal symmetry -- to the best of our knowledge, such defects have not been studied yet. While one cannot define boundary states in a purely holomorphic theory, we do expect some modification of our methods to apply to this case as well. We are planning to describe such methods in a forthcoming paper \cite{Angius:2024xxx}. It is natural to wonder whether the K3-VOA correspondence extends to the case of defects. 
\end{enumerate}

\bigskip

{\bf Acknowledgements.} The research of S.G and R.V. has been supported by a BIRD-2021 project (PRD-2021) and by the  PRIN Project n. 2022ABPBEY, ``Understanding quantum field theory through its deformations''. The research of R.A. has been supported by the grants CEX2020001007-S and PID2021-123017NB-I00, funded by MCIN/AEI/10.13039/501100011033 and by ERDF A way of making Europe. S.G. and R.V. would like to thank the Pollica Physics Center for support and hospitality during the programme ``New connections between Physics and Number Theory'' where work on this paper was undertaken, and to thank the participants to the programme for stimulating discussions.

\bigskip


\appendix

\section{Generators of $\ZZ_2^8:M_{20}$}\label{a:gensZ28M20}

In this section, we provide an explicit description of the group of symmetries $G_{GTVW}\cong \ZZ_2^8:M_{20}$ of the K3 model $\calC_{GTVW}$ considered in section \ref{s:Z28M20}; see \cite{Gaberdiel:2013psa} and \cite{Harvey:2020jvu} for a derivation. Recall that the full symmetry group of the CFT is $(SU(2)^6\times SU(2)^6):S_6$, where the $SU(2)^6$ factors are generated by the zero modes of the holomorphic and antiholomorphic currents, while $G_{GTVW}\cong\ZZ_2^8:M_{20}$ is the subgroup that preserves the $\CN=(4,4)$ superconformal algebra and the spectral flow. Every element of $(SU(2)^6\times SU(2)^6):S_6$ can be written as $(A_1,\ldots,A_6\ B_1,\ldots,B_6)\pi$ where $A_i$ and $B_i$ are $SU(2)$ matrices, and $\pi\in S_6$ is a permutation. As discussed in section \ref{s:Z28M20}, $SU(2)^6\times SU(2)^6$ does not act faithfully on the operators of the CFT. There is a subgroup $Z_0\cong \ZZ_2^5$ of the center $\ZZ_2^6\times \ZZ_2^6$ of $SU(2)^6\times SU(2)^6$ that acts trivially on all states of the theory (see eq.\eqref{Z0_subgroup_trivial_action}). This means that there is an ambiguity in writing the symmetries  as $(A_1,\ldots,A_6\ B_1,\ldots,B_6)\pi$, as one can multiply by any element in $Z_0\cong \ZZ_2^5$.

The $SU(2)$ matrices that one needs to write all elements of $\ZZ_2^8:M_{20}$ are
\be \one=\begin{pmatrix}
    1 & 0 \\ 0 & 1
\end{pmatrix}\ ,\qquad \ii=\begin{pmatrix}
    0 & 1 \\ -1 & 0
\end{pmatrix}\ ,\qquad \jj=\begin{pmatrix}
    0 & i \\ i & 0
\end{pmatrix}\ ,\qquad \kk=\begin{pmatrix}
    i & 0 \\ 0 & -i
\end{pmatrix}
\ee and their opposite matrices
\be \mone=\begin{pmatrix}
    -1 & 0 \\ 0 & -1
\end{pmatrix}\ ,\qquad \mi=\begin{pmatrix}
    0 & -1 \\ 1 & 0
\end{pmatrix}\ ,\qquad \mj=\begin{pmatrix}
    0 & -i \\ -i & 0
\end{pmatrix}\ ,\qquad \mk=\begin{pmatrix}
    -i & 0 \\ 0 & i
\end{pmatrix}\ .
\ee Notice that $\ii,\jj,\kk$ obey the quaternionic relations $\ii^2=\jj^2=\kk^2=-1$, $\ii\jj=\kk=-\jj\ii$. Finally we need the matrices
\be \Omega=\begin{pmatrix}
    \frac{1-i}{2} & \frac{1+i}{2}\\
    -\frac{1-i}{2} & \frac{1+i}{2}
\end{pmatrix}\ ,\qquad \Omega^{\dag}=\begin{pmatrix}
    \frac{1+i}{2} & \frac{1+i}{2}\\
    \frac{1-i}{2} & \frac{1-i}{2}
\end{pmatrix}
\ee that satisfy $\Omega^3=-\one$, so that $\Omega^6=\one$ and $\Omega^\dag=-\Omega^2$. Furthermore, $\Omega \ii\Omega^\dag=\kk$, $\Omega \jj\Omega^\dag=\ii$, $\Omega \kk\Omega^\dag=\jj$. The permutations $\pi$ will be denoted by their non-trivial cycles.

In this notation, the generators of $\ZZ_2^8:M_{20}$ are as follows:
\begin{itemize}
    \item Generators in the $\ZZ_2^6\times \ZZ_2^6$ subgroup of $SU(2)^6\times SU(2)^6$:
    \begin{align}
        &t_{23}=( \one\mone\mone\one\one\one\  \one\one\one\one\one\one)(1)^6\\
        &t_{24}=(\one\mone\one\mone\one\one\  \one\one\one\one\one\one)(1)^6\\
        &t_{25}=(\one\mone\one\one\mone\one\  \one\one\one\one\one\one)(1)^6\\
        &t_{26}=(\one\mone\one\one\one\mone\  \one\one\one\one\one\one)(1)^6
    \end{align} These generators form a $\ZZ_2^4$ group, with $\ZZ_2^4\subset \ZZ_2^8\subset \ZZ_2^8:M_{20}$, that contains $10$ elements $t_{ij}$, $2\le i<j\le 6$ with non-trivial $SU(2)$ factors $A_i=A_j=\mone$, and $5$ elements $t_{ijkl}$, $2\le i<j<k<l\le 6$, with non-trivial $SU(2)$ factors $A_i=A_j=A_k=A_l=\mone$.
    \item Further generators of the $\ZZ_2^8$ normal subgroup of $\ZZ_2^8:M_{20}$
    \begin{align}
        &s_1=(\one \, \one \, \mi \, \mi \, \mi\, \mi\, ;\  \one\,\one\,\ii\,\ii\,\ii\,\ii)(1)^6\\
        &s_2=(\one\,\one\,\mj\,\mj\,\mj\,\mj\,;\  \one\,\one\,\jj\,\jj\,\jj\,\jj)(1)^6\\
        &s_3=(\one\,\ii\,\kk\,\mj\,\mi\,\one\,;\  \one\,\ii\,\kk\,\jj\,\ii\,\one)(1)^6\\
        &s_4=(\one\,\kk\,\jj\,\mi\,\mk\,\one\,;\  \one\,\kk\,\jj\,\ii\,\kk\,\one)(1)^6
    \end{align}
    \item Generators of the $\ZZ_2^4$ subgroup of $M_{20}\cong \ZZ_2^4.A_5$ (modulo elements of the form $t_{ijkl}$)
    \begin{align}
        &v_1=(\one\,\one\,\ii\,\ii\,\ii\,\ii\,;\ \one \, \one \, \kk \, \kk \, \kk\, \kk )(1)^6\\
        &v_2=(\one\,\one\,\jj\,\jj\,\jj\,\jj\,;\  \one\,\one\,\ii\,\ii\,\ii\,\ii)(1)^6\\
        &v_3=(\one\,\jj\,\jj\,\one\,\ii\,\kk\,;\  \one\,\ii\,\ii\,\one\,\kk\,\jj)(1)^6\\
        &v_4=(\one\,\ii\,\jj\,\kk\,\one\,\ii\,;\  \one\,\kk\,\ii\,\jj\,\one\,\kk)(1)^6
    \end{align}
    Altogether, the $t_{ij}$, $s_1,\ldots,s_4$, and $v_1,\ldots,v_4$ generate the subgroup of $\ZZ_2^8:M_{20}$ contained in $SU(2)^6\times SU(2)^6$.
    \item Generators of the quotient $A_5$ of $M_{20}\cong \ZZ_2^4.A_5$
    \begin{align}
        &p_1=(\one\,\one\,\one\,\one\,\one\,\one\, ;\  \one\,\one\,\one\,\one\,\one\,\one)(34)(56)\\
        &p_2=(\one\,\one\,\one\,\one\,\one\,\one\, ;\  \one\,\one\,\one\,\one\,\one\,\one)(35)(46)\\
        &p_3=(\one\,\one\,\Omega^{\dag}\,\Omega\,\one\,\one\, ;\  \one\,\one\,\Omega^{\dag}\,\Omega\,\one\,\one)(25)(34)
    \end{align}
\end{itemize}

{\bf Action on the currents.} The $18$ left-moving currents of $su(2)^6$ can be denoted by a pair $(\vec\sigma\cdot \vec n,i)_L$, where $i=1,\ldots, 6$ denotes the $su(2)_1$ factors, and $\vec\sigma\cdot \vec n$ is a traceless hermitian matrix, written a linear combination of Pauli matrices with coefficients $\vec n=(n_1,n_2,n_3)\in \RR^3$, $\vec n\cdot \vec n=1$. The action of $g=(A_,\ldots,A_6; B_1,\ldots,B_6)\pi$ on $(\vec\sigma\cdot \vec n,i)_L$ is
\be g\cdot (\vec\sigma\cdot \vec n,i)_L=(A_{\pi(i)}\vec\sigma\cdot \vec nA_{\pi(i)}^\dag,\pi(i))_L
\ee where we make the permutation act before the $SU(2)^6\times SU(2)^6$ adjoint action. Similarly, the action on the right-moving currents is
\be g\cdot (\vec\sigma\cdot \vec n,i)_R=(B_{\pi(i)}\vec\sigma\cdot \vec nB_{\pi(i)}^\dag,\pi(i))_R\ .
\ee

{\bf Action on the RR ground states.} An orthonormal basis of the $24$-dimensional space of RR ground states is given by six `tetrads' of states $|1,i\rangle$, $|2,i\rangle$, $|3,i\rangle$, $|4,i\rangle$, $i=1,\ldots,6$. One can assign with each element $|a,i\rangle$ in a tetrad a $2\times 2$ matrix $U_a$, $a=1,2,3,4$, as follows
\begin{align}
    |1,i\rangle &\longrightarrow U_1=\frac{1}{\sqrt{2}} \begin{pmatrix}
        0 & 1\\ -1 & 0
    \end{pmatrix}=\frac{\ii}{\sqrt{2}}\\
    |2,i\rangle &\longrightarrow U_2=\frac{1}{\sqrt{2}} \begin{pmatrix}
        1 & 0\\ 0 & 1
    \end{pmatrix}=\frac{\one}{\sqrt{2}}\\
    |3,i\rangle &\longrightarrow U_3=\frac{1}{\sqrt{2}} \begin{pmatrix}
        i & 0
        \\0 & -i
    \end{pmatrix}=-\frac{\kk}{\sqrt{2}}\\
    |4,i\rangle &\longrightarrow U_4=\frac{1}{\sqrt{2}} \begin{pmatrix}
        0 & i\\ i & 0
    \end{pmatrix}=\frac{\jj}{\sqrt{2}}\ ,
\end{align} in such a way that $\langle a,i|b,j\rangle =\delta_{ij}\Tr(U_a^\dag U_b)$. The action of $g=(A_1,\ldots,A_6;B_1,\ldots,B_6)\pi$ on $|a,i\rangle\equiv (U_a,i)$, $a=1,\ldots,4$, $i=1,\ldots,6$ is
\be g\cdot (U_a,i)=(A_{\pi(i)}U_aB_{\pi(i)}^t,\pi(i))\ .
\ee

\section{A duality defect not in $\cTop_{GTVW}$}\label{a:NnotinTop}

Consider the K3 model $\calC_{GTVW}$ described in section \ref{s:Z28M20} and let $H\subset G_{GTVW}$ the abelian group $H\cong \ZZ_2$ generated by the symmetry $g=t_5t_6$. 
By computing its partition function, one can show that the orbifold $\calC/H$ is a consistent K3 model with the same bosonic chiral and anti-chiral algebras and representations as $\calC$; therefore, the two CFTs must be isomorphic $\calC\cong \calC/H$. In particular, all holomorphic and anti-holomorphic currents are $g$-invariant; the $g$-twisted sector contains no further currents.
Thus, we expect a topological duality defect $\CN$ such that
\be \CN^2=\CI+\CL_g\ ,\qquad \CL_g\CN=\CN\CL_g=\CN\ .
\ee
In this section, we describe the duality defect $\CN$ and show that $\CN\not\in \cTop_{GTVW}$, and in particular that $\CN$ does not preserve the $\CN=(4,4)$  superconformal algebra.

The fusion rules for $\CN\equiv \CN_{56}$ imply that $\hat \CN$ must act with eigenvalues $\pm \sqrt{2}$ on the states that are fixed by $g$, while eigenstates of $g$ with eigenvalue $-1$ must be in the kernel. On the other hand, these topological defects $\CN$ cannot commute with the whole $(\widehat{su}(2)_1)^{\oplus 6}$ algebra -- we have already identified all such defects, and they are all invertible.

What is the possible action on the chiral algebra?
\begin{itemize}
	\item Consider moving a defect $\CN$ through local holomorphic operators in $\CA$, we obtain a map from  $\CA$ to itself in a way compatible with OPE. This means that the defect should act on $\CA$ by an automorphism of order $2$.
	\item The action of $\CN$ on $\CA$ should not lift to a symmetry of the whole CFT. Indeed, if this was the case, then for a suitable CFT symmetry $h$, the fusion $\CL_h\CN$ would act trivially on $\CA$. But the only simple defects with this property are the Verlinde lines for the algebra $\CA$ and all such defects are invertible. It would follow that $\CN$ is a superposition of invertible defects, which cannot be true for a duality defect.
	\item Because $\hat \CN^2$ acts trivially on the $g$-invariant space of states $\CH^g$, then $\CN$ acts by permutations on the set of $\CA\times\bar\CA$ representations contained in $\CH^g$. In particular, it cannot map any such representation into one that is not contained in $\CH^g$.
\end{itemize}
The group of outer automorphisms of the chiral and antichiral algebra is the $S_6\times S_6$ permutation group acting separately on the holomophic and anti-holomorphic  currents. Only the diagonal $S_6\subset S_6\times S_6$ lifts to a symmetry of the whole theory; the other elements of do not preserve the set of $\CA\times\bar\CA$ representations in the space of states $\CH$. If instead we consider the subgroup of $S_6\times S_6$ that preserve only the set of representations contained in the subspace $\CH^g\subset \CH$ of $g$-invariant states, then there is one non-trivial choice (modulo the diagonal $S_6$ symmetry): the involution exchanging the $5$-th and the $6$-th $\widehat{su}(2)_1$ components on the (say) holomorphic side, while keeping the anti-holomorphic side fixed. 

Therefore, modulo automorphisms that lift to CFT symmetries, this is the only possible action of $\CN$ on the algebra $(\widehat{su}(2)_1)^{ 6}$.

We can calculate the $\CN$-twined partition functions
\be Z^{NSNS}_{1,\CN}=\Tr_{NS-NS}(\hat\CN q^{L_0-\frac{c}{24}}\bar q^{\bar L_0-\frac{\bar c}{24}})\ ,\qquad \tilde{Z}^{NSNS}_{1,\CN}=\Tr_{NS-NS}((-1)^{F+\tilde F}\hat\CN q^{L_0-\frac{c}{24}}\bar q^{\bar L_0-\frac{\bar c}{24}})
\ee
\be Z^{RR}_{1,\CN}=\Tr_{R-R}(\hat\CN q^{L_0-\frac{c}{24}}\bar q^{\bar L_0-\frac{\bar c}{24}})\ ,\qquad \tilde{Z}^{RR}_{1,\CN}=\Tr_{R-R}((-1)^{F+\tilde F}\hat\CN q^{L_0-\frac{c}{24}}\bar q^{\bar L_0-\frac{\bar c}{24}})
\ee
using the characters of $\widehat{su}(2)_1$, namely
\be
{\rm ch}_{1,0}(\tau,z)= \frac{\theta_3(2\tau,2z)}{\eta(\tau)}\ ,\qquad {\rm ch}_{1,\frac{1}{2}}(\tau,z)= \frac{\theta_2(2\tau,2z)}{\eta(\tau)}\ ,
\ee
from which
\be
{\rm ch}_{1,0}(\tau,0)= \frac{\theta_3(2\tau,0)}{\eta(\tau)}=\frac{\eta(2\tau)^5}{\eta(\tau)^3\eta(4\tau)^2}\ ,\qquad {\rm ch}_{1,\frac{1}{2}}(\tau,0)= \frac{\theta_2(2\tau,0)}{\eta(\tau)}=2\frac{\eta(4\tau)^2}{\eta(2\tau)\eta(\tau)}\ .
\ee
The S-transformations are
\be {\rm ch}_{1,0}(-\tfrac{1}{\tau},0)=\frac{1}{\sqrt{2}}\bigl({\rm ch}_{1,0}(\tau,0)+{\rm ch}_{1,\frac{1}{2}}(\tau,0)\bigr)\ ,\quad {\rm ch}_{1,\frac{1}{2}}(-\tfrac{1}{\tau},0)=\frac{1}{\sqrt{2}}\bigl({\rm ch}_{1,0}(\tau,0)-{\rm ch}_{1,\frac{1}{2}}(\tau,0)\bigr)
\ee
Recall $\hat \CN$ acts on the $t_5t_6$ invariant states by $\sqrt{2}$ times the permutation of the fifth and sixth holomorphic $\widehat{su}(2)_1$ components.
It follows that the $\CN$-twined partition function are
\begin{align*} Z^{NSNS}_{1,\CN}=\sqrt{2}&\sum_{\substack{a_1,\ldots, a_4\in\{0,\frac{1}{2}\}\\ \sum a_i\in \ZZ}}\Bigl[\prod_{i=1}^4\left|{\rm ch}_{1,a_i}(\tau,0)\right|^2\bigl({\rm ch}_{1,0}(2\tau,0)\overline{{\rm ch}_{1,0}(\tau,0)}^2+{\rm ch}_{1,\frac{1}{2}}(2\tau,0)\overline{{\rm ch}_{1,\frac{1}{2}}(\tau,0)}^2\bigr)\\
&+\prod_{i=1}^4{\rm ch}_{1,a_i}(\tau,0)\overline{{\rm ch}_{1,\frac{1}{2}-a_i}(\tau,0)}\bigl({\rm ch}_{1,0}(2\tau,0)\overline{{\rm ch}_{1,\frac{1}{2}}(\tau,0)}^2+{\rm ch}_{1,\frac{1}{2}}(2\tau,0)\overline{{\rm ch}_{1,0}(\tau,0)}^2\bigr)\Bigr]
\end{align*}
\begin{align*} \tilde{Z}^{NSNS}_{1,\CN}=\sqrt{2}&\sum_{\substack{a_1,\ldots, a_4\in\{0,\frac{1}{2}\}\\ \sum a_i\in \ZZ}}\Bigl[\prod_{i=1}^4\left|{\rm ch}_{1,a_i}(\tau,0)\right|^2\bigl({\rm ch}_{1,0}(2\tau,0)\overline{{\rm ch}_{1,0}(\tau,0)}^2+{\rm ch}_{1,\frac{1}{2}}(2\tau,0)\overline{{\rm ch}_{1,\frac{1}{2}}(\tau,0)}^2\bigr)\\
&-\prod_{i=1}^4{\rm ch}_{1,a_i}(\tau,0)\overline{{\rm ch}_{1,\frac{1}{2}-a_i}(\tau,0)}\bigl({\rm ch}_{1,0}(2\tau,0)\overline{{\rm ch}_{1,\frac{1}{2}}(\tau,0)}^2+{\rm ch}_{1,\frac{1}{2}}(2\tau,0)\overline{{\rm ch}_{1,0}(\tau,0)}^2\bigr)\Bigr]
\end{align*}
\begin{align*} Z^{RR}_{1,\CN}=\sqrt{2}&\sum_{\substack{a_1,\ldots, a_4\in\{0,\frac{1}{2}\}\\ \sum a_i\in \frac{1}{2}+\ZZ}}\Bigl[\prod_{i=1}^4\left|{\rm ch}_{1,a_i}(\tau,0)\right|^2\bigl({\rm ch}_{1,0}(2\tau,0)\overline{{\rm ch}_{1,0}(\tau,0)}^2+{\rm ch}_{1,\frac{1}{2}}(2\tau,0)\overline{{\rm ch}_{1,\frac{1}{2}}(\tau,0)}^2\bigr)\\
&+\prod_{i=1}^4{\rm ch}_{1,a_i}(\tau,0)\overline{{\rm ch}_{1,\frac{1}{2}-a_i}(\tau,0)}\bigl({\rm ch}_{1,0}(2\tau,0)\overline{{\rm ch}_{1,\frac{1}{2}}(\tau,0)}^2+{\rm ch}_{1,\frac{1}{2}}(2\tau,0)\overline{{\rm ch}_{1,0}(\tau,0)}^2\bigr)\Bigr]
\end{align*}
\begin{align*} \tilde{Z}^{RR}_{1,\CN}=\sqrt{2}&\sum_{\substack{a_1,\ldots, a_4\in\{0,\frac{1}{2}\}\\ \sum a_i\in \frac{1}{2}+\ZZ}}\Bigl[\prod_{i=1}^4\left|{\rm ch}_{1,a_i}(\tau,0)\right|^2\bigl({\rm ch}_{1,0}(2\tau,0)\overline{{\rm ch}_{1,0}(\tau,0)}^2+{\rm ch}_{1,\frac{1}{2}}(2\tau,0)\overline{{\rm ch}_{1,\frac{1}{2}}(\tau,0)}^2\bigr)\\
&-\prod_{i=1}^4{\rm ch}_{1,a_i}(\tau,0)\overline{{\rm ch}_{1,\frac{1}{2}-a_i}(\tau,0)}\bigl({\rm ch}_{1,0}(2\tau,0)\overline{{\rm ch}_{1,\frac{1}{2}}(\tau,0)}^2+{\rm ch}_{1,\frac{1}{2}}(2\tau,0)\overline{{\rm ch}_{1,0}(\tau,0)}^2\bigr)\Bigr]
\end{align*}
By a direct calculation we obtain
\be \tilde{Z}^{RR}_{1,\CN}=(16+36q+96q^2+\ldots)+O(\bar q^1)\ .
\ee This shows that  $\CN$ does not preserve any holomorphic supercurrent. Indeed, if a holomorphic supercurrent were preserved by $\CN$, then $\tilde{Z}^{RR}_{1,\CN}$ would receive contributions only from the RR ground states, and it would be a constant in $q$.
 \section{Defects acting by automorphisms of the chiral algebra}\label{s:automA}

In this section we classify the topological defects  $\CL\in \cTop_{GTVW}$ of the model $\calC_{GTVW}$ of section \ref{s:Z28M20}, such that $\hat \CL$ acts on all holomorphic fields generating the bosonic chiral algebra $\CA\cong (\widehat {su}(2)_1)^6$ by an algebra automorphism $\rho_L$ times the quantum dimension $\langle \CL\rangle$. This implies that when a defect line $\CL$ is moved past the insertion point of one of the holomorphic currents $j(z)$, the latter gets simply transformed into the current $\rho_L(j(z))$.  Similarly, we require the defect to act on the anti-holomorphic fields by a (possibly different) automorphism $\rho_R$. The group of automorphisms of $(\widehat {su}(2)_1)^6$ is $SO(3)^6\rtimes S_6$, where $SO(3)^6$ are the inner automorphisms generated by the zero modes of the currents; note that the center $\ZZ_2^6$ of $SU(2)^6$ acts trivially on the algebra itself. 
We still require $\rho_L$ and $\rho_R$ to act trivially on the $\CN=(4,4)$ superconformal algebra; this condition constrains $\rho_L$ and $\rho_R$ to be in a finite subgroup of $SO(3)^6\rtimes S_6$. If the automorphisms $(\rho_L,\rho_R)$ extends to a symmetry of the CFT, then there is an invertible defect $\CL_g$, for some  $g\in G_{GTVW}\cong \ZZ_2^8:M_{20}$, such that the fusion $\CL_g\CL$ acts trivially on the whole $\CA\otimes\bar\CA$. The only simple defects with this property are the Verlinde lines discussed in section \ref{s:topdefsGTVW}, where it is shown that they are all invertible. This means that, in this case, $\CL$ is just a superposition of invertible defects.

\bigskip

Therefore, in order to get some new defects, we have to require that $(\rho_L,\rho_R)$ does \emph{not} lift to a symmetry of the CFT. Now, inner  automorphisms $SU(2)^6\times SU(2)^6$ are generated by the current zero modes, and therefore always define CFT symmetries. The group of outer automorphisms of $\CA\times \bar\CA$ is $S_6\times S_6$, and in this case, only the diagonal $S_6^{diag}\subset S_6\times S_6$, permuting the holomorphic and antiholomorphic $\widehat{su}(2)_1$ factors in the same way, lifts to a CFT symmetry. 
Thus, the defects acting by algebra automorphisms, modulo fusion with invertible defects from the left or from the right, correspond to non-trivial double cosets in \be \Bigl((SU(2)^6\times SU(2)^6)\rtimes S_6^{diag}\Bigr)\backslash \Bigl((SU(2)^6\times SU(2)^6)\rtimes (S_6\times S_6)\Bigr)/\Bigl((SU(2)^6\times SU(2)^6)\rtimes S_6^{diag}\Bigr)\ .\ee 

For each such coset, we can always choose a representative $(\rho_L,\rho_R)$ with $\rho_R=1$. Furthermore, we can choose one $\rho_L$ for each $S_6$-conjugacy class, i.e. for each possible cycle shape.

Finally, we require $\rho_L$ to preserve the $\CN=4$ superconformal algebra and the spectral flow generators. This implies that the induced permutation must be contained in the $A_5\subset S_6$ subgroup of \emph{even} permutations fixing the first $\widehat{su}(2)_1$ factor. There are only three possible non-trivial cycle shapes, corresponding to the partitions $1+1+1+3$, $1+1+2+2$ and $1+5$. Let us consider each of these three cases in detail.

\bigskip

\noindent{\bf Partition $1+1+1+3$.} It is known (see appendix \ref{a:gensZ28M20}), that $G_{GTVW}$ contains a symmetry 
\be g_3=p_1\circ p_3=(\one\,\one\,\Omega\,\Omega^{\dag}\,\one\,\one\, ;\  \one\,\one\,\Omega\,\Omega^{\dag}\,\one\,\one)(265)\ ,
\ee  where we denote the elements of $(SU(2)^6\times SU(2)^6)\rtimes S_6^{diag}$ by $(A_1,\ldots,A_6\, ; B_1,\ldots,B_6)\pi$, with $A_i,B_i\in SU(2)$ and $\pi\in S_6^{diag}$, and where
\be \Omega=\begin{pmatrix}
    \frac{1-i}{2} & \frac{1+i}{2}\\
    -\frac{1-i}{2} & \frac{1+i}{2}
\end{pmatrix}\in SU(2)\ .
\ee This means that if we define $(\rho_L,\rho_R)$ as
$$ \rho_L=(\one\,\one\,\Omega\,\Omega^{\dag}\,\one\,\one)(265)\ ,\qquad \rho_R=1\ .
$$ the left automorphism $\rho_L$ acts on the holomorphic algebra in the same way as the symmetry element $g_3$. This ensures that the holomorphic $\CN=4$ supercurrents are invariant under $\rho_L$. However, $\rho_R$ is not the same as for $g_3$, and in particular the left- and right-moving $\widehat{su}(2)_1$ factors are permuted in different ways. This means that the pair $(\rho_L,\rho_R)$ cannot be extended to a symmetry of the whole CFT. 

More explicitly, any such symmetry would have to map any field in the representation $[110000;110000]$, for example, to a field in a representation $[100010;110000]$; however, while the former field is in the spectrum of the theory, the latter is not. This implies that there cannot be any \emph{invertible} defect acting by $(\rho_L,\rho_R)$ on the chiral and antichiral algebras. However, there is no obstruction to having a non-invertible defect with such an action. Indeed, a non-invertible defect $\CL$, when circling a field in the $[110000;110000]$ representation, can simply annihilate it. More generally, the operator $\hat\CL$ associated with any such defect needs to annihilate any field in a representation $[a_1\ldots,a_6;b_1\ldots b_6]$ such that $[\rho_L(a_1\ldots,a_6);\rho_R(b_1\ldots b_6)]$ is not in the spectrum. The NS-NS representations $[a_1\ldots,a_6;b_1\ldots b_6]$ that are not necessarily annihilated by $\CL$ are the ones satisfying \be\label{nonzerocond} a_2=a_5=a_6\ ,\qquad b_2=b_5=b_6\ ,\ee and can be grouped into four sets:
\begin{align*}
    \Omega_1&=\{[000000;000000],\ [111111;000000],\ [000000;111111],\ [111111;111111] \}\ ,\\
    \Omega_2&=\{[101000;101000],\ [010111;101000],\ [101000;010111],\ [010111;010111] \}\ ,\\
    \Omega_3&=\{[100100;100100],\ [011011;100100],\ [100100;011011],\ [011011;011011] \}\ ,\\
    \Omega_4&=\{[001100;001100],\ [110011;001100],\ [001100;110011],\ [110011;110011] \}\ .
\end{align*} The four members in each set are related to each other by the action of the $\CN=(4,4)$ supercurrents. Therefore, the requirement that the $\CN=(4,4)$ algebra is invariant under $\hat\CL$ implies that the action of $\hat\CL$ is the same on all representations in the same set $\Omega_i$. We conclude that the action of $\hat\CL$ on the NS-NS sector depends only on four parameters $\alpha_1,\ldots,\alpha_4$ as
\be\label{Lauto} \hat\CL=\sum_{i=1}^4 \alpha_iP^i_{\rho_L,\rho_R}\ .
\ee Here, for each $i=1,\ldots,4$, $P^i_{\rho_L,\rho_R}$ acts by the automorphism $(\rho_L,\rho_R)$ on the representations in the set $\Omega_i$, while it annihilates all fields in the representations in $\Omega_j$ for $j\neq i$. 

Requiring $\CL$ to commute with the spectral flow operators fixes the action of $\hat\CL$ on the R-R sector to be of the same form \eqref{Lauto} with the same parameters $\alpha_i$, where now each $P^i_{\rho_L,\rho_R}$ is non-zero only on the set of representations $\tilde\Omega_i$ with
\begin{align*}
    \tilde\Omega_1&=\{[100000;100000],\ [011111;100000],\ [100000;011111],\ [011111;011111] \}\ ,\\
    \tilde\Omega_2&=\{[001000;001000],\ [110111;001000],\ [001000;110111],\ [110111;110111] \}\ ,\\
    \tilde\Omega_3&=\{[000100;000100],\ [111011;000100],\ [000100;111011],\ [111011;111011] \}\ ,\\
    \tilde\Omega_4&=\{[101100;101100],\ [010011;101100],\ [101100;010011],\ [010011;010011] \}\ .
\end{align*}
The four parameters $\alpha_1,\ldots,\alpha_4$ are constrained by the Cardy-like conditions that are obtained by considering the torus partition function with the defect line $\CL$ wrapping one of the cycles. Notice that if $(\alpha_1,\alpha_2,\alpha_3,\alpha_4)$ correspond to a consistent defect $\CL$ acting by $(\rho_L,\rho_R)$ on the chiral and antichiral algebras, then also $(\alpha_1,-\alpha_2,-\alpha_3,\alpha_4)$, $(\alpha_1,\alpha_2,-\alpha_3,-\alpha_4)$, and $(\alpha_1,-\alpha_2,\alpha_3,-\alpha_4)$ correspond to consistent defects, since they can be obtained by fusion of $\CL$ with the invertible defects acting trivially on all the currents and all the $\CN=(4,4)$ supercurrents (in particular, fusion with the symmetries $t_2t_3$, $t_2t_4$ and $t_3t_4$ gives all such defects; fusion with the symmetries $t_2t_5$, $t_2t_6$, and $t_5t_6$ leaves each of these defects invariant).

Let us consider the possible fusion products of the simple defect $\CL$ with $\alpha_1=\ldots=\alpha_4$. The dual defect $\CL^\dual$ acts on the space of states by the adjoint operator
\be \hat\CL^\dual = \sum_{i=1}^4 \alpha_i^*P^i_{\rho_L^{-1},\rho_R^{-1}}
\ee so that
the product $\CL\CL^\dual$ acts by
\be \hat\CL\hat\CL^\dual=\sum_{i=1}^4 |\alpha_i|^2 P^i\ ,
\ee where $P^i$ is the projector on the representations in the set $\Omega_i$. On the other hand, we know that $\CL\CL^\dual=\CI+\ldots $ acts trivially on the whole algebra $\widehat{su}(2)_1^6\oplus \widehat{su}(2)_1^6$ as well as on the $\CN=(4,4)$ algebra. This means that it must be a superposition of the invertible defects $\CL_{t_it_j}$ generating the $\ZZ_2^4$ subgroup of $\ZZ_2^8:M_{20}$ that is contained in the centre of $SU(2)^6\times SU(2)^6$. Furthermore, $\CL\CL^\dual$ must annihilate the representations $[a_1,\ldots,b_6]$ of the chiral algebra that do not satisfy \eqref{nonzerocond}, and must be a sum with positive coefficients of the projectors $P_i$. The only possibility is 
\be \CL\CL^\dual=\CI+\CL_{t_2t_5}+\CL_{t_2t_6}+\CL_{t_5t_6}\ ,
\ee so that
$$ |\alpha_i|=\langle \CL \rangle =\langle \CL^\dual \rangle=2\ .
$$

\noindent
The torus partition function with the insertion of $\CL$ is
\be Z_{\CL}(\tau)=\Tr(\hat\CL q^{L_0-\frac{c}{24}}\bar q^{\bar L_0-\frac{\bar c}{24}})=\sum_{i=1}^4 \alpha_i \Tr(P_{\rho_L,\rho_R}^i q^{L_0-\frac{c}{24}}\bar q^{\bar L_0-\frac{\bar c}{24}})\ ,
\ee
where 
\begin{align*}\Tr(P_{\rho_L,\rho_R}^i q^{L_0-\frac{c}{24}}\bar q^{\bar L_0-\frac{\bar c}{24}})=\sum_{[a_1,\ldots,b_6]\in\Omega_i} \Bigl[&\ch_{a_1}(\tau,0)\ch_{a_2}(3\tau,0)\ch_{a_3}(\tau,\frac{1}{6})\ch_{a_4}(\tau,\frac{1}{6})\\
&\overline{\bigl(\ch_{b_1}(\tau,0)\ch_{b_2}(\tau,0)^3\ch_{b_3}(\tau,0)\ch_{b_4}(\tau,0)\bigr)}\Bigr]
\end{align*} 
Let us calculate $Z_\CL$ with the ansatz
\be \alpha_1=\alpha_2=\alpha_3=\alpha_4= \langle \CL \rangle \label{ansatz_alpha} \ee
Using the $su(2)$ characters, $Z_{\mathcal{L}} (\tau)$ takes the form:
{\footnotesize{
\begin{equation*}
\begin{split}
      Z_{\mathcal{L}} (\tau) &  =  \frac{\langle \CL \rangle }{\overline{ \eta (\tau)^6} \eta (\tau)^3 \eta(3 \tau)} \left\lbrace \left( \overline{\theta_3 (2 \tau)^6} + \overline{\theta_2 (2 \tau)^6} \right)\left[ \theta_3 (2 \tau) \theta_3(6 \tau) \theta_3 \left( 2 \tau, \frac{1}{3} \right)^2 + \theta_2 (2 \tau) \theta_2(6 \tau) \theta_2 \left( 2 \tau, \frac{1}{3} \right)^2  \right] +\right. \\
    & + \left. \left( \overline{\theta_3 (2 \tau)^2} \overline{\theta_2 (2 \tau)^4}+ \overline{\theta_2 (2 \tau)^2} \overline{\theta_3 (2 \tau)^4} \right) \left[ 2 \left( \theta_3 \left( 2 \tau, \frac{1}{3} \right) \theta_2 \left( 2 \tau, \frac{1}{3} \right) \right) \left( \theta_3 (2 \tau) \theta_2 (6 \tau)  + \theta_2 (2 \tau) \theta_3 (6 \tau)  \right) +   \right. \right. \\
    & \left. \left.   +  \left( \theta_3 (2 \tau) \theta_3 (6 \tau) \theta_2 \left( 2 \tau, \frac{1}{3} \right)^2 + \theta_2 (2 \tau) \theta_2 (6 \tau) \theta_3 \left( 2 \tau, \frac{1}{3} \right)^2 \right) \right] \right\rbrace. \\
\end{split}
\end{equation*}
}}
Applying the transformation $\tau \mapsto - 1 / \tau$ we get:
{\footnotesize{
\begin{equation*}
\begin{split}
     & Z^{\mathcal{L}} (\tau)  =  \frac{ \langle \CL \rangle q^{1/18}}{32 \overline{\eta (\tau)^6} \eta(\tau)^3 \eta (\tau / 3)} \left\lbrace \left[ \left( \theta_3 \left( \frac{\tau}{2} \right) \theta_3 \left( \frac{\tau}{6} \right) \theta_4 \left( \frac{\tau}{2} ; \frac{\tau}{6} \right)^2 +  \theta_4 \left( \frac{\tau}{2} \right) \theta_4 \left( \frac{\tau}{6} \right) \theta_3 \left( \frac{\tau}{2} ; \frac{\tau}{6} \right)^2\right) \right. + \right. \\
        & \left. \left. + 2 \left( \theta_3 \left( \frac{\tau}{2} ; \frac{\tau}{6} \right) \theta_4 \left( \frac{\tau}{2} ; \frac{\tau}{6} \right) \right)\left( \theta_3 \left( \frac{\tau}{2} \right) \theta_4 \left( \frac{\tau}{6} \right) + \theta_4 \left( \frac{\tau}{2} \right) \theta_3 \left( \frac{\tau}{6} \right)\right) \right] \left( \overline{\theta_3 \left( \frac{\tau}{2}\right)^4} \overline{\theta_4 \left( \frac{\tau}{2}\right)^2} + \overline{\theta_4 \left( \frac{\tau}{2}\right)^4} \overline{\theta_3 \left( \frac{\tau}{2}\right)^2} \right) + \right. \\
     & \left. +  \left[ \theta_3 \left( \frac{\tau}{2} \right) \theta_3 \left( \frac{\tau}{6} \right) \theta_3 \left( \frac{\tau}{2} ; \frac{\tau}{6} \right)^2 + \theta_4 \left( \frac{\tau}{2} \right) \theta_4 \left( \frac{\tau}{6} \right) \theta_4 \left( \frac{\tau}{2} ; \frac{\tau}{6} \right)^2 \right] \left( \overline{\theta_3 \left(\frac{\tau}{2} \right)^6} + \overline{\theta_4 \left(\frac{\tau}{2} \right)^6} \right) \right\rbrace, \\
\end{split}
\end{equation*}
}}
which series expansion in $q$ and $\bar{q}$ is:
{\footnotesize{
\begin{equation*}
\begin{split}
    Z^{\mathcal{L}} = & \frac{\langle \CL \rangle }{ q^{1/4} \overline{q}^{1/4} } \left[ \left(\frac{ q^{1/6}}{2}+\frac{3 }{2}q^{1/2}+2  q^{2/3}+3  q^{5/6}+6  q+9  q^{7/6}+12  q^{4/3}+\frac{39}{2}  q^{3/2}+28  q^{5/3}+36  q^{11/6}+ \right. \right. \\
    & \left. +54  q^2+  \frac{149}{2}  q^{13/6} +96  q^{7/3}+\frac{273}{2}  q^{5/2}+182  q^{8/3}+234  q^{17/6}+O\left(q^{3}\right) \right) + \bar{q}^{1/2} \left(6  q^{1/6}+40  q^{1/3} +\right. \\
    & \left. +82  q^{1/2}+128  q^{2/3}+  196  q^{5/6}+320  q+524  q^{7/6}+776  q^{4/3}+1098  q^{3/2}+1616  q^{5/3}+2320  q^{11/6}+ \right. \\
    & \left. \left. + O\left(q^{2}\right)\right) + \bar{q} \left(33  q^{1/6} +160  q^{1/3} +355  q^{1/2}+ 548  q^{2/3} +838  q^{5/6}+O \left( q \right) \right) +O\left(\overline{q}^{3/2} \right) \right] . \\
\end{split}
\end{equation*}
}}
As predicted above, we find that such expansion produces only integer non-negative coefficients if $\langle \CL \rangle =2$. \\

Notice that the automorphism $(\rho_L,\rho_R)$ has order $3$, so that $\CL^2$ acts on the chiral algebra by $(\rho_L^2,\rho_R^2)=(\rho_L^{-1},\rho_R^{-1})$ (the same as $\CL^\dual$), while $\CL^3$ acts trivially on the chiral algebra. This suggests
\be \CL^2=2\CL^\dual\ ,
\ee and
\be \CL^3=2\CL\CL^\dual=2(\CI+\CL_{t_2t_5}+\CL_{t_2t_6}+\CL_{t_5t_6})\ ,
\ee which fits with the quantum dimensions.

We notice that if $\CL_{p_1}$ is the invertible defect corresponding to the symmetry $$ p_1=(\one\,\one\,\one\,\one\,\one\,\one\, ;\  \one\,\one\,\one\,\one\,\one\,\one)(34)(56)$$ of order $2$, then $\CL_{p_1}\CL\CL_{p_1}$ acts by $(\rho_L^{-1},\rho_R^{-1})$ on the chiral algebra. This suggests that
\be \CL_{p_1}\CL\CL_{p_1}=\CL^\dual\ .
\ee As a consequence, if we define
\be \CN_{256}:=\CL\CL_{p_1}\ ,
\ee then $\CN_{256}$ is unoriented $\CN_{256}=\CN_{256}^\dual$, and
\be \CN_{256}^2= \CI+\CL_{t_2t_5}+\CL_{t_2t_6}+\CL_{t_5t_6}\ .
\ee This means that $\CN_{256}$ is the duality defect related to the fact that the theory is self-orbifold with respect to the $\ZZ_2\times \ZZ_2$ group of symmetries with generators $t_2t_5$ and $t_2t_6$. Similarly, by conjugating by invertible defects, for every $1<i<j<k\le 6$ one can find duality defects $\CN_{ijk}$ of order $2$ such that
\be \CN_{ijk}^2= \CI+\CL_{t_it_j}+\CL_{t_jt_k}+\CL_{t_it_k}\ .
\ee 

\bigskip

\noindent{\bf Partition $1+1+2+2$.} Let us now consider the class of defects acting by
\be \rho_L=(\one\,\one\,\one\,\one\,\one\,\one)(34)(56)\ ,\qquad \rho_R=1\ ,
\ee on the chiral and anti-chiral algebra.  We denote such defects by $\CL^{(34)(56)}_a$, $a=1,2,\ldots$. Because $\rho_L$ act in the same way as $p_1\in G_{GTVW}$ on the chiral algebra, $\hat{\CL}^{(34)(56)}_a$ must preserve the $\CN=(4,4)$ SCA. The only representations where $\hat \CL^{(34)(56)}_a$ can be non-zero are the ones satisfying
\be\label{nonzerocond2} a_3=a_4\ ,\qquad a_5=a_6\ ,\qquad b_3=b_4\ ,\qquad b_5=b_6\ .
\ee Once again, we arrange such representations (in the NS-NS sector) in sets
\begin{align*}
    \Omega_1&=\{[000000;000000],\ [111111;000000],\ [000000;111111],\ [111111;111111] \}\ ,\\
    \Omega_2&=\{[110000;110000],\ [001111;110000],\ [110000;001111],\ [001111;001111] \}\ ,\\
    \Omega_3&=\{[001100;001100],\ [110011;001100],\ [001100;110011],\ [110011;110011] \}\ ,\\
    \Omega_4&=\{[000011;000011],\ [111100;000011],\ [000011;111100],\ [111100;111100] \}\ ,
\end{align*} so that
$$ \hat\CL^{(34)(56)}_a=\sum_{i=1}^4 \alpha_{i}P^i_{(34)(56)}\ .
$$ Because there are $4$ parameters $\alpha_i$, we expect (at most) $4$ simple defects of this kind. Let us denote by $\CL:=\CL^{(34)(56)}_1$ one of these defects. Then, by fusion of $\CL^{(34)(56)}$ with the invertible defects $\CL_{t_2t_3}$, $\CL_{t_2t_5}$, $\CL_{t_3t_5}$, we find three more simple defects $\CL^{(34)(56)}_2$, $\CL^{(34)(56)}_3$, $\CL^{(34)(56)}_4$ whose parameters  $\alpha_i$ differ only by signs. The product $\CL\CL^\dual=\CL^\dual\CL=\CI+\ldots$ acts trivially on the whole chiral and antichiral algebra, as well as on the $\CN=(4,4)$ superconformal algebra, and annihilate the representations that do not satisfy \eqref{nonzerocond2}. This leads to
\be \CL\CL^\dual=\CI+\CL_{t_3t_4}+\CL_{t_5t_6}+\CL_{t_3t_4t_5t_6}\ .
\ee It follows that
\be |\alpha_1|=|\alpha_2|=|\alpha_3|=|\alpha_4|=|\CL|=2\ .
\ee

\noindent
The torus partition function is 
$$Z_{\CL}(\tau)=\Tr(\hat\CL q^{L_0-\frac{c}{24}}\bar q^{\bar L_0-\frac{\bar c}{24}})=\sum_{i=1}^4 \alpha_i \Tr(P_{\rho_L,\rho_R}^i q^{L_0-\frac{c}{24}}\bar q^{\bar L_0-\frac{\bar c}{24}})$$
where 
\begin{align*}\Tr(P_{(34)(56)}^i q^{L_0-\frac{c}{24}}\bar q^{\bar L_0-\frac{\bar c}{24}})=\sum_{[a_1,\ldots,b_6]\in\Omega_i} \Bigl[&\ch_{a_1}(\tau,0)\ch_{a_2}(\tau,0)\ch_{a_3}(2\tau,0)\ch_{a_5}(2\tau,0)\\
&\overline{\bigl(\ch_{b_1}(\tau,0)\ch_{b_2}(\tau,0)\ch_{b_3}(\tau,0)^2\ch_{b_5}(\tau,0)^2\bigr)}\Bigr]
\end{align*} 
Using the ansatz $\alpha_i = \langle \mathcal{L}^{(34)(56)} \rangle$, the partition function with a $\CL^{(34)(56)}$ insertion can be written in terms of the Theta functions as:
{\footnotesize{
\begin{equation*}
\begin{split}
     Z_{\mathcal{L}^{(34)(56)}} (\tau) & =  \frac{\langle \CL^{(34)(56)} \rangle}{\overline{\eta (\tau)^6} \eta(\tau)^2 \eta (2 \tau)^2} \left\lbrace  \left( \overline{\theta_3 (2 \tau)^6} + \overline{\theta_2 (2 \tau)^6}\right) \left( \theta_3 (2 \tau)^2 \theta_3 (4 \tau)^2 + \theta_2 (2 \tau)^2 \theta_2 (4 \tau)^2 \right) +  \right. \\
    & \left. +  \left( \overline{\theta_2 (2 \tau)^2} \overline{\theta_3 (2 \tau)^4} + \overline{\theta_3(2 \tau)^2} \overline{\theta_2 (2 \tau)^4} \right) \left( \theta_2 (2 \tau)^2 \theta_3 (4 \tau)^2 + \theta_3 (2 \tau)^2 \theta_2 (4 \tau)^2 \right) + \right. \\
    & \left. + 2 \left( \overline{\theta_3 (2 \tau)^4} \overline{\theta_2 (2 \tau)^2} + \overline{\theta_3 (2 \tau)^2} \overline{\theta_2 (2 \tau)^4}\right) \left( \theta_3 (2 \tau)^2 \theta_3 (4 \tau) \theta_2 (4 \tau) + \theta_2 (2 \tau)^2 \theta_3 (4 \tau) \theta_2 (4 \tau) \right) \right\rbrace.
\end{split}
\end{equation*}
}}
Upon an $S$-transformation, we obtain the twisted function $Z^{\mathcal{L}^{(34)(56)}} (\tau) = Z_{\mathcal{L}^{(34)(56)}} \left( - 1 / \tau \right)$ in the form:
{\footnotesize{
\begin{equation*}
    \begin{split}
    Z^{\mathcal{L}^{(34)(56)}} (\tau) & =  \frac{\langle \CL^{(34)(56)} \rangle }{32 \overline{\eta (\tau)^6}\eta (\tau)^2 \eta (\tau/2)} \left\lbrace  \left(  \overline{\theta_3 \left( \frac{\tau}{2} \right)}^6  +  \overline{\theta_4 \left( \frac{\tau}{2}
\right)}^6 \right) \left( \theta_3 \left( \frac{\tau}{2} \right)^2  \theta_3 \left( \frac{\tau}{4} \right)^2 +  \theta_4 \left( \frac{\tau}{2} \right)^2  \theta_4 \left( \frac{\tau}{4} \right)^2 \right)  \right. \\
& \left. + \left( \overline{\theta_4 \left( \frac{\tau}{2} \right)}^2 \overline{\theta_3 \left( \frac{\tau}{2} \right)}^4 + \overline{\theta_3 \left( \frac{\tau}{2} \right)}^2 \overline{\theta_4 \left( \frac{\tau}{2} \right)}^4  \right) \left[ \left( \theta_4 \left( \frac{\tau}{2} \right)^2 \theta_3 \left( \frac{\tau}{4}\right)^2 + \theta_3 \left( \frac{\tau}{2} \right)^2 \theta_4 \left( \frac{\tau}{4}\right)^2  \right) + \right. \right. \\
& \left. \left. + 2 \left( \theta_3 \left( \frac{\tau}{4} \right) \theta_4 \left( \frac{\tau}{4} \right) \right) \left( \theta_3 \left( \frac{\tau}{2} \right)^2 + \theta_4 \left( \frac{\tau}{2} \right)^2 \right) \right] \right\rbrace, \\
    \end{split}
\end{equation*}
}}
whose expansion in $q$ and $\bar{q}$ and is:
{\footnotesize{
\begin{equation*}
    \begin{split}
   Z^{\mathcal{L}^{(34)(56)}} (\tau) & =  \frac{\langle\mathcal{L}^{(34)(56)} \rangle}{q^{1/8} \overline{q}^{1/4}} \left[  \left(\frac{1}{2}+5 q^{1/2}+\frac{47 q}{2}+75 q^{3/2}+\frac{403 q^2}{2}+501 q^{5/2}+1158 q^3+2502 q^{7/2}+\frac{10309}{2} q^4+ \right.\right. \\
   & \left. +10228 q^{9/2}+ O\left(q^{5}\right)\right)+\overline{q}^{1/2} \left(6+32 q^{1/4}+128 q^{3/8}+60 q^{1/2} +192 q^{3/4}+512 q^{7/8}+282 q+ \right. \\
   & \left. + 672 q^{5/4}+1792 q^{11/8}+900 q^{3/2}+1984 q^{7/4}+5120 q^{15/8}+ O \left(q^2 \right) \right) + \overline{q} \left(33+128 q^{1/4}+512 q^{3/8}+ \right.  \\
   & \left. \left. +330 q^{1/2}+768 q^{3/4}+2048 q^{7/8}+1551 q+2688 q^{5/4}+7168 q^{11/8}+O \left( q^{3/2} \right) \right) + O \left( \bar{q}^{3/2} \right) \right] \\
    \end{split}
\end{equation*}
}}
The minimal integer value required for the quantum dimension of $\mathcal{L}^{(34)(56)}$ in order to have just positive integer coefficients in this expansion is $\langle\mathcal{L}^{(34)(56)} \rangle=2$. The three additional simple defects, obtained by fusion with $\CL_{t_2 t_3}, \CL_{t_2 t_5}$ and $\CL_{t_3t_5}$, act on the representations $\Omega_i$ through the operators $ \pm 2 P^i_{(34)(56)}  $, where the minus sign appears for $i=2,3$, or $i=2,5$, or $i=3,5$, respectively. In all such cases, the $q$-expansion of the resulting $\CL$-twisted partition functions have positive integer coefficients, as expected. 

Because the coefficients $\alpha_i$ are real and $(\rho_L,\rho_R)$ has order two, we find that $\CL^\dual=\CL$ is unoriented, and
\be \CL^2=\CI+\CL_{t_3t_4}+\CL_{t_5t_6}+\CL_{t_3t_4t_5t_6}\ .
\ee This implies that $\CL^{(34)(56)}$ coincides with the duality defect $\CN_{34,56}\equiv \CL^{(34)(56)}$ related to the fact that the theory is self-orbifold with respect to the $\ZZ_2\times \ZZ_2$ subgroup of $\ZZ_2^8:M_{20}$ generated by $t_3t_4$ and $t_5t_6$.    Similarly, for every choice of pairwise distinct $i,j,k,l\in \{2,\ldots,6\}$, we have duality defects $\CN_{ij,kl}\equiv \CN_{kl,ij}$ of dimension $2$ such that
\be \CN_{ij,kl}^2=\CI+\CL_{t_it_j}+\CL_{t_kt_l}+\CL_{t_it_jt_kt_l}\ .
\ee Notice that, in a suitable description of the model $\calC_{GTVW}$ as a torus orbifold $T^4/\ZZ_2$, the symmetries $t_it_j$, $t_kt_l$ and $t_it_jt_kt_l$ can be identified, respectively, with $\eta_\frac{\lambda}{2}$, $Q\eta_\frac{\lambda}{2}$, and $Q$, where $\eta_\frac{\lambda}{2}$ is induced by the half-periods along one of the directions of $T^4$, and $Q$ is the quantum symmetry of the orbifold, see sections \ref{s:contDefects}. This suggests that $\CN_{ij,kl}$ can be identified with the topological defect $T_{\frac{\lambda}{4}}$ in eq.\eqref{lambdafour}.

\bigskip

\noindent{\bf Partition $1+5$.} Finally, let us consider the class of defects that act by the automorphism
\be \rho_L=(\one\,\one\,\Omega\,\Omega^{\dag}\,\one\,\one)(23645)\ ,\qquad \rho_R=1
\ee  of order $5$, preserving the $\CN=(4,4)$ supercurrents (in our conventions, the permutation acts after the $SU(2)^6$ transformation). The only possible for a simple defect $\CL^{(23645)}$ in this class is as follows (we set $\CL\equiv \CL^{(23645)}$ in this part)
\be \hat\CL=\langle \CL\rangle P_{(23645)}\ ,
\ee where $P_{(23645)}$ acts by the automorphism $(\rho_L,\rho_R)$ on the four NS-NS representations
$$[000000;000000],\ [111111;000000],\ [000000;111111],\ [111111;111111] $$
that satisfy
\be\label{nonzerocond5} a_2=a_3=a_4=a_5=a_6\ ,\qquad b_2=b_3=b_4=b_5=b_6
\ee
and annihilates any other representation in the NS-NS sector. The characters contributing to torus NS-NS partition function are: 
\be Z_{\CL^{(23645)}}(\tau)=\langle \CL\rangle\sum_{a,b=0}^1 \ch_{a}(\tau,0)\ch_{a}(5\tau,0)\overline{\ch_{b}(\tau,0)}^6\ ,
\ee
which expressions in terms of the $su(2)$ characters is:
{\footnotesize{
\begin{equation*}
 \begin{split}
     Z_{\CL^{(23645)}}(\tau)= & \langle \CL\rangle \frac{1}{\overline{\eta (\tau)^6}} \frac{1}{\eta (\tau) \eta (5 \tau)} \left( \overline{\theta_3 (2 \tau)}^6 + \overline{\theta_2 (2 \tau)}^6\right) \left( \theta_3 (2 \tau) \theta_3 (10 \tau) + \theta_2 (2 \tau) \theta_2 (10 \tau) \right)
 \end{split}   
\end{equation*}
}}
The twined partition function, obtained from the previous one  by modular transformation, takes the form:
{\footnotesize{
\begin{equation*}
    Z^{\CL^{(23645)}}(\tau)= \frac{\langle \CL\rangle}{16 \overline{\eta (\tau)}^6 \eta (\tau) \eta (\tau /5)} \left( \overline{\theta_3 \left( \frac{\tau}{2} \right)}^6 + \overline{\theta_4 \left( \frac{\tau}{2} \right)}^6 \right) \left( \theta_3 \left( \frac{\tau}{2} \right) \theta_3 \left(  \frac{\tau}{10}\right) + \theta_4 \left( \frac{\tau}{2} \right) \theta_4 \left(  \frac{\tau}{10}\right) \right),
\end{equation*}
}}
with corresponding series expansion given by:
{\footnotesize{
\begin{equation*}
    \begin{split}
        & Z^{\CL^{(23645)}} (\tau) =  \frac{\langle \CL\rangle}{q^{1/20} \overline{q}^{1/4}}  \left[ \left(\frac{1}{4}+\frac{3 }{4} q^{1/5}+q^{3/10}+q^{2/5}+q^{1/2}+\frac{7 }{4} q^{3/5}+3 q^{7/10}+\frac{13 }{4} q^{4/5}+4 q^{9/10}+\frac{11 }{2}q+ \right. \right. \\
        & \left. \left.+7 q^{11/10} + \frac{19 }{2} q^{6/5}+11 q^{13/10}+\frac{55 q^{7/5}}{4} q^{7/5}+18 q^{3/2}+\frac{83 }{4} q^{8/5}+26 q^{17/10}+\frac{129 }{4} q^{9/5}+39 q^{19/10}+O \left( q^2 \right) \right) + \right. \\
        &  \left. +\overline{q}^{1/2} \left(15+45 q^{1/5}+60 q^{3/10}+60 q^{2/5}+60 q^{1/2}+105 q^{3/5}+180 q^{7/10}+195 q^{4/5}+240 q^{9/10}+O \left( q \right) \right) + \right. \\
        & \left. + \overline{q} \left(\frac{129}{2}+\frac{387 }{2} q^{1/5}+258 q^{3/10}+258 q^{2/5}+ 258 q^{1/2}+\frac{903 }{2} q^{3/5}+774 q^{7/10}+\frac{1677 }{2} q^{4/5}+O \left( q^{9/10} \right) \right) + O(\overline{q}^{3/2}) \right] \\
        \end{split}
        \end{equation*}
}}
From such expansion we can extract the condition for the minimal value of the quantum dimension at $\langle \CL\rangle =4$.

The product $\CL\CL^\dual$ must act trivially on the whole chiral and antichiral algebra and annihilate any representation that does not satisfy \eqref{nonzerocond5}. Because the defect with smallest quantum dimension satisfying these properties has dimension $16$, this implies $\langle \CL\rangle=4$ and
\be \CL\CL^\dual=\CI +\sum_{2\le i<j\le 6} \CL_{t_it_j}+\sum_{i=2}^6 \CL_{t_it_2t_3t_4t_5t_6}\ .
\ee

The defects $\CL^{(26534)}$, $\CL^{(24356)}$, $\CL^{(25463)}\equiv \CL^\dual$ can be obtained by conjugation $\CL_g \CL^{(23645)}\CL_g^\dual$ with suitable invertible defects $\CL_g$, with $g\in \ZZ_2^8:M_{20}$, so that all such defects have dimension $4$, and are the unique simple defects acting with the given automorphism on the chiral algebra.

Let $p_2\in \ZZ_2^8:M_{20}$ be the order $2$ symmetry acting by $p_2=(\one\,\one\,\one\,\one\,\one\,\one\, ;\  \one\,\one\,\one\,\one\,\one\,\one)(35)(46)$, so that $\CL^\dual=\CL_{p_2} \CL\CL_{p_2}$. Then $(\CL_{p_2}\CL)^\dual=\CL_{p_2}\CL$ is unoriented, has dimension $4$, and
\be (\CL_{p_2}\CL)^2=\CI +\sum_{2\le i<j\le 6} \CL_{t_it_j}+\sum_{i=2}^6 \CL_{t_it_2t_3t_4t_5t_6}\ .
\ee This means that $\CN_{23456}:=\CL_{p_2}\CL$ is the duality defect related to the fact that the theory is self-orbifold under the $\ZZ_2^4$ group of symmetries generated by $t_it_j$, $2\le i<j\le 6$.

\section{The $\CN=2$ $c=1$ superconformal algebra as a free boson.}\label{a:Neq2freeboson}

The $\CN=2$ superconformal algebra at central charge $c_k=\frac{3k}{k+2}$, $k\in \NN$, can be described in terms of a coset $\frac{\widehat{su}(2)_k\oplus \widehat{u}(1)_{4}}{\widehat{u}(1)_{2k+4}}$, which provides the bosonic subalgebra of $\CN=2$. In the particular case of $k=1$, the bosonic subalgebra of the $\CN=2$ algebra, simplifies and becomes essentially the $\hat{u}(1)_{12}$ algebra \cite{Gray:2008je,Kiritsis:1987np,Waterson:1986ru}.
Here, $\hat{u}(1)_{2l}$ denotes\footnote{Our normalization differs by a factor $2$ from the conventions in \cite{DiFrancesco}: the algebra $\hat{u}(1)_{2l}$ in this paper is denoted as $\hat{u}(1)_{l}$ in \cite{DiFrancesco}. } the $c=1$ chiral algebra generated by a single chiral free boson $i\partial X_L(z)=\sum_n \alpha_nz^{-n-1}$ (whose modes generate the Heisenberg algebra), together with the holomorphic vertex operators $V_{n\sqrt{2l}}(z)\sim :e^{in\sqrt{2l}X_L(z)}:$, $n\in \ZZ$, of $\alpha_0$-eigenvalue $n\sqrt{2l}$ and conformal weight $n^2l$. In other words, this is the lattice VOA associated with the lattice $\sqrt{2l}\ZZ$. The irreducible representations $M^l_{[x]}$ of $\hat{u}(1)_{2l}$ are labeled by $[x]\in \ZZ/2l\ZZ$, and the characters are given by
\be K^{2l}_{[x]}(\tau,z)=\frac{\sum_{Q\in \frac{x}{2l}+\ZZ} q^{lQ^2}y^Q}{\eta(\tau)}
\ , \qquad y=e^{2\pi i z}\ ,
\ee where $Q$ is the $\alpha_0$-eigenvalue divided by $\sqrt{2l}$ (this strange normalization is justified below). In particular, the ground state of $M^l_{[x]}$ has conformal weight $h= \frac{x^2}{4l}$  and charge $Q=\frac{x}{2l}$, where $x\in\ZZ$ is a representative of $[x]\in \ZZ/2l\ZZ$ in the range $-l+1\le x\le l$.

The $\CN=2$ superconformal algebra at $c=1$ is obtained by adjoining the bosonic algebra $\hat{u}(1)_{12}$ (i.e., $l=6$) with its module $M^6_{[6]}$, which contains the two supercurrents and the other fermionic fields. The charge $Q$ is the $U(1)$ R-charge with the standard normalization such that the supercurrents have charge $\pm 1/2$. In general, the $\CN=2$ modules are given by sums $M^6_{[x]}\oplus M^6_{[x+6]}$ of $\hat{u}(1)_{12}$ modules, and the representations are NS or Ramond depending on whether $[x]$ is even or odd, respectively. 
In other words, the $\CN=2$ algebra at $c=1$ can be identified with the lattice SVOA related with the \emph{odd} lattice $\sqrt{3}\ZZ$. The representations of this superalgebra are given by $R_{[x]}=M^6_{[x]}\oplus M^6_{[x+6]}$ where now $x\in \ZZ/6\ZZ$, and correspond to the lattice cosets  $\frac{x}{2\sqrt{3}}+\sqrt{3}\ZZ\subset \frac{1}{2\sqrt{3}}\ZZ$.

Gepner models are usually described in terms of the coset algebra $(\hat{su}(2)_1\oplus \hat{u}(1)_4)/\hat{u}(1)_{6}$, whose representations are labeled as $[l,m,s]\equiv [1-l,m+3,s+2]$, $l=0,1$, $m\in\ZZ/6\ZZ$, $s\in \ZZ/4\ZZ$ with $l+m+s\equiv 0\mod 2$. As mentioned above, the algebra $(\hat{su}(2)_1\oplus \hat{u}(1)_4)/\hat{u}(1)_{6}$ is isomorphic to $\hat{u}(1)_{12}$, and the respective representations can be identified as follows:
\begin{align*}
    &[0,m,0] \quad \equiv \quad M^6_{[m]}\qquad m\in\{0,\pm 2\}\ ,\\
    &[0,m,2] \quad \equiv \quad M^6_{[m+6]}\qquad m\in\{0,\pm 2\}\\
    &[0,m,1] \quad \equiv \quad M^6_{[m]}\qquad m\in\{3,\pm 1\}\ ,\\
    &[0,m,-1] \quad \equiv \quad M^6_{[m+6]}\qquad m\in\{3,\pm 1\}\ .
\end{align*}
In terms of representations of the $\CN=2$ superconformal algebra, one has the following identifications
\begin{align*}
    &[0,m,0]\oplus [0,m,2]\quad \equiv \quad R_{[m]}\qquad m\in\{0,\pm 2\}\ ,\\
&[0,m,1]\oplus [0,m,-1]\quad \equiv \quad R_{[m]}\qquad m\in\{3,\pm 1\}\ .
\end{align*}

\newpage
\printbibliography

\end{document}